\documentclass[final,5p,times,twocolumn]{elsarticle}

\usepackage[utf8]{inputenc}
\usepackage[T1]{fontenc}
\usepackage[english]{babel}
\usepackage{lmodern}
\usepackage{microtype}

\usepackage{amsmath,amssymb,amsfonts,amsthm,mathtools,bm}
\usepackage{multirow}
\usepackage{booktabs}
\usepackage{graphicx}
\usepackage{xcolor}
\usepackage{float}
\usepackage{enumitem}
\usepackage[ruled,vlined]{algorithm2e}
\usepackage{geometry}              
\usepackage{url}                   

\graphicspath{{./}}

\AtBeginDocument{\setlength{\mathindent}{0pt}}

\newtheorem{lemma}{Lemma}
\theoremstyle{definition}

\theoremstyle{remark}
\newtheorem{remark}{Remark}

\def\E{\mathbb{E}}
\def\P{\mathbb{P}}
\def\l{\langle}

\newcommand{\erf}{\mathrm{erf}}
\newcommand{\erfc}{\mathrm{erfc}}
\newcommand{\sign}{\mathrm{sign}}

\allowdisplaybreaks
\journal{Neural Networks}

\begin{document}

\begin{frontmatter}

\title{Exponential Capacity in Multilayer Hetero-Associative Neural Networks}

\author[mat,gnfm,nanotec]{Elena Agliari}
\author[gnfm,nanotec,sbai,infn]{Adriano Barra}
\author[gnfm,sbai]{Andrea Ladiana\corref{cor1}}
\ead{andrea.ladiana@uniroma1.it}\author[mat,gnfm]{Andrea Lepre}

\cortext[cor1]{Corresponding author.}

\affiliation[mat]{organization={Dipartimento di Matematica, Sapienza Universit\`a di Roma}, city={Rome}, country={Italy}}
\affiliation[gnfm]{organization={Istituto Nazionale d’Alta Matematica, GNFM}, city={Rome}, country={Italy}}
\affiliation[nanotec]{organization={CNR-Nanotec, Unità di Lecce}, city={Lecce}, country={Italy}}
\affiliation[sbai]{organization={Dipartimento di Scienze di Base e Applicate all’Ingegneria, Sapienza Universit\`a di Roma}, city={Rome}, country={Italy}}
\affiliation[infn]{organization={Istituto Nazionale di Fisica Nucleare, Sezione di Roma 1}, city={Rome}, country={Italy}}

\begin{abstract}
Exponential Hopfield networks store a number of patterns that grows
exponentially with the number of neurons, and in their classical formulation they are auto-associative: they
complete a corrupted copy of a memory into the memory itself. Many of the tasks
one wants such a network to perform are instead hetero-associative, mapping a
cue to a different target. We introduce and analyse an exponential neural
network of $L$ layers of $N$ binary neurons, each layer carrying its own
dataset, whose energy is an exponential of the product of the per-layer Mattis
overlaps, so that it is minimised precisely when every layer retrieves the
pattern of the same index; the stored association must be a surjective function
of the cue, and we show why nothing else can be stored at all. A
cavity/signal-to-noise analysis, made exact at
leading order by a large-deviation evaluation of the noise, shows that the
aligned hetero-associative state is a fixed point of the zero-temperature
dynamics up to a number of stored patterns $P_c\sim e^{N\rho_L}$, exponential in
the layer size, with an explicit rate $\rho_L$ that grows like $L\log 2$;
enlarging the basins of attraction lowers the rate but never destroys its
exponential character. Comparing the theory with structured data
we
find that the exponential capacity and the predicted basins survive correlated,
many-to-one patterns: the network is a near-perfect content-addressable memory.
The same closed forms describe, without refitting, a synthetic manifold, real
T-cell-receptor/epitope triples and natural-language intent data, so the
mechanism is domain-universal. Generalisation to unseen cues, though
significantly above chance, stays below memorisation, and it is the geometry of
the encoding, rather than the data domain, that sets how far above chance it
reaches. In this family, exponential storage and strong generalisation are
distinct capabilities.
\end{abstract}

\begin{keyword}
associative memory \sep Hopfield networks \sep exponential storage capacity \sep
hetero-association \sep statistical mechanics \sep large deviations
\end{keyword}

\end{frontmatter}


\section{Introduction}
\label{sec:intro}

An associative memory is judged by two numbers: how many patterns it can store,
and how much of a pattern it needs to see before it recalls the rest. For four
decades the first number set the pace. Hopfield's model~\cite{Hopfield1982}, the
harmonic oscillator of neural computation, was shown by Amit, Gutfreund and
Sompolinsky to hold a number of memories growing only linearly in the
number of neurons, $P\simeq 0.14\,N$~\cite{AGS1985a,AGS1985b}. Reading the model
as a pairwise spin glass made the ceiling look fundamental, and made the way
past it obvious: add interactions. Baldi--Venkatesh and
Gardner~\cite{Baldi1987,Gardner1987,Gardner1985} showed that $p$-body couplings
lift the capacity to $P\sim N^{\,p-1}$~\cite{Bao2022}, and the dense associative
memories of
Krotov and Hopfield~\cite{Krotov2016,Krotov2018,KrotovNature2023,KrotovHopfield2020} turned this
into a working principle for pattern recognition. Summing all the dense orders
at once is the natural endpoint of the program, and it delivers an
exponential capacity $P\sim e^{cN}$: for binary neurons in the model of
Demircigil et al.~\cite{Demircigil2017}, for continuous ones in the modern
Hopfield network of Ramsauer et al.~\cite{Ramsauer2021}, further developed into
energy-based transformer blocks~\cite{Hoover2024}, that underlies the
attention mechanism, and, analysed through the lens of glassy statistical
mechanics and Derrida's random-energy model, in the work of Lucibello and
M\'ezard~\cite{LucibelloMezard2024,Derrida1981,Dohmatob2023,Hillar2018}.

The binary case admits an especially transparent treatment. Albanese et al.~\cite{ALBANESE2026131223} write the energy as a
sum of exponentials of the loss, $\mathcal{H}=-N\sum_{\mu}e^{N(m_\mu-1)}$, and
show by a signal-to-noise argument, with no replicas and no mean-field
assumption, that each stored pattern is a fixed point of the zero-temperature
dynamics up to an exponentially large load, with basins of attraction that
shrink, but never close, as the load approaches capacity. That model is the
direct ancestor of the present one, and we lean on it throughout. It is worth
noting that this is not the only route to rigour available for this class of
models: the Guerra-interpolation programme has established, with comparable
rigour but different machinery (an interpolating free energy rather than a
cavity expansion), the free-energy and capacity of dense and Hebbian
associative networks~\cite{Agliari2020dense,Barra2018,Albanese2024hebbian}. We
use the cavity/large-deviation route throughout because it is the one that
extends most directly to the product-of-overlaps energy of
Section~\ref{sec:model}, without asserting it is the only one that would work.
The same multi-layer, hetero-associative construction principle --there called
\emph{multidirectional}-- has in fact been
analysed at (generalised) Hebbian order by replica methods across various architectures~\cite{centonze2024statistical, agliari2025generalized, Alessandrelli2025TAM}. 

\paragraph{The gap: association is not completion}
All of the above are auto-associative. The network is handed a corrupted copy of
a stored vector and returns that vector; input and output live in the same space,
and ``recall'' means ``clean-up''. Much of what one asks of an associative memory
is instead hetero-associative: cue and target are distinct objects, in general
in different spaces, and the task is to return the target given the cue. Pairing
the two chains of a receptor with the antigen they recognise, a sentence with its
intent, or one sensory modality with another are so many instances of this single
abstract map. An auto-associative network can only imitate it, by concatenating
cue and target into one long vector and hoping the dynamics does not tear the
halves apart. What is missing is a model in which the hetero-associative map is
the object the energy is built around, and in which the exponential capacity that
makes these networks attractive is provably retained.

Not every relation between cue and target can be stored, and it is worth saying
at once which can. The rule the network learns must be a \emph{function} of the
cue --a cue mapped to two targets makes the two votes cancel in the field, and
the target layer relaxes to a mixture of them-- and it must be
\emph{surjective} onto the target codebook, since a target no stored cue points
to is a memory with an empty basin. Injectivity, by contrast, is neither
required nor wanted: the many-to-one regime is the interesting one, and it is
the one real data supply. Remark~\ref{rem:surjective} makes this precise and
quantifies the failure mode; every dataset below is passed through a filter that
enforces it.

\paragraph{This paper}
We introduce an exponential neural network of $L$ layers of $N$ binary neurons.
Each layer stores its own dataset of $P$ patterns, and the energy is an
exponential of the \emph{product} of the per-layer Mattis overlaps, so that it is
minimised precisely when every layer simultaneously retrieves the pattern of the
same index: the hetero-associative ground state. The linear-exponent
auto-associative model~\cite{ALBANESE2026131223} is recovered at $L=1$; a
$\mathbb{Z}_2$-symmetric squared-overlap variant sits between them and is
analysed in an appendix.

Our contribution is twofold. On the theory side
(Sections~\ref{sec:model}--\ref{sec:basins}) we carry the cavity/signal-to-noise
program through for an arbitrary number of layers $L$. The novelty relative to the
auto-associative case is that the per-pattern noise no longer factorises over
sites: the product structure makes its second moment a genuine large-deviation
integral, which we evaluate exactly at leading order by a saddle point on the
symmetric ray of layer magnetisations. The output is a closed-form storage
capacity, exponential in the layer size $N$,
\begin{equation}
  P_c\;\sim\;e^{N\rho_L},
  \label{eq:intro-rate}
\end{equation}
whose rate $\rho_L$ is fixed, for every $L$, by a single scalar saddle-point
equation: the stationarity condition of a one-dimensional variational functional,
evaluated on the symmetric ray along which all layers align coherently with the
same pattern. The rate is monotonically increasing in $L$ and approaches
$L\log 2$, so binding one further modality multiplies the capacity by a factor
that is itself exponential in $N$. To this we add a matching analysis of the
basins under a corrupted cue, which exposes a
clean trade-off --wider networks store exponentially more but tolerate a
proportionally smaller corruption radius-- together with the gap between the
annealed (moment) estimate and the typical realisation that governs finite-size
dynamics.

On the empirical side (Sections~\ref{sec:hmm}--\ref{sec:clinc}) we ask the
question the theory cannot: what happens when the stored patterns are not
i.i.d.\ Rademacher vectors but real, structured, many-to-one data? We answer it
in three steps, moving from a generator we control to two domains that share
nothing but the surjective structure. First, on the Hidden Manifold
Model~\cite{Goldt2020,Gerace2020} --a controlled generator of patterns lying
near a low-dimensional manifold, with a surjective target-- we show that the
exponential capacity survives, degrading gracefully as the manifold shrinks, that
the basins behave as the i.i.d.\ theory predicts, and that generalisation to
unseen points of the manifold is real but weak. Second, on real
T-cell-receptor/epitope triples from VDJdb~\cite{Shugay2018,Bagaev2020}, encoded
by Atchley factors~\cite{Atchley2005} and a locality-sensitive
hash~\cite{Charikar2002}, we find the same phenomenon in sharp form: the network
is a near-perfect content-addressable memory --the two receptor chains recall
the epitope essentially without error, and its basins coincide with the i.i.d.\
prediction-- but its ability to route an unseen receptor to the right epitope,
though several times above a label-permutation null, remains below its
memorisation. Third, on natural-language intent
data --CLINC150~\cite{Larson2019}, a corpus with no biology, no geometry and no
alphabet in common with the previous two-- the same $L=2$ closed forms describe
capacity and basins without a single refitted constant, which is the sharpest
statement of universality we can make, while generalisation to unseen
utterances climbs to $0.58$ against $0.11$ on receptors.

Exponential storage and reliable generalisation thus emerge as distinct
capabilities of one network, and we make the separation quantitative. The
asymmetry should not be read as a defect. A memory of this family is built to
store: with $N$ binary neurons there are $2^{N}$ configurations and the network
occupies an exponential fraction of them with stored associations, so it is a
massive content-addressable repository rather than a low-complexity hypothesis
class, and no bound entitles one to expect strong extrapolation from an object
of that description. What is interesting is not that generalisation is bounded
but what sets the bound: it is the geometry of the encoding, not the data
domain. That data geometry, not merely data quantity, governs whether such
memories generalise is a theme of the random-features and hidden-manifold
Hopfield literature~\cite{Negri2023,Kalaj2024}, which our results extend to the
binary exponential model.

\paragraph{Outline and notation}
Section~\ref{sec:model} defines the model and fixes notation
(Table~\ref{tab:notation}); Section~\ref{sec:dynamics} derives the local field and
the update rule (Algorithm~\ref{algo:hetero}); Sections~\ref{sec:stability}
and~\ref{sec:storage} establish the exponential capacity; Section~\ref{sec:basins}
treats the basins. The numerical narrative occupies
Sections~\ref{sec:hmm}--\ref{sec:clinc}, and Section~\ref{sec:discussion} draws
the memory-versus-classifier lesson. All heavy computations are deferred to the
appendices.

\section{The model}
\label{sec:model}

The system consists of $L$ layers, each composed of $N$ binary neurons,
\begin{equation}
  \sigma_i^{a}\in\{-1,+1\},\qquad a=1,\dots,L,\quad i=1,\dots,N,
  \label{eq:neurons}
\end{equation}
We refer to $L$ as the \emph{width} of the network and to $N$ as the \emph{layer
size}. 
The $L$ layers are not
stacked in cascade, each transforming the output of the previous one, but
mutually coupled through a single symmetric energy and updated one neuron per
layer at a time (Section~\ref{sec:dynamics}), so a wider network here is one that
binds more modalities at once, not one that composes more transformations in
sequence. The neurons come
together with $L$ datasets, one per layer, each of $P$ independent Rademacher patterns,
\begin{equation}
  \xi_i^{\mu,a}\in\{-1,+1\},\qquad
  \P\!\bigl(\xi_i^{\mu,a}=\pm 1\bigr)=\tfrac12,
  \label{eq:patterns}
\end{equation}
for $\mu=1,\dots,P$, $a=1,\dots,L$ and $i=1,\dots,N$.
The patterns are mutually independent across $(\mu,a,i)$.\footnote{This
layer-wise independence is the structural choice that drives the whole analysis:
it suppresses inter-layer statistical mixing at the source. Were a single pattern
replicated across layers, the fluctuations of distinct layer fields at the same
site would become entangled, and cross-pattern terms would survive every average
taken below instead of vanishing by parity. It is an idealisation: real
hetero-associative data couple the layers through a shared latent cause, and
quantifying the price of violating it is one purpose of the experiments.} We
retain the inverse temperature
$\beta:=1/T$ (set to $\beta\to\infty$ throughout) as a control parameter, together
with the number of stored patterns $P$ per layer\footnote{The appropriate intensive
measure of storage, namely the classical ratio $P/N$ being here exponentially large is discussed in Section~\ref{sec:storage}.}. The Mattis magnetisations
(equivalently, Mattis overlaps: we use the two names interchangeably)
\begin{equation}
  m_\mu^a \;:=\; \frac{1}{N}\sum_{i=1}^{N}\xi_i^{\mu,a}\,\sigma_i^{a},
  \label{eq:mattis}
\end{equation}
(for each $\mu=1,\dots,P$ and $a=1,\dots,L$) as order parameters. Table~\ref{tab:notation} collects, once and for all, every
symbol used in the main text.

The cost function reads
\begin{equation}
  \mathcal{H}(\bm\sigma\,|\,\bm\xi)
  \;:=\;
  -N\sum_{\mu=1}^{P}
  \exp\!\biggl[\,N\sum_{a<b}\bigl(m_\mu^{a}\,m_\mu^{b}-1\bigr)\biggr].
  \label{eq:hamiltonian}
\end{equation}
The exponent vanishes on the perfect hetero-associative configuration
$m_\mu^{a}=1$ for every $a$, normalised through the $-1$ subtraction so that on
the recalled archetype $\sum_{a<b}(1\cdot 1-1)=0$.\footnote{A genuinely $L$-way
product $\exp[N\prod_a m_\mu^a-N]$ would be
the literal AND of the $L$ retrieval events, and is not the choice made here.
It would collapse to $0$ discontinuously the moment a single layer left
perfect recall, with no graceful degradation under corruption; the pairwise
sum, in contrast, stays $\mathcal{O}(1)$ as long as even one pair of layers
remains aligned (the mechanism behind the basins of Section~\ref{sec:basins}),
reduces to the $L=1$ model of~\cite{ALBANESE2026131223} without a separate
prescription, and is the only symmetric multilinear form admitting the
collective/orthogonal split of~\eqref{eq:bilinear-id} on which every
saddle-point argument below relies.} Two limits anchor
\eqref{eq:hamiltonian}. The first is $L=1$, and it is instructive that it is
\emph{not} a special case: with a single layer the double sum is empty, the
exponent vanishes identically and $\mathcal{H}\equiv-NP$ ceases to depend on the
configuration. Hetero-association is intrinsically a coupling between at least
two layers, and the auto-associative exponential networks are recovered not by
setting $L=1$ in~\eqref{eq:hamiltonian} but by replacing the missing partner
layer with the pattern itself, $m_\mu^{a}m_\mu^{b}\mapsto m_\mu$ or
$m_\mu^{a}m_\mu^{b}\mapsto m_\mu^{2}$. The first substitution returns the model
of Ref.~\cite{ALBANESE2026131223}, with energy $-N\sum_\mu e^{N(m_\mu-1)}$; the
second returns its $\mathbb{Z}_2$-symmetric sibling, whose saddle point turns
out to coincide with the $L=3$ specialisation of the analysis below. Both are
carried out in parallel with the multilayer computation in
\ref{app:squared_overlap}. The second anchor is the dense series: expanding the
exponent around perfect recall returns, order by order, multi-body couplings of
growing degree, so \eqref{eq:hamiltonian} is again a resummation of all dense
interactions, now with the interaction legs distributed across
layers.\footnote{The identification with the dense associative memories is
literal only in the single-layer case: expanding $e^{N(m_\mu-1)}$ in powers of
$m_\mu$ reproduces exactly the $p$-body couplings of Krotov and
Hopfield~\cite{Krotov2016}, one per order. For $L\ge2$ the $k$-th order of
$\exp[N\sum_{a<b}m_\mu^{a}m_\mu^{b}]$ is a sum of products of $2k$ Mattis
overlaps drawn from distinct layer pairs, hence a $2k$-body spin coupling whose
legs sit in different layers: the hetero-associative analogue of the dense
series rather than the series itself.}

\paragraph{What can be stored}
The energy~\eqref{eq:hamiltonian} treats all layers alike, but a retrieval task
does not: it designates some layers as cue and one as target. Fix the
convention used throughout, layers $1,\dots,L-1$ cue and layer $L$ target, and
read the $P$ stored indices as the graph of a relation
\begin{equation}
  \Psi \;=\; \bigl\{\bigl(\,\mathcal{C}^{\mu},\,\bm\xi^{\mu,L}\,\bigr)\bigr\}_{\mu=1}^{P},
  \qquad
  \mathcal{C}^{\mu} := \bigl(\bm\xi^{\mu,1},\dots,\bm\xi^{\mu,L-1}\bigr),
  \label{eq:rule}
\end{equation}
between cue tuples and targets, and let
$\mathcal{T}=\{\bm\xi^{\mu,L}\}_{\mu\le P}$ be the target codebook, of size
$K=|\mathcal{T}|$. Not every relation is storable: $\Psi$ must be a surjective
\emph{function} from the stored cues onto $\mathcal{T}$. The constraint is not a
modelling preference but a property of the local field, so we state and quantify
it in Remark~\ref{rem:surjective} of Section~\ref{sec:dynamics}, once the field
has been derived; it is the
reason every dataset in Sections~\ref{sec:hmm}--\ref{sec:clinc} is passed
through a {function filter} before the network sees it.

\paragraph{Collective and orthogonal modes}
The bilinear form in the exponent admits the orthogonal decomposition
\begin{equation}
  \sum_{a<b} m^{a}\,m^{b}
  \;=\;
  \tfrac12\Bigl[\,\bigl(\textstyle\sum_a m^{a}\bigr)^{2}
                 -\textstyle\sum_a (m^{a})^{2}\,\Bigr].
  \label{eq:bilinear-id}
\end{equation}
The first term, $\bigl(\sum_a m^{a}\bigr)^{2}$, is the squared collective
magnetisation: it isolates the symmetric mode in which all layers align
coherently with a common archetype, the natural order parameter of
hetero-associative recall. The second term, $\sum_a (m^{a})^{2}$, weights the
residual layer-by-layer dispersion around that collective alignment. The
Hamiltonian~\eqref{eq:hamiltonian} therefore rewards configurations in which
every layer simultaneously and uniformly retrieves the same pattern index, and
penalises any layer-discordant deviation. This separation between a single
collective mode and $L-1$ orthogonal fluctuation modes governs every
saddle-point analysis encountered below.

\begin{table*}[t!]
\centering
\renewcommand{\arraystretch}{1.25}
\begin{tabular}{c l}
\toprule
Symbol & Meaning \\
\midrule
$N$ & neurons per layer (layer size) \\
$L$ & number of layers (network width) \\
$P$ & number of stored patterns per layer \\
$\mathcal{C}^{\mu}=(\bm\xi^{\mu,1},\dots,\bm\xi^{\mu,L-1})$ & stored cue tuple of index $\mu$ \\
$\Psi:\ \mathcal{C}^{\mu}\mapsto\bm\xi^{\mu,L}$ & the stored association rule, a surjective function (Remark~\ref{rem:surjective}) \\
$\mathcal{T},\ K=|\mathcal{T}|$ & target codebook and its size; $P/K$ is the compression \\
$\sigma_i^{a}\in\{-1,+1\}$ & state of neuron $i$ in layer $a$ \\
$\xi_i^{\mu,a}\in\{-1,+1\}$ & bit $i$ of pattern $\mu$ in layer $a$ (Rademacher) \\
$m_\mu^{a}=\frac1N\sum_i\xi_i^{\mu,a}\sigma_i^{a}$ & Mattis overlap of layer $a$ with pattern $\mu$ \\
$\hat m_\mu^{a}=\frac1N\sum_{j\neq i}\xi_j^{\mu,a}\sigma_j^{a}$ & cavity (on-site-excluded) overlap \\
$F_\mu^{a}=\sum_{b\neq a}\hat m_\mu^{b}$ & cross-layer cavity field feeding layer $a$ \\
$\hat E_\mu = N\sum_{a<b}\hat m_\mu^{a}\hat m_\mu^{b}-N\binom{L}{2}$ & cavity energy of pattern $\mu$ \\
$\Phi_\mu^{(\backslash a)}=\exp(\sum_{c\neq a}\xi_i^{\mu,c}\sigma_i^{c}F_\mu^{c})$ & on-site off-layer factor \\
$h_i^{a}$ & local field on neuron $(i,a)$, Eq.~\eqref{eq:local-field} \\
$X_i^{a}=\xi_i^{1,a}h_i^{a}$ & stability variable of the recalled state \\
$\mu_1,\ \sigma^2$ & mean and variance (signal and noise) of $X_i^{a}$ \\
$\rho_L$ & noise/storage rate, Eq.~\eqref{eq:rho-main} \\
$\alpha=\tfrac1N\log P$ & exponential storage rate (intensive load); capacity at $\alpha=\rho_L$ \\
$\gamma$ & reduced load, $\gamma\to0$ retrieval / $\gamma\!\approx\!1$ transition, Eq.~\eqref{eq:gamma-main} \\
$K_L$ & polynomial prefactor of the per-pattern noise, Eq.~\eqref{eq:KL-main} \\
$m^\ast,\ x^\ast=2(L-1)m^\ast$ & symmetric saddle, Eq.~\eqref{eq:saddle-eq-main} \\
$r\in(0,1]$ & initial overlap of a corrupted cue ($d=(1-r)/2$ Hamming radius) \\
$\varepsilon_L(r)$ & storage exponent under corruption, Eq.~\eqref{eq:Pc-r} \\
$\alpha_D=D/N$ & manifold aspect ratio ($D$ latent dim., Section~\ref{sec:hmm}) \\
\bottomrule
\end{tabular}
\caption{Notation used throughout the main text. Cavity quantities are defined
at the site $(i,a)$ being updated; $\binom{L}{2}=L(L-1)/2$.}
\label{tab:notation}
\end{table*}

\section{Local field and dynamical update rule}
\label{sec:dynamics}

The dynamics is studied in the cavity formulation: we isolate a single neuron
$(i,a)$, express the energy as a term independent of it plus a term linear in it,
and read off the field that drives its update. Removing that one neuron from the
interaction network is what defines the \emph{cavity}, and it fixes the meaning
of the whole family of names used below: the cavity overlap $\hat m_\mu^{a}$ is
the Mattis overlap of the punctured system, the cavity field $F_\mu^{a}$ the
field the rest of the network exerts into the hole, the cavity energy
$\hat E_\mu$ and the cavity exponents of Section~\ref{sec:basins} the
corresponding energies and large-deviation rates evaluated there. The device
goes back to Onsager's reaction field~\cite{Onsager1936} and is standard in the
statistical mechanics of disordered
systems~\cite{MezardParisiVirasoro1986,Guerra1995cavity,Barra2006irreducible,MPV1987},
including for the Hopfield model
specifically~\cite{Pastur1999replica,Mezard2017cavity}; we recall the names here
once, since not all of them are equally common outside that literature.
Splitting the on-site contribution
from the magnetisation,
\begin{equation}
  m_\mu^{a} \;=\; \hat m_\mu^{a} + \frac{\xi_i^{\mu,a}\,\sigma_i^{a}}{N},
  \qquad
  \hat m_\mu^{a} \;:=\; \frac{1}{N}\sum_{j\neq i}\xi_j^{\mu,a}\,\sigma_j^{a},
  \label{eq:cavity-split}
\end{equation}
and inserting~\eqref{eq:cavity-split} in the bilinear form,
\begin{align}
  N\sum_{a<b}m_\mu^{a}m_\mu^{b}
  &\;=\;
  N\sum_{a<b}\Bigl(\hat m_\mu^{a}+\tfrac{\xi_i^{\mu,a}\sigma_i^{a}}{N}\Bigr)
            \Bigl(\hat m_\mu^{b}+\tfrac{\xi_i^{\mu,b}\sigma_i^{b}}{N}\Bigr) \notag\\
  &\;=\;
  N\sum_{a<b}\hat m_\mu^{a}\hat m_\mu^{b}
  +\sum_{a<b}\bigl[\xi_i^{\mu,a}\sigma_i^{a}\hat m_\mu^{b}+\xi_i^{\mu,b}\sigma_i^{b}\hat m_\mu^{a}\bigr]\notag\\
  &\qquad
  +\mathcal{O}(N^{-1}).
  \label{eq:bilinear-expansion}
\end{align}
The identity $\sum_{a<b}[X_a Y_b+X_b Y_a]=\sum_c X_c\sum_{d\neq c}Y_d$ with
$X_c=\xi_i^{\mu,c}\sigma_i^{c}$ and $Y_d=\hat m_\mu^{d}$ collapses the second sum
to $\sum_c\xi_i^{\mu,c}\sigma_i^{c}\,F_\mu^{c}$, where
$F_\mu^{c}:=\sum_{d\neq c}\hat m_\mu^{d}$ (the total cavity overlap that all
layers other than $c$ already have with pattern $\mu$) is the inter-layer
cavity field, defined together with the remaining cavity quantities
in~\eqref{eq:cavity-quantities} below. Substituting in the exponential
of~\eqref{eq:hamiltonian} and using $e^{x+\mathcal{O}(N^{-1})}=e^{x}(1+\mathcal{O}(N^{-1}))$,
\begin{equation}
  \mathcal{H}(\bm \sigma | \bm \xi)
  \;=\;
  -N\sum_{\mu=1}^{P}e^{\hat E_\mu}\,
  \exp\!\Bigl(\sum_{c=1}^{L}\xi_i^{\mu,c}\sigma_i^{c}\,F_\mu^{c}\Bigr)
  \bigl(1+\mathcal{O}(N^{-1})\bigr).
  \label{eq:H-cavity}
\end{equation}
Factoring out layer $a$ in the inner exponential and applying the binary identity
$e^{x\sigma}=\cosh x+\sigma\sinh x$ with $x=\xi_i^{\mu,a}F_\mu^{a}$ and
$\sigma=\sigma_i^{a}\in\{-1,+1\}$ produces the additive split
\begin{equation}
  \exp\!\Bigl(\xi_i^{\mu,a}\sigma_i^{a}F_\mu^{a}\Bigr)
  \;=\;
  \cosh\!\bigl(\xi_i^{\mu,a}F_\mu^{a}\bigr)
  +\sigma_i^{a}\sinh\!\bigl(\xi_i^{\mu,a}F_\mu^{a}\bigr).
\end{equation}
Collecting the $\sigma_i^{a}$-independent terms in $C_i^{a}$ and the linear ones
in $h_i^{a}$ yields
\begin{equation}
  \mathcal{H}(\bm \sigma | \bm \xi)
  \;=\;
  -N\bigl[\,C_i^{a}+\sigma_i^{a}\,h_i^{a}\,\bigr]
  \bigl(1+\mathcal{O}(N^{-1})\bigr),
  \label{eq:H-decomp}
\end{equation}
with the local field and shift, respectively,
\begin{align}
  h_i^{a}
  &\;:=\;
  \sum_{\mu=1}^{P} e^{\hat E_\mu}\,
  \sinh\!\bigl(\xi_i^{\mu,a}\,F_\mu^{a}\bigr)\,
  \Phi_\mu^{(\backslash a)},
  \label{eq:local-field}\\
  C_i^{a}
  &\;:=\;
  \sum_{\mu=1}^{P} e^{\hat E_\mu}\,
  \cosh\!\bigl(\xi_i^{\mu,a}\,F_\mu^{a}\bigr)\,
  \Phi_\mu^{(\backslash a)}.
  \label{eq:Ci}
\end{align}
The cavity quantities are
\begin{equation}
\begin{aligned}
  F_\mu^{a} &\;:=\;\sum_{b\neq a}\hat m_\mu^{b},
  \qquad
  \hat E_\mu \;:=\; N\sum_{a<b}\hat m_\mu^{a}\hat m_\mu^{b}-N\binom{L}{2},\\
  \Phi_\mu^{(\backslash a)} &\;:=\; \exp\!\Bigl(\sum_{c\neq a}\xi_i^{\mu,c}\sigma_i^{c}\,F_\mu^{c}\Bigr).
\end{aligned}
  \label{eq:cavity-quantities}
\end{equation}
Each of these has a plain reading. The \emph{inter-layer cavity field}
$F_\mu^{a}$ is the total overlap that all layers other than $a$ already
have with pattern $\mu$: it is the pressure the rest of the network exerts on
layer $a$ to also retrieve $\mu$, and it is what makes the model
hetero-associative: a neuron in one layer is driven by the state of the
others. The \emph{cavity energy} $\hat E_\mu$ measures how well pattern $\mu$ is
collectively retrieved across all layer pairs; its $-N\binom{L}{2}$ floor sends
$e^{\hat E_\mu}\to0$ for any pattern that is not being retrieved, so only the
winning pattern contributes to the field. The on-site factor
$\Phi_\mu^{(\backslash a)}$ is the same weight restricted to the off-layer bits
at the site under update. The remainder in~\eqref{eq:H-decomp} collects the
on-site square contributions
$\sum_{a<b}(\xi_i^{\mu,a}\sigma_i^{a})(\xi_i^{\mu,b}\sigma_i^{b})/N$ generated
by~\eqref{eq:cavity-split}; these are bounded in absolute value by
$\binom{L}{2}/N$, are invariant under $\xi_i^{a}\sigma_i^{a}\mapsto-\xi_i^{a}\sigma_i^{a}$
to leading order, and do not affect the single-flip energy difference at
$\mathcal{O}(1)$. The cavity decomposition is exact at this order.

The field~\eqref{eq:local-field} has a transparent reading. Each stored pattern
$\mu$ casts a vote on neuron $(i,a)$, weighted by $e^{\hat E_\mu}$ (how well
the rest of the network already agrees with pattern $\mu$ across all layers)
and directed by $\sinh(\xi_i^{\mu,a}F_\mu^{a})$, the alignment that the
other layers ask of layer $a$. The exponential weight is what makes the
network hetero-associative and high-capacity at once: a pattern that is being
collectively retrieved dominates the sum, while the $e^{-N\binom{L}{2}}$ floor of
$\hat E_\mu$ silences every pattern that is not.

The single-flip energy variation is $\Delta E_i^{a}=2N\,h_i^{a}\,\sigma_i^{a}$,
exact when all spins other than $(i,a)$ are held fixed, so the zero-temperature
Glauber update is steepest descent,
\begin{equation}
  \sigma_i^{a}(t+1) \;=\; \sign\!\bigl[\,h_i^{a}(t)\,\bigr],
  \label{eq:update}
\end{equation}
in words (Algorithm~\ref{algo:hetero}): every neuron looks at how strongly each
stored pattern is currently being retrieved network-wide, and flips to agree with
the winner.

The schedule with which~\eqref{eq:update} is applied deserves a word, because
``parallel'' is used here in a restricted sense. One elementary step selects a
single site index $i$ and updates the $L$ neurons $(i,1),\dots,(i,L)$ --one per
layer-- simultaneously, from the fields evaluated on the current state; the
site index is then advanced along a fixed or randomly shuffled permutation of
$\{1,\dots,N\}$, and a sweep is complete once all $N$ sites have been visited.
The update is thus \emph{parallel across layers and sequential within each
layer}: the $N$ spins of a given layer are never flipped at once. Keeping the
within-layer update sequential is what preserves the cavity
decomposition~\eqref{eq:H-decomp}, which holds every other spin of layer $a$
fixed while $(i,a)$ is updated; flipping a whole layer synchronously would
change the overlaps $\hat m_\mu^{a}$ by $\mathcal{O}(1)$ and invalidate the
fields on which the flips were decided.

The simultaneity across layers is a genuine, if mild, synchronicity, and it is
worth locating exactly. The cavity overlaps $\hat m_\mu^{b}$, hence the fields
$F_\mu^{a}$ and the weights $e^{\hat E_\mu}$, exclude site $i$ in {every}
layer and are therefore unaffected by the $L$ flips; the only dependence of
$h_i^{a}$ on the spins updated alongside it is through the on-site factor
$\Phi_\mu^{(\backslash a)}$. In the retrieval regime the sum
in~\eqref{eq:local-field} is dominated by the one pattern with
$e^{\hat E_\mu}=\mathcal{O}(1)$, and the $L$ simultaneous updates then all point
at that same stored index, so the coupling is benign; away from it the rule is
simply taken as the definition of the dynamics. In either case the stability
analysis of Sections~\ref{sec:stability}--\ref{sec:storage} is a single-site
statement, $X_i^{a}>0$, and is insensitive to the order in which sites are
visited.

\begin{algorithm*}[t!]
\caption{Zero-temperature update of the exponential hetero-associative network}
\label{algo:hetero}
\KwIn{layer datasets $\{\xi^{\mu,a}\}$; initial state $\bm\sigma^{(0)}\in\{-1,+1\}^{LN}$; number of sweeps $N_p$.}
\KwOut{fixed point $\bm\sigma^{\ast}$.}
\For{$t = 1$ \KwTo $N_p$}{
  \ForEach{site $i$ in a fixed or shuffled permutation of $1,\dots,N$ \emph{(sequentially)}}{
    \ForEach{pattern $\mu$ and layer $a$}{
      $\hat m_\mu^{a}\leftarrow \tfrac1N\sum_{j\neq i}\xi_j^{\mu,a}\sigma_j^{a}$ \tcp*{how well pattern $\mu$ is retrieved in layer $a$, site $i$ excluded}
    }
    \ForEach{pattern $\mu$}{
      $\hat E_\mu \leftarrow N\sum_{a<b}\hat m_\mu^{a}\hat m_\mu^{b}-N\binom{L}{2}$ \tcp*{collective retrieval weight of $\mu$}
    }
    \ForEach{layer $a = 1,\dots,L$ \emph{(simultaneously: one neuron per layer)}}{
      $h_i^{a}\leftarrow \sum_{\mu} e^{\hat E_\mu}\sinh\!\bigl(\xi_i^{\mu,a}F_\mu^{a}\bigr)\,\Phi_\mu^{(\backslash a)}$ \tcp*{Eq.~\eqref{eq:local-field}}
      $\sigma_i^{a}\leftarrow \sign\!\bigl(h_i^{a}\bigr)$\;
    }
  }
}
\end{algorithm*}

The field also settles the question left open in Section~\ref{sec:model}: which
association rules~\eqref{eq:rule} the energy can hold.

\begin{remark}[The stored rule must be a surjective function]
\label{rem:surjective}
For~\eqref{eq:hamiltonian} to operate as a hetero-associative memory, $\Psi$
must be a {function} of the cue and a {surjection} onto the target
codebook $\mathcal{T}$.

\emph{(i) Single-valuedness.} Suppose $q\ge2$ stored indices share a cue,
$\mathcal{C}^{\lambda}=\mathcal{C}$ for $\lambda\in S$ with $|S|=q$, but carry
distinct targets. Clamping the cue at $\mathcal{C}$ makes the cavity energy
$\hat E_\lambda$, the inter-layer field $F_\lambda^{L}$ and the on-site factor
$\Phi_\lambda^{(\backslash L)}$ of~\eqref{eq:cavity-quantities} identical for
every $\lambda\in S$, and exponentially suppressed for every other pattern.
Since $\sinh$ is odd, the target field~\eqref{eq:local-field} collapses to
\begin{equation}
  h_i^{L}\;=\;e^{\hat E}\,\sinh\!\bigl(F^{L}\bigr)\,\Phi^{(\backslash L)}
              \sum_{\lambda\in S}\xi_i^{\lambda,L}
  \;\bigl(1+o(1)\bigr),
  \label{eq:majority}
\end{equation}
so the update~\eqref{eq:update} returns the componentwise majority of the $q$
contradictory targets. For independent targets the overlap of that majority with
any one of them is exactly
$\binom{q-1}{(q-1)/2}2^{-(q-1)}\simeq\sqrt{2/(\pi q)}$ for odd $q$: unity at
$q=1$, one half at $q=3$, decaying to zero thereafter. A cue carrying two
targets is therefore not stored badly but not stored at all; what the network
returns is a mixture of them, and the failure is a property of the energy rather
than of the dynamics.

\emph{(ii) Surjectivity.} Conversely, the addressable alphabet is exactly the
image $\Psi(\{\mathcal{C}^{\mu}\})$. A code sitting in layer $L$ that no stored
cue points to never dominates the field, its weight $e^{\hat E}$ is
$e^{-N\binom{L}{2}}$-suppressed for every cue, so it carries an empty basin
while still contributing its share to the noise floor of
Section~\ref{sec:stability}: capacity spent on a memory that cannot be recalled.
Reading $\Psi$ as a surjection onto $\mathcal{T}$ is what excludes this.

Injectivity, by contrast, is neither required nor desirable. The regime of
interest is $P\gg K$, with $P/K$ the mean number of cues per target: it is the
regime real data supply (many receptors per epitope, many phrasings per intent),
and $P/K$ controls the depth of the target's basin
(Section~\ref{sec:hmm}). A bijective rule would make the reverse direction a
function as well, and the model would collapse to an auto-associative memory on
the concatenated vector: the situation the construction was meant to improve
on.
\end{remark}

\paragraph{Computational cost: time and memory}
Algorithm~\ref{algo:hetero} is exact, but not free, and the exponential
capacity of Section~\ref{sec:storage} carries an exponential price tag that is
worth making explicit. The first loop evaluates $PL$ inner products of length
$N$: $\Theta(NLP)$ operations. The second costs $\Theta(PL^{2})$. The third,
read literally, costs $\Theta(NL^{2}P)$, because $\Phi_\mu^{(\backslash a)}$
sums $L-1$ terms at every site; caching the full on-site sum
$T_i^{\mu}:=\sum_{c}\xi_i^{\mu,c}\sigma_i^{c}F_\mu^{c}$ once per $(i,\mu)$ and
subtracting the $a$-th term at evaluation time reduces this to $\Theta(NLP)$.
One parallel sweep of the whole network therefore costs $\Theta(NLP)$ time
whenever $N\gtrsim L$ (true throughout this paper), and $\Theta(NLP)$ bits is
also the memory floor, set by the $L$ stored datasets and dwarfing the
$\Theta(NL)$ state and the $\Theta(PL)$ cavity scalars
$\{\hat m_\mu^{a},\hat E_\mu,F_\mu^{a}\}$. Time and memory per sweep are thus
both, to leading order, one read of the stored data, and $n_{\mathrm{steps}}$
sweeps cost $\Theta(n_{\mathrm{steps}}NLP)$.

The cost is linear in $P$, but the capacity of Section~\ref{sec:storage} is
exponential in $N$: running the network anywhere near capacity costs
$\Theta(NL\,e^{N\rho_L})$ per sweep, at exactly the rate $\rho_L$ that sets the
storage. Already at $N=64$ and $L=2$ no machine could hold the stored data, let
alone sweep them. The capacity statement is about which configurations are fixed
points of~\eqref{eq:update}, not a promise that all of them can be enumerated;
this is why every experiment below uses $N$ of order ten and loads far below
$P_c$, where the degradation transition falls in an accessible window. The
arithmetic is made explicit in \ref{app:cost}.

\section{Stability of the recalled ground state}
\label{sec:stability}

We analyse the stability of the hetero-associative configuration
\begin{equation}
  (\bm\sigma^{1},\dots,\bm\sigma^{L}) \;=\; (\bm\xi^{1,1},\dots,\bm\xi^{1,L}),
  \label{eq:trial}
\end{equation}
in which all layers retrieve the same archetype index $\mu_0=1$, each from its
own dataset. By~\eqref{eq:update}, stability against a single-spin flip at site
$i$ in layer $a$ amounts to
\begin{equation}
  X_i^{a} \;:=\; \xi_i^{1,a}\,h_i^{a}\bigl|_{\bm\sigma^{b}=\bm\xi^{1,b}} \;>\; 0.
  \label{eq:stability}
\end{equation}
The strategy is the classic signal-to-noise decomposition of attractor neural
networks~\cite{AGS1985a,Amit1989,Coolen2005}, in the one-step, zero-temperature
form used for exponential models in
Refs.~\cite{Demircigil2017,ALBANESE2026131223}: the field is a sum of
$P$ pattern votes; the vote of the recalled pattern is a deterministic
\emph{signal}, the remaining $P-1$ are mean-zero \emph{noise}, and one asks when
the first dominates the fluctuations of the second. In the limit of
large $N$ and $P$\footnote{Throughout, $N\to\infty$ is taken first, at fixed
$\mu$-th pattern, to obtain the exact rate $\rho_L$ and prefactor $K_L$ below;
the load $P$ is then let grow, at fixed $N$, as $P=\gamma\,e^{N\rho_L}$ for a
fixed load fraction $\gamma$, so that the one-step overlap collapses onto the
universal profile $m_1^{(1)}=\erf(1/\sqrt{2\gamma})$ of~\eqref{eq:m1step} (used
again in \ref{app:hmm_protocol}). No double limit is taken simultaneously.} the Central Limit Theorem renders $X_i^{a}$ Gaussian,
$X_i^{a}\sim\mathcal{N}\!\bigl(\mu_1,\sqrt{\mu_2-\mu_1^{2}}\bigr)$, and stability
holds with overwhelming probability as long as the signal dominates the noise
standard deviation. (The Berry--Esseen control of this approximation, uniform in
$N$ because the noise terms are i.i.d.\ across the independent layer datasets, is
given in \ref{app:moments}.) Everything therefore rests on two moments.

\subsection{The signal}
\label{sec:mu1}

At the trial state~\eqref{eq:trial} the cavity magnetisation of the recalled
archetype is deterministic, $\hat m_1^{a}=(N-1)/N$ for every $a$, while for
$\mu\neq 1$ the layer-$a$ on-site factor $\xi_i^{\mu,a}$ is independent of every
quantity in the corresponding noise term and has zero mean. The mean of
$X_i^{a}$ reduces to the signal alone; the computation (\ref{app:mu1})
gives
\begin{equation}
    \mu_1 \;=\; \E\bigl[X_i^{a}\bigr]
    \;=\; e^{-(L-1)}\sinh(L-1) + \mathcal{O}(N^{-1}).
  \label{eq:mu1}
\end{equation}
An independent check via the discrete energy difference at the trial state,
$\Delta E_i^{a}=N\bigl(1-e^{-2(L-1)}\bigr)$, is given in
\ref{app:flip-check}. The signal is of order one and grows with the width:
$\mu_1=\tfrac12(1-e^{-2(L-1)})\to\tfrac12$ as
$L\to\infty$.

\subsection{The noise}
\label{sec:mu2}

The second moment splits as
$\mu_2=\sum_{\mu,\nu}\E[X_i^{(\mu|a)}X_i^{(\nu|a)}]=\mu_2^{(\mathrm{diag})}+\mu_2^{(\mathrm{off})}$.
Layer-wise dataset independence kills the off-diagonal part exactly
(\ref{app:offdiag}): a single uncancelled on-site Rademacher factor
carries zero mean. The diagonal part is a deterministic signal-square,
\begin{equation}
  \E\bigl[(X_i^{(1|a)})^{2}\bigr] \;=\; e^{-2(L-1)}\sinh^{2}(L-1) + \mathcal{O}(N^{-1}),
  \label{eq:Xself-sq}
\end{equation}
plus $P-1$ identical per-pattern noise contributions. Here the model departs from
its single-layer ancestor. Each noise term reduces, after averaging the on-site
factor, to the cavity expectation
\begin{equation}
  \E\bigl[(X_i^{(\mu|a)})^{2}\bigr]
  \;=\;
  \E\!\Bigl[\,e^{2\hat E_\mu}\sinh^{2}(F_\mu^{a})
            \prod_{c\neq a}\cosh(2F_\mu^{c})\,\Bigr],
  \quad \mu\neq 1,
  \label{eq:noise-sq-main}
\end{equation}
where the expectation runs over the cavity magnetisations
$\bm{\hat m}_\mu\in[-1,1]^{L}$. In the auto-associative model the exponent is
\emph{linear} in the Rademacher variables and this average factorises over sites
into a closed form; the product-over-layers structure of~\eqref{eq:noise-sq-main}
makes the exponent effectively \emph{quadratic}, the average no longer
factorises, and a genuine large-deviation evaluation is required.

By Cram\'er's theorem the empirical magnetisation of each layer obeys a large
deviation principle with the symmetric-Bernoulli rate function; independence
across layers adds the rates; and Varadhan's
lemma~\cite{Varadhan1966,DemboZeitouni1998} turns~\eqref{eq:noise-sq-main}
into a variational problem. The functional is permutation-symmetric in the
layers, and its unique non-trivial saddle sits on the symmetric ray
$m^{c}\equiv m^{\ast}$, the maximally aligned direction of~\eqref{eq:bilinear-id}, solving
\begin{equation}
  m^{\ast}\;=\;\tanh\!\bigl(2(L-1)\,m^{\ast}\bigr).
  \label{eq:saddle-eq-main}
\end{equation}
We stress that the full derivation is in 
\ref{app:saddle}. Its output is two quantities. The exponential decay
rate of a single-pattern noise contribution is
\begin{equation}
  {\;
  \begin{aligned}
    \rho_L &\;=\; L\bigl[(L-1)-\phi_L(x^{\ast})\bigr],\\
    \phi_L(x) &\;=\; -\frac{x^{2}}{4(L-1)}+\log\cosh(x),
  \end{aligned}
  \;}
  \label{eq:rho-main}
\end{equation}
with $x^{\ast}=2(L-1)m^{\ast}$ the unique positive solution of
$\tanh(x)=x/[2(L-1)]$. The polynomial prefactor encoding the Gaussian
fluctuations around the saddle and the on-site insertions evaluated at
$\bm m=m^{\ast}\bm 1$ is
\begin{equation}
  K_L \;=\;
  \frac{(2\pi)^{L/2}}{\sqrt{\det\mathcal{H}_L^{\ast}}}\,
  \sinh^{2}\!\bigl((L-1)m^{\ast}\bigr)\,
  \cosh\!\bigl(2(L-1)m^{\ast}\bigr)^{L-1},
  \label{eq:KL-main}
\end{equation}
with $\mathcal{H}_L^{\ast}$ the Hessian of the variational functional at the
saddle (\ref{app:saddle}). Hence
\begin{equation}
  \E\bigl[(X_i^{(\mu|a)})^{2}\bigr]
  \;=\;
  K_L\,e^{-N\rho_L}\bigl(1+o(1)\bigr).
  \label{eq:Xmu-sq-main}
\end{equation}
Combining~\eqref{eq:Xself-sq} with the $P-1$ noise contributions,
\begin{equation}
    \mu_2 \;=\; e^{-2(L-1)}\sinh^{2}(L-1)
    \;+\; (P-1)\,K_L\,e^{-N\rho_L}\bigl(1+o(1)\bigr),
  \label{eq:mu2}
\end{equation}
\begin{equation}
  \sigma^{2}\;=\;\mu_2-\mu_1^{2}\;=\;(P-1)\,K_L\,e^{-N\rho_L}\bigl(1+o(1)\bigr).
  \label{eq:variance}
\end{equation}
The signal is order one; the per-pattern noise variance is exponentially small in
$N$. The whole storage phenomenon is the competition between these two facts,
made quantitative in the next section. Numerical values of $x^\ast$, $m^\ast$,
$\phi_L$ and $\rho_L$ for moderate $L$ are tabulated in
\ref{app:saddle-numbers}; the rate grows monotonically and
$\rho_L\sim L\log 2$.

\begin{remark}[Annealed versus typical noise]
\label{rem:annealed-noise}
The word \emph{annealed} is used here in the sense familiar from the free
energy, $\log\E[Z]$ against $\E[\log Z]$, transposed from a partition function
to a moment: the quantity we evaluate is the disorder average of an exponential,
$\E[e^{2\hat E_\mu}\cdots]$, not the exponential of a disorder average. As
always, the two differ when the average is dominated by rare realisations, and
here it is. The Laplace evaluation of~\eqref{eq:noise-sq-main} is controlled by
the saddle $\bm{\hat m}_\mu\equiv m^\ast\bm 1$, an alignment of magnitude
$\mathcal{O}(1)$ between the cavity state and an unretrieved pattern; for a
genuine Rademacher pattern such an alignment is exponentially rare, the typical
cavity overlaps being $\mathcal{O}(N^{-1/2})$. The typical per-pattern noise is
accordingly parametrically smaller than~\eqref{eq:Xmu-sq-main}, so the
variance~\eqref{eq:variance} is a conservative overestimate of the noise and the
resulting capacity a conservative underestimate. Nothing is lost in rigour by
this (the bound is one-sided in the safe direction) but the gap widens
sharply with $L$, and it is the reason the empirical retrieval in
Sections~\ref{sec:hmm}--\ref{sec:clinc} tends to overshoot the annealed-noise
prediction at $L\ge 3$.
\end{remark}

\section{Storage capacity}
\label{sec:storage}

Within the Gaussian approximation the stability condition $X_i^{a}>0$ holds with
probability
\begin{equation}
  \P\bigl(X_i^{a}>0\bigr)
  \;=\;
  1-\tfrac12\,\erfc\!\Bigl(\tfrac{\mu_1}{\sqrt{2\sigma^{2}}}\Bigr).
\end{equation}
Requiring a per-spin error probability $\le N^{-a}$, $a>0$, so that the union
bound over the $NL$ pairs $(i,a)$ stays summable in the thermodynamic limit,
gives $\mu_1^{2}/(2\sigma^{2}) \ge a\log N + \mathcal{O}(\log\log N)$.
Substituting the signal~\eqref{eq:mu1} and the variance~\eqref{eq:variance},
\begin{equation}
    {P \;\le\; 1+\frac{e^{-2(L-1)}\sinh^{2}(L-1)}{2\,a\,K_L\,\log N}\,e^{N\rho_L},}
  \label{eq:Pmax}
\end{equation}
so that the leading exponential storage capacity is
\begin{equation}
  P_c \;\sim\; e^{N\rho_L},
  \qquad
  \rho_L \;=\; L\bigl[(L-1)-\phi_L(x^{\ast})\bigr].
  \label{eq:Pc}
\end{equation}
Equivalently, the Mattis magnetisation after one sweep of the update
rule~\eqref{eq:update}, one neuron per layer per step, simultaneously across
the $L$ layers, until all $N$ sites have been visited once
(Algorithm~\ref{algo:hetero}), reads
\begin{equation}
  {m_1^{(1)}
  \;=\;
  \erf\!\left(\frac{e^{-(L-1)}\sinh(L-1)}
                            {\sqrt{2(P-1)\,K_L\,e^{-N\rho_L}}}\right),}
  \label{eq:m1step}
\end{equation}
which tends to unity as long as its argument diverges, i.e.\ as long as
$P\,e^{-N\rho_L}\to 0$. The capacity is therefore exponential in $N$, with a rate
$\rho_L$ that increases with the width $L$, $\rho_L\sim L\log 2$ as $L\to\infty$
(\ref{app:saddle-numbers}). Wider hetero-associative networks, binding more
layers at once, are exponentially more capacious.

\paragraph{The intensive load}
Because the capacity $P_c\sim e^{N\rho_L}$ is exponential in $N$, the
Amit--Gutfreund--Sompolinsky ratio $P/N$ -the intensive control parameter of
the classical theory- is itself exponentially large here and says nothing
about proximity to capacity. Two intensive quantities take its place. The first
is the exponential storage rate
\begin{equation}
  \alpha \;:=\; \frac{\log P}{N},
  \qquad \alpha_c=\rho_L,
  \label{eq:alpha-main}
\end{equation}
which measures $P$ on its natural exponential scale and reaches capacity exactly
at $\alpha=\rho_L$. The second, finer one is the reduced load
\begin{equation}
  \gamma \;:=\; \frac{(P-1)\,K_L\,e^{-N\rho_L}}{\mu_1^{2}}
        \;=\; \frac{\sigma^{2}}{\mu_1^{2}},
  \label{eq:gamma-main}
\end{equation}
namely the ratio of the noise variance~\eqref{eq:variance} to the squared
signal~\eqref{eq:mu1}, in terms of which the one-step overlap~\eqref{eq:m1step}
collapses onto the parameter-free profile $m_1^{(1)}=\erf(1/\sqrt{2\gamma})$:
$\gamma\to0$ is deep retrieval, $\gamma\approx1$ the transition, $\gamma\gg1$
failure. It is $\gamma$, not $P$, that is comparable across layer sizes and widths, and
we use it as the intensive load throughout; the figures nonetheless keep $P$ on
the abscissa, where the exponential span of the capacity is directly visible.

Two remarks fix the meaning of this result before we test it. First, because
$P_c$ is exponential in $N$, the memory-degradation transition is visible only at
small $N$: already at $N=64$, $L=2$ one has $P_c\sim e^{86}$, out of
computational reach both in memory and in time, in a sense we make precise at
the end of Section~\ref{sec:dynamics} ($P_c\approx2.8\times10^{37}$ patterns,
whose per-sweep cost alone dwarfs any conceivable machine). Every simulation below therefore uses
$N$ of order ten, where the transition falls in an accessible window. Second, as anticipated in
Remark~\ref{rem:annealed-noise}, the annealed prediction for the noise places the
transition at exponentially smaller $P$ than the network actually
realises; the curves labelled ``typical'' below use the empirically calibrated
per-pattern variance and are the ones that track the data.

Figure~\ref{fig:storage} validates the exponential capacity and displays its
analytic backbone. Panel~(a) compares the one-step
prediction~\eqref{eq:m1step} with the i.i.d.\ Monte-Carlo battery at width $L=2$
and layer sizes $N=8,\dots,12$: the recall plateau $m_1^{(1)}\simeq1$ gives way to the
disordered regime $m_1^{(1)}\simeq0$ where $P\sim e^{N\rho_2}$, and the transition
marches to exponentially larger $P$ as $N$ grows, exactly as~\eqref{eq:Pc}
demands. Panel~(b) solves the symmetric saddle~\eqref{eq:saddle-eq-main}
graphically, its inset showing the rate $\rho_L$ climbing to its $L\log2$
asymptote.

\begin{figure*}[t!]
    \centering
    \includegraphics[width=\linewidth]{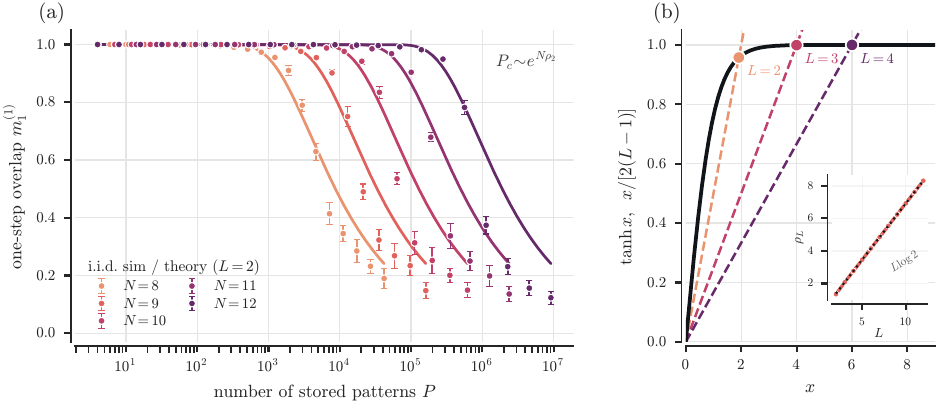}
    \caption{\textbf{Exponential storage capacity.}
    \textbf{(a)} One-step overlap $m_1^{(1)}$ versus the number of stored patterns
    $P$ for the i.i.d.\ ensemble at width $L=2$ and layer sizes $N=8,9,10,11,12$: solid
    lines are the closed form~\eqref{eq:m1step}, markers Monte-Carlo
    (mean\,$\pm$\,std over disorder realisations). The degradation transition
    tracks $P_c\sim e^{N\rho_2}$ and shifts right with $N$; because the capacity is
    exponential in $N$, only small $N$ brings it into an accessible window.
    \textbf{(b)} Graphical solution of the symmetric saddle
    $\tanh x^\ast=x^\ast/[2(L-1)]$: the non-trivial crossing $x^\ast$ (markers)
    drifts towards $2(L-1)$ as the network widens. Inset: the noise rate
    $\rho_L=L[(L-1)-\phi_L(x^\ast)]$ against the width $L$, compared to
    $L\log2$ (dotted).}
    \label{fig:storage}
\end{figure*}

\section{Basins of attraction}
\label{sec:basins}

Having established that the hetero-associative configuration~\eqref{eq:trial} is a
fixed point up to an exponential load, we ask how far from it the dynamics can
start and still return. Following the protocol of the single-layer
model~\cite{ALBANESE2026131223}, we initialise in a corrupted archetype,
\begin{equation}
  \sigma_j^{a}(0) \;=\; s_j^{a}\,\xi_j^{1,a},
  \qquad
  \E\bigl[s_j^{a}\bigr]=r\in(0,1],
  \label{eq:corrupted-init}
\end{equation}
with masks $s_j^{a}\in\{-1,+1\}$ i.i.d.\
independently across sites and layers, so that every layer carries the same
initial overlap $m_1^{a}(0)=r$, i.e.\ a Hamming distance per neuron
$d=(1-r)/2$. The stability variable is again $X_i^{a}=\xi_i^{1,a}h_i^{a}$,
now evaluated at~\eqref{eq:corrupted-init}, and decomposes over patterns as
\begin{equation}
\begin{aligned}
  X_i^{a}&\;=\;\sum_{\mu=1}^{P}X_i^{(\mu|a)},\\
  X_i^{(\mu|a)}&\;:=\;\xi_i^{1,a}\,e^{\hat E_\mu}\,\sinh\!\bigl(\xi_i^{\mu,a}F_\mu^{a}\bigr)\,\Phi_\mu^{(\backslash a)}.
\end{aligned}
  \label{eq:X-decomp-r}
\end{equation}
All details are in \ref{app:corrupted}; three facts organise the result.

\emph{First, the noise is blind to the corruption.} For $\mu\neq1$ the relabelled
variables $u_j^{(\mu,c)}=\xi_j^{\mu,c}s_j^{c}\xi_j^{1,c}$ are i.i.d.\ symmetric
Rademacher for every $r$, so each off-pattern contributes zero mean and
the same per-pattern variance $K_L\,e^{-N\rho_L}$ as in~\eqref{eq:variance}: the
noise floor does not move when the cue degrades.

\emph{Second, the signal pays a large-deviation cost.} At the corrupted state the
recalled cavity overlaps concentrate at $r$ rather than $1$, and the typical
signal decays as
\begin{equation}
  \frac{1}{N}\log X_i^{(1|a)}
  \;\xrightarrow[N\to\infty]{\mathrm{a.s.}}\;
  -\binom{L}{2}\bigl(1-r^{2}\bigr),
  \label{eq:signal-typ-main}
\end{equation}
whereas the annealed signal $\mu_1(r)=\E[X_i^{(1|a)}]$ decays with the
strictly smaller rate
\begin{equation}
\begin{aligned}
  \Sigma_L(r) \;=\;{}& \binom{L}{2}\bigl(1+m_s^{2}\bigr)\\
  &-L\Bigl[\log\cosh\bigl((L-1)m_s+a_r\bigr)-\log\cosh(a_r)\Bigr]\\[2pt]
  &\;<\;\binom{L}{2}\bigl(1-r^{2}\bigr),
  \qquad a_r:=\tanh^{-1}(r),
\end{aligned}
  \label{eq:Sigma-main}
\end{equation}
where $m_s=m_s(r)$ solves the $r$-biased counterpart of~\eqref{eq:saddle-eq-main},
\begin{equation}
  m_s\;=\;\tanh\bigl((L-1)\,m_s+a_r\bigr).
  \label{eq:saddle-corrupted-main}
\end{equation}
\emph{Third, the signal stays strictly positive} at every site: corruption only
shrinks it, so the transition remains a competition between an (exponentially
small) signal and the noise.

Writing $\mu_1(r):=\E\bigl[X_i^{(1|a)}\bigr]$ for the annealed signal at
corruption $r$ (the same first moment as~\eqref{eq:mu1}, now evaluated at the
corrupted initial state~\eqref{eq:corrupted-init}, so that
$\mu_1(r)=C_L(r)\,e^{-N\Sigma_L(r)}$ with the rate~\eqref{eq:Sigma-main} and the
prefactor $C_L(r)$ of Eq.~\eqref{eq:CLr-app}, and $\mu_1(1)=\mu_1$) and
repeating the signal-to-noise argument of Section~\ref{sec:storage},
\begin{equation}
  m_1^{(1)}
  \;=\;
  \erf\!\left(\frac{\mu_1(r)}{\sqrt{2(P-1)\,K_L\,e^{-N\rho_L}}}\right),
  \label{eq:m1step-r}
\end{equation}
which tends to unity iff $P\,e^{-N\varepsilon_L(r)}\to0$, with storage exponent
and capacity
\begin{equation}
  {\;
  P_c(r)\;\sim\;e^{N\varepsilon_L(r)},
  \qquad
  \varepsilon_L(r)\;=\;\rho_L-2\,\Sigma_L(r),
  \;}
  \label{eq:Pc-r}
\end{equation}
in the annealed scheme, and $\varepsilon_L^{\mathrm{typ}}(r)=\rho_L-L(L-1)(1-r^{2})$
if the annealed signal rate is replaced by the typical one~\eqref{eq:signal-typ-main}.
The quantitative load estimate, obtained as in~\eqref{eq:Pmax} by requiring a
per-spin error probability $\le N^{-a}$, is
\begin{equation}
  P \;\le\; 1+\frac{C_L(r)^{2}}{2\,a\,K_L\,\log N}\;e^{N\varepsilon_L(r)},
  \label{eq:Pmax-r}
\end{equation}
with the prefactor $C_L(r)$ of Eq.~\eqref{eq:CLr-app}. The capacity stays
exponential at any corruption below threshold, the price of larger basins
being a smaller rate; above threshold one-step recall of the corrupted cue fails.
The thresholds,
\begin{equation}
\begin{aligned}
  &\Sigma_L(r_c)=\tfrac{\rho_L}{2}
  &&\text{(annealed)},\\
  &r_c^{\mathrm{typ}}=\sqrt{1-\tfrac{\rho_L}{L(L-1)}}
  &&\text{(typical)},
\end{aligned}
  \label{eq:rc-def}
\end{equation}
are reported in Table~\ref{tab:basins}. Both criteria expose the same trade-off:
as the network widens the rate grows ($\rho_L\sim L\log2$) but the basins
shrink. In the typical criterion $1-r_c^{2}=\rho_L/[L(L-1)]\simeq\log2/(L-1)\to0$,
so the tolerated Hamming radius vanishes as $d_c\simeq\log2/[4(L-1)]$; the
annealed criterion saturates instead at the finite limit
$r_c\to\sqrt2-1\approx0.4142$ (\ref{app:corrupted}). The annealed signal
$\mu_1(r)$ is dominated by exponentially rare corruption masks, so $r_c^{\mathrm{ann}}$
is the optimistic estimate and $r_c^{\mathrm{typ}}$ the conservative one, and the
two bracket the finite-$N$ recovery threshold.

Which of the two thresholds the finite-$N$ dynamics realises is an empirical
question, but one for which the theory already indicates the answer, and the
argument is worth giving before the simulations confirm it. The one-step
prediction~\eqref{eq:m1step-r} is built on the annealed first moment
$\mu_1(r)=\E[X_i^{(1|a)}]$, the prescription that fixes the corrupted-cue capacity
in the single-layer model~\cite{ALBANESE2026131223}. There the exponent is linear
in the masks, the annealed average factorises site by site, and annealed and
typical coincide, so no distinction arises. The multilayer exponent is instead
quadratic in the masks, which is what opens the gap
$r_c^{\mathrm{ann}}<r_c^{\mathrm{typ}}$, and one must ask which estimate the
averaged recall follows. The signal is carried by a single stored pattern, whose
collective cavity exponent $\hat E_1$ is one random variable per disorder
realisation, fluctuating by $\mathcal{O}(\sqrt N)$ in the
exponent~\eqref{eq:Ehat-corrupted}. The retrieval statistic reported by both
theory and experiment is an average over independent dataset re-draws, and that
average is dominated by the favourable tail of $\hat E_1$, precisely the
configurations the annealed mean $\mu_1(r)$ weights. It is therefore the annealed
mean, not the smaller typical value, that the averaged recall tracks; the typical
prescription would become operative only in a strict $N\to\infty$ limit taken at
fixed sub-exponential load, a regime incompatible with the exponential storage
studied here. Sections~\ref{sec:hmm}--\ref{sec:clinc} confirm this directly:
Figure~\ref{fig:basins}(a) shows the measured transition sitting on the annealed
curve, near $r_c^{\mathrm{ann}}$ and far from $r_c^{\mathrm{typ}}$. Consistently
with Remark~\ref{rem:annealed-noise}, the residual gap is an overshoot
toward even larger basins --the per-pattern noise being itself an annealed
overestimate-- so both corrections point the same way: the network is at least as
tolerant as the annealed signal-to-noise closed form, and never as pessimistic as
the typical bound. At $r=1$ the two coincide and Section~\ref{sec:storage} is
recovered, $\varepsilon_L(1)=\varepsilon_L^{\mathrm{typ}}(1)=\rho_L$.

\begin{remark}[Symmetric corruption versus the clamped-cue protocol]
\label{rem:clamped-cue}
The closed forms above corrupt every layer by the same amount, $r^{a}\equiv r$.
The hetero-associative recall protocol of Sections~\ref{sec:hmm}--\ref{sec:clinc}
instead clamps the $L-1$ cue layers at their exact value and lets only the
target layer evolve from an uncorrupted start: an asymmetric limit of the family
above. The tilted saddle of \ref{app:corrupted} carries one bias field
$a_{r^{a}}=\tanh^{-1}(r^{a})$ per layer, one for each of the $L$ corruption
levels; clamping the cue amounts to sending $a_{r^{a}}\to\infty$ (i.e.\
$m^{a}\to1$) in the $L-1$ of them that belong to the cue layers. The coupled
system then collapses to the single scalar equation
$m_T=\tanh\bigl[(L-1)+a_{r_T}\bigr]$ for the target layer, in place of the
symmetric~\eqref{eq:saddle-corrupted-main}. We use the symmetric curve as the
reference in Figure~\ref{fig:vdjdb-results}(b) of Section~\ref{sec:vdjdb}
because the two agree to the accuracy of that plot, not because they are the
same object.
\end{remark}

\begin{table}[t!]
\centering
\small
\begin{tabular}{c|ccccc}
\toprule
$L$ & $\rho_L$ & $r_c$ (ann.) & $d_c$ (ann.) & $r_c^{\mathrm{typ}}$ & $d_c^{\mathrm{typ}}$ \\
\midrule
2  & 1.3470 & 0.3195 & 0.3402 & 0.5714 & 0.2143 \\
3  & 2.0784 & 0.4032 & 0.2984 & 0.8085 & 0.0958 \\
4  & 2.7726 & 0.4128 & 0.2936 & 0.8769 & 0.0616 \\
5  & 3.4657 & 0.4140 & 0.2930 & 0.9092 & 0.0454 \\
10 & 6.9315 & 0.4142 & 0.2929 & 0.9607 & 0.0196 \\
\bottomrule
\end{tabular}
\caption{Basin-of-attraction thresholds: minimal initial overlap $r_c$ (maximal
Hamming radius $d_c=(1-r_c)/2$) compatible with an exponential storage capacity,
under the annealed and the typical signal estimates. As $L\to\infty$,
$r_c^{\mathrm{ann}}\to\sqrt2-1$ while $r_c^{\mathrm{typ}}\to1$.}
\label{tab:basins}
\end{table}

\begin{figure*}[t!]
  \centering
  \includegraphics[width=\linewidth]{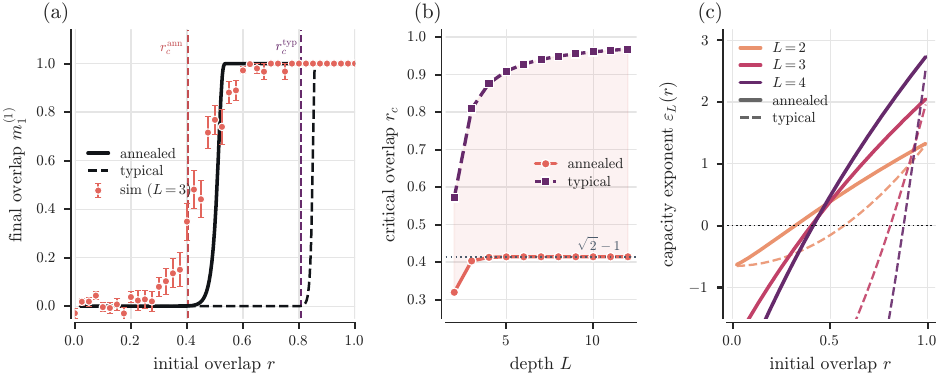}
  \caption{\textbf{Basins of attraction: larger basins cost rate.}
  \textbf{(a)} Basin recovery at $L=3$ ($N=30$, $P=3.5\times10^{4}$, i.i.d.\
  ensemble): one-step overlap after the dynamics versus the cue overlap $r$,
  against the corrupted-cue prediction~\eqref{eq:m1step-r} built on the
  annealed signal $\mu_1(r)$ (solid) and on the typical signal
  (dashed). The measured transition sits on the annealed curve, at
  $r\!\approx\!0.47$ next to $r_c^{\mathrm{ann}}$, and nowhere near the typical
  threshold $r_c^{\mathrm{typ}}\!=\!0.81$: the annealed branch is the physical
  one. If anything the data overshoot the annealed curve toward larger
  basins, the annealed per-pattern noise being a conservative overestimate
  (Remark~\ref{rem:annealed-noise}). The number of stored patterns is set from
  the annealed rule $P=e^{N\varepsilon_L(r^{\ast})}/\log N$ at $r^{\ast}=0.5$,
  exactly as in the single-layer reference~\cite{ALBANESE2026131223}.
  \textbf{(b)} The thresholds $r_c(L)$: the annealed branch saturates at
  $\sqrt{2}-1$ (dotted) while the typical branch climbs towards $1$, so basins
  shrink as the network widens.
  \textbf{(c)} Capacity exponent $\varepsilon_L(r)$ versus initial overlap $r$
  for $L=2,3,4$, in the annealed (solid) and typical (dashed) schemes; the
  zero-crossing is the recovery threshold $r_c$. Wider networks start higher
  (larger $\rho_L$) but cross zero at larger $r$.}
  \label{fig:basins}
\end{figure*}

\section{Structured data I: the Hidden Manifold Model}
\label{sec:hmm}

The theory of Sections~\ref{sec:model}--\ref{sec:basins} rests on one assumption:
the $L$ layer datasets are mutually independent Rademacher vectors. Real
hetero-associative data are nothing of the sort. Their layers are correlated,
because cue and target are different views of one shared cause, and their
patterns lie near a low-dimensional manifold rather than filling the hypercube.
A second feature is a modelling choice on our part rather than a property forced
by the data: we take the target to be a many-to-one (surjective) function of the
cue. Nothing requires associations to be many-to-one in general; but restricting
to that case is what supplies a single-valued, well-defined rule to store, and it
is also the structure of the two problems we go on to treat, where many distinct
cues legitimately share one target. Before touching real data we therefore ask a
controlled question, with a generator in which manifold dimension and
surjectivity are knobs we turn:

\begin{quote}
When the stored patterns are drawn near a low-dimensional manifold, with a
many-to-one target, does the exponential capacity survive --- and does
memorisation buy any generalisation to unseen points of the manifold?
\end{quote}

The generator is the Hidden Manifold Model~\cite{Goldt2020,Gerace2020}. Each of
the $P$ stored indices $\mu$ owns a latent code $z^{\mu}\sim\mathcal{N}(0,I_D)$ in
dimension $D$; the $L-1$ cue layers push it through a fixed random feature map,
one per layer, and threshold, $\xi^{\mu,a}=\sign(F^{a}z^{\mu}/\sqrt D)$, so that
cue layers of a given index share the latent $z^{\mu}$ --the correlation that
makes hetero-association possible. The target layer is set by a surjective
rule: the first $n_{\mathrm{bits}}$ latent signs select one of $K=2^{n_{\mathrm{bits}}}$
fixed prototypes $\rho_{k(z^{\mu})}$, so whole regions of latent space collapse to
a common target. The two counts are distinct and both essential: $P$ defines the load
(how many cues are stored), $K$ the number of target prototypes, and the map is surjective precisely because $P\gg K$, with
$P/K$ the mean number of cues per prototype. The single geometric control
parameter is the aspect ratio $\alpha_D=D/N\in(0,1]$:
 small $\alpha_D$ is a
tightly curved, strongly correlated manifold; $\alpha_D=1$ is the near-i.i.d.\
edge of the same pipeline. Full construction (the region map $k(z)$, the
prototypes, the symmetric-target control, and the held-out-region control) is
in \ref{app:hmm_protocol}.

A convention on the numbers, valid for this section and the two that follow: all
empirical quantities stemming from the numerical experiments are reported as
means $\pm$ one standard deviation over independent dataset re-draws, each
re-draw itself an average over evaluation trials, with the seed counts listed in
\ref{app:repro}.

\paragraph{The data are a sign-image of a geometry}
The first thing to verify is that the manifold is really there. The diagnostic is
the pairwise pattern overlap $m_{\mu\nu}=\frac1N\sum_{i}\xi_i^{\mu,a}\xi_i^{\nu,a}$,
the cosine between two stored codes in a layer. For sign patterns it concentrates
on the arcsine law $\E[m_{\mu\nu}]=(2/\pi)\arcsin\rho_z$, with
$\rho_z=\langle z^{\mu},z^{\nu}\rangle/(\|z^{\mu}\|\|z^{\nu}\|)$ the latent cosine
similarity: an exact, parameter-free curve (derived from Grothendieck's identity
in \ref{app:hmm_protocol}). Figure~\ref{fig:hmm-manifold}(a) shows the measured
overlaps lying on it, turning ``the data lie near a manifold'' into a verified
fact rather than a hope. The same panel reads off the price of structure: the
standard deviation of the cue overlap crosses over from the i.i.d.\ value
$1/\sqrt N$ at $\alpha_D=1$ to the much larger $(2/\pi)/\sqrt D$ as
$\alpha_D\to0$. A smaller manifold makes the stored patterns more
correlated (more overlapping than Rademacher, whenever $D<(2/\pi)^{2}N$), which is
exactly the extra noise that the clean theory does not carry, and exactly why
capacity should fall as the manifold shrinks.

\begin{figure*}[t!]
  \centering
  \includegraphics[width=\linewidth]{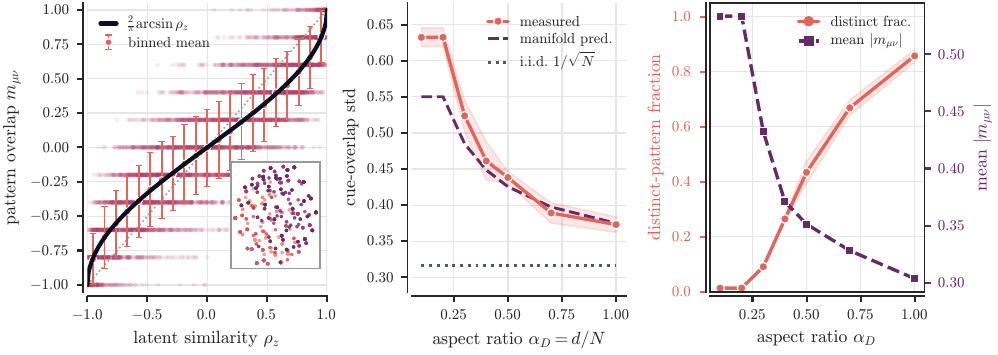}
  \caption{\textbf{The data manifold.}
  \textbf{(a)} Pattern overlap $m_{\mu\nu}$ against latent similarity $\rho_z$
  (density) with binned means (markers) lying on the arcsine law
  $\frac{2}{\pi}\arcsin\rho_z$ (black); inset, a 2D PCA embedding of the cue
  patterns coloured by target region, showing the surjective clustering.
  \textbf{(b)} The cue-overlap standard deviation crosses over from the i.i.d.\
  value $1/\sqrt N$ at $\alpha_D=1$ to a strongly correlated regime as
  $\alpha_D\to0$, tracking the manifold prediction.
  \textbf{(c)} The distinct-pattern fraction (left) collapses at small $\alpha_D$
  while the mean overlap (right) rises: apparent stability at small $d$ is pattern
  collision, not capacity.}
  \label{fig:hmm-manifold}
\end{figure*}

\paragraph{The exponential capacity survives, and fails gracefully}
Figure~\ref{fig:hmm-capacity}(a) scans the load $P$ at fixed layer size $N=10$. At low
load the i.i.d.\ ensemble retrieves perfectly and tracks the finite-$N$
theory~\eqref{eq:m1step} to within finite-size noise (with the transition sitting
at larger $P$ than the annealed estimate, as Remark~\ref{rem:annealed-noise}
anticipates); the manifold ensembles sit
slightly below, the correlation of Figure~\ref{fig:hmm-manifold}(b) acting as an
extra noise term on top of the $e^{-N\rho_L}$ floor, and the plateau drops as
$\alpha_D$ falls, the practical form of the ``capacity decreases with manifold
dimension'' statement. The high-load behaviour is more interesting, and at first
sight paradoxical: past capacity the i.i.d.\ overlap decays toward zero, while
the manifold overlap saturates on a positive floor. The floor is not
superior retrieval. It is the structural self-overlap
$\E|m_{\mu\nu}|\approx0.30$ of the manifold: once the network can no longer
resolve individual memories it relaxes to the manifold rather than to a
random spurious state, and keeps the residual overlap that any two manifold
points share. Read against the measured floor, the manifold curve confirms rather
than contradicts the theory, and confirms that failure is graceful.

Part of the early saturation at small $\alpha_D$ so is not a retrieval effect at all
but a counting one, and the inset of Figure~\ref{fig:hmm-capacity}(a) isolates
it. It tracks the collision rate $(P-P_{\mathrm{unique}})/P$: the fraction of the
$P$ drawn latents whose sign-image $\sign(F z)$ duplicates a pattern already in
the store. Because $z\mapsto\sign(Fz)$ can realise at most
$C(N,D)=2\sum_{k<D}\binom{N-1}{k}$ distinct codes (Cover's count for a central hyperplane arrangement, \ref{app:hmm_protocol}),
the codebook is finite and collisions are inevitable once $P$ becomes comparable
to $C(N,D)$: at $N=10$ this count is $92$, $764$ and $1024$ for
$\alpha_D=0.3,0.6,1.0$, and the measured rate crosses one half at
$P\approx62$, $325$ and $605$ respectively, in the same order and of the same
magnitude. The consequence for panel~(a) is a bookkeeping one that we
insist on because it is easy to misread in the opposite direction: beyond that
load the network is no longer storing new memories, only additional copies of the
few it can address, so the overlap plateau overstates rather than understates the
number of distinct memories held. 
The distinct-pattern fraction
(Figure~\ref{fig:hmm-manifold}(c)) must be read together with the stability score;
the honest ``capacity falls with $D$'' statement is the one taken from the
matched-load transition of Figure~\ref{fig:hmm-capacity}(a), not from raw
stability.

\begin{figure*}[t!]
  \centering
  \includegraphics[width=\linewidth]{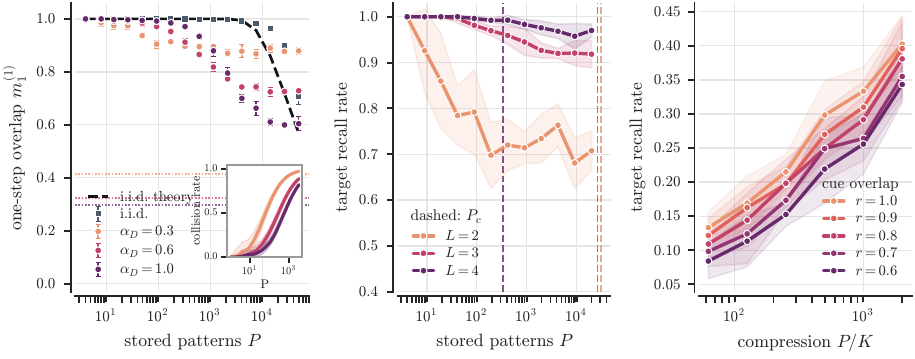}
  \caption{\textbf{Capacity, width and surjective compression.}
  \textbf{(a)} One-step overlap versus load $P$ ($N=10$) for the i.i.d.\ ensemble
  and Hidden-Manifold ensembles at several $\alpha_D$: capacity falls as the
  manifold shrinks, and past capacity the manifold state relaxes onto its
  structural floor (dotted) rather than a spurious pattern. Inset, same abscissa
  and colours: the collision rate $(P-P_{\mathrm{unique}})/P$, the fraction of
  drawn latents that land on a pattern already stored. It leaves zero at a load
  of the order of Cover's count $C(N,D)$, earliest for the smallest
  $\alpha_D$, and the onset of each curve's rise coincides with the onset of
  the corresponding overlap plateau.
  \textbf{(b)} Target-recall rate versus $P$ for width $L=2,3,4$ at matched layer
  size; the transition shifts to larger $P$ with $L$ (dashed: predicted $P_c$), so
  more cue layers tolerate exponentially more memories.
  \textbf{(c)} Recall versus surjective compression $P/K$ at several cue-corruption
  levels $r$: prototypes backed by more cue patterns have deeper basins.}
  \label{fig:hmm-capacity}
\end{figure*}

\paragraph{Basins, and which of the two thresholds is realised}
At a matched sub-critical load the basins tell the complementary story
(Figure~\ref{fig:basins}(a)). We sweep the cue overlap $r$ at $L=3$, fixing the
load from the annealed rule $P=e^{N\varepsilon_L(r^{\ast})}/\log N$ so that the
annealed threshold lands at a chosen $r^{\ast}$, the same operational recipe used
in the single-layer reference~\cite{ALBANESE2026131223}. The measured transition
tracks the annealed one-step curve~\eqref{eq:m1step-r} and sits at
$r\approx0.47$, right next to $r_c^{\mathrm{ann}}$, and nowhere near the
typical value $r_c^{\mathrm{typ}}=0.81$ of Table~\ref{tab:basins}: the favourable
corruption masks that dominate the annealed signal are exactly the ones sampled
by the disorder average, so it is the annealed --the optimistic-- branch that is
physical. This settles the question left open in Section~\ref{sec:basins}, in the
opposite sense to a naive ``rare events do not matter'' expectation: because the
signal rides on a single pattern, its rare-but-favourable fluctuations are not
averaged away. If anything the data overshoot the annealed prediction toward
still larger basins, the per-pattern noise being itself an annealed overestimate
(Remark~\ref{rem:annealed-noise}), so both the corrupted-cue signal and the noise
push the same way. The i.i.d.\ basin is the cleaner and deeper one; the manifold
basin rides the structural self-overlap floor at strong corruption, exactly as in
panel~(a).

\paragraph{Width helps recall}
Because the rate $\rho_L$ grows with the width $L$, more cue layers should tolerate more
stored patterns. Figure~\ref{fig:hmm-capacity}(b) confirms it directly: at matched
layer size the target-recall curves shift to larger $P$ as $L$ goes from $2$ to $4$,
each transition sitting near its predicted $P_c\sim e^{N\rho_L}$. The case that
matters for the receptor data of Section~\ref{sec:vdjdb} --two cue chains
driving a third layer, $L=3$-- is the middle curve, and having both cues rather
than one is worth an order of magnitude in tolerated load.

\paragraph{The headline: memorisation without much generalisation}
Beyond memorisation there is a second question, and it is the one the model was
not built to answer: whether a cue drawn from a never-stored latent is routed to
the correct target prototype. We hold two of eight latent regions out
by construction and measure four rates: memorisation (stored cues),
generalisation (fresh cues from seen regions), a novel-region control (fresh cues
from held-out regions), and chance $1/n_{\mathrm{seen}}=0.167$. The three
comparisons together are what make the reading unambiguous
(Figure~\ref{fig:hmm-generalization}). Memorisation is high. Generalisation sits
clearly above chance, so the manifold structure is being used, and climbs
with coverage $P/n_{\mathrm{seen}}$, from $\approx0.21$ at the sparsest sampling
toward $\approx0.39$, approaching memorisation only in the densely sampled limit,
where the stored cues tile the manifold so finely that the manifold itself acts as
the attractor. The novel-region control fixes the interpretation: its rate never
leaves chance, across the $12$ loads scanned it beats chance in none (Wilcoxon
signed-rank on the per-load means, two-sided $p\approx4.9\times10^{-4}$), whereas
generalisation beats chance in all $12$ ($p\approx2.4\times10^{-4}$) and beats the
novel-region control at every load (sign test, $p\approx2.4\times10^{-4}$). The
generalisation signal is therefore real rather than an encoding artefact, and at
the same time modest: an exponential ability to memorise does not, by itself,
confer a comparable ability to classify unseen inputs, short of sampling the
manifold densely enough that classification collapses back to memory. This is the
central empirical fact of the paper, and the next section shows it survives
contact with real data.

\begin{figure*}[t!]
  \centering
  \includegraphics[width=\linewidth]{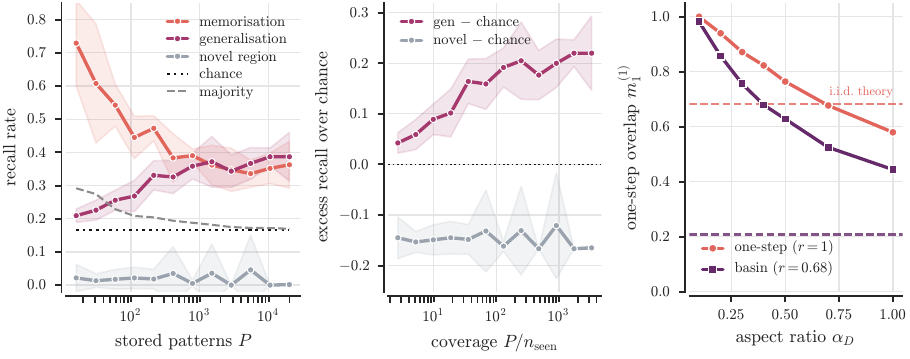}
  \caption{\textbf{Memorisation without generalisation.}
  \textbf{(a)} Memorisation, generalisation and novel-region recall versus load
  ($K=8$ regions, two held out), with chance (dotted) and majority (dashed).
  \textbf{(b)} Excess over chance versus coverage $P/n_{\mathrm{seen}}$:
  generalisation grows and stays well above the novel-region control, which never
  leaves chance: the signal is real but far below memorisation.
  \textbf{(c)} As $\alpha_D\to1$ the manifold retrieval descends onto the i.i.d.\
  theory (dashed lines); at small $\alpha_D$ collisions inflate the apparent
  overlap above it.}
  \label{fig:hmm-generalization}
\end{figure*}

\section{Structured data II: real T-cell receptors}
\label{sec:vdjdb}

We now replace the synthetic manifold with immunological repertoires. A T-cell
receptor recognises an antigen through the paired hypervariable loops of its two
chains, the $\alpha$- and $\beta$-CDR3; the antigen is a short peptide, the
epitope. The map from receptor to epitope is exactly the object the model is
built for: a hetero-association from two cue layers to a target, and a strongly
surjective one, since many unrelated receptors converge on the same epitope. We
take the triples $(\alpha\text{-CDR3},\ \beta\text{-CDR3},\ \text{epitope})$
from VDJdb~\cite{Shugay2018,Bagaev2020}, assign them to the three layers
$a\in\{1,2,3\}$, and run the same battery as in Section~\ref{sec:hmm}.
Table~\ref{tab:mapping} fixes the dictionary between the model and the biology.

\begin{table}[t!]
\centering
\small
\renewcommand{\arraystretch}{1.3}
\begin{tabular}{@{}p{0.44\linewidth}p{0.48\linewidth}@{}}
\toprule
\textbf{Model (theory)} & \textbf{Biology / real data} \\
\midrule
Layers $L$ & the two receptor chains ($\alpha$, $\beta$) and the target epitope ($L=3$) \\
Independent Rademacher patterns & Atchley factors passed through a SimHash \\
Surjective many-to-one map & many unrelated receptors converging on one epitope \\
\bottomrule
\end{tabular}
\caption{The model-to-biology dictionary for the VDJdb experiments.}
\label{tab:mapping}
\end{table}

\paragraph{Making the problem well posed}
Two things must be fixed before the network sees anything, and both are
prescribed by the theory. First, by Remark~\ref{rem:surjective} the stored map
must be a single-valued function: a receptor mapped to two different epitopes is
not a hetero-association but a contradiction, and~\eqref{eq:majority} says what
the network would return in its place. Starting from $137{,}484$ raw records we
apply a lean biological clean (human, curation score $\ge1$, paired
\texttt{complex.id}, valid sequences) and then a \emph{function filter} keeping
only receptors mapped to a single epitope, which leaves $1{,}052$ clean triples
over $220$ epitopes, strongly surjective, up to $146$ receptors per epitope
with $120$ singletons (Figure~\ref{fig:vdjdb-data}(a,b)): the immunological
image of the convergent recognition the model is meant to store. Second, the
sequences must become $\{-1,+1\}^{N}$ patterns without smuggling in a learned
representation. We use Atchley factors~\cite{Atchley2005}, five standardised
biophysical numbers per residue placed positionally onto a fixed-length vector,
fed to a locality-sensitive SimHash~\cite{Charikar2002}: a fixed, deterministic
standardise$\to$PCA-whiten$\to$Gaussian$\to$sign map, fitted per layer on the
training split, whose bits are balanced and near-orthogonal and whose Hamming
overlap obeys the same arcsine law $\E[m_{\mu\nu}]=1-\tfrac2\pi\arccos c$ as the
manifold model, in the biophysical cosine $c$
(Figure~\ref{fig:vdjdb-data}(c,d); full funnel and encoder in
\ref{app:vdjdb_protocol}). The outcome is quantitatively Rademacher-like (per-bit balance $\approx0.03$, mean absolute pattern overlap $\approx0.10$ on
the receptor layers) close enough to the theory's ideal that the i.i.d.\
predictions are usable as a yardstick.

\begin{figure*}[t!]
  \centering
  \includegraphics[width=\linewidth]{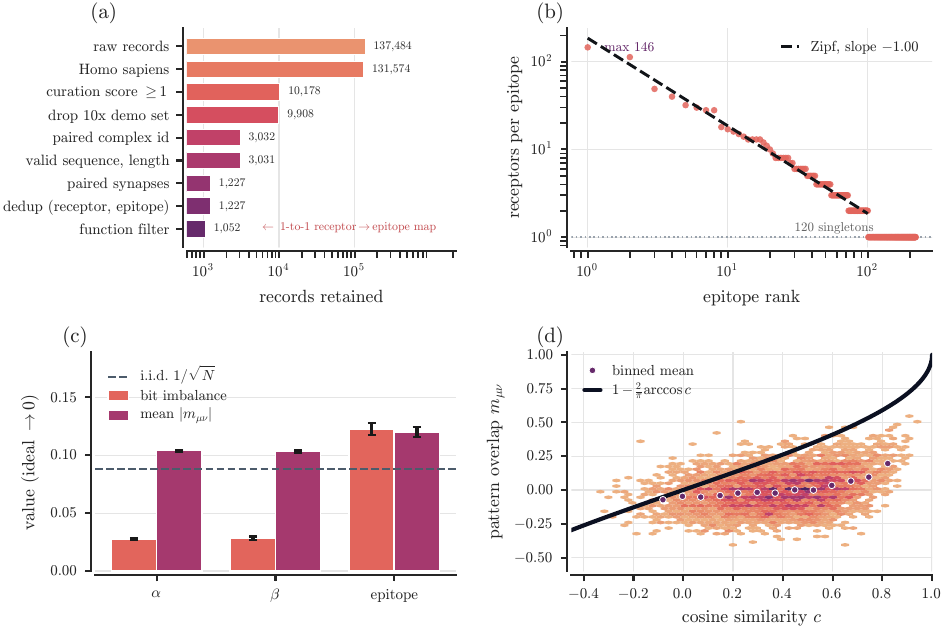}
  \caption{\textbf{From database to binary patterns.}
  \textbf{(a)} The cleaning funnel: $137{,}484$ raw records reduce to $1{,}052$
  clean triples once the function filter enforces a single-valued
  receptor$\to$epitope map.
  \textbf{(b)} Epitope cluster sizes (rank--frequency): strongly surjective, up
  to $146$ receptors per epitope, with $120$ singletons.
  \textbf{(c)} Near-Rademacher encoding: per-bit imbalance and mean pattern
  overlap by layer, against the i.i.d.\ value $1/\sqrt N$; the receptor layers are
  close to ideal, the small epitope alphabet less so.
  \textbf{(d)} The SimHash law: the binned-mean pattern overlap (markers) follows
  the arcsine shape $m_{\mu\nu}=1-\tfrac{2}{\pi}\arccos c$ (black) in the cosine
  similarity $c$ of the Atchley feature vectors; individual pairs (density)
  scatter broadly around it, as expected for a finite hash.}
  \label{fig:vdjdb-data}
\end{figure*}

\paragraph{Real data behaves like the theory}
The most striking result requires no fitting. We corrupt a stored receptor and
watch the dynamics recover it, sweeping the initial overlap $r$
(Figure~\ref{fig:vdjdb-results}(b)). The real-data basin curve is essentially
indistinguishable from the i.i.d.\ one, and both sit on the finite-$N$
prediction~\eqref{eq:m1step-r}: recovery is total for $r\gtrsim0.4$ and
collapses through a sharp transition near $r\approx0.2$--$0.3$. Correlated,
surjective, biologically generated patterns thus have basins of attraction that
the idealised Rademacher theory describes to within finite-size noise: the
independence assumption of Section~\ref{sec:model}, violated in every literal
respect by these data, is benign for retrieval.

\paragraph{Two chains recall the antigen, essentially without error}
The biologically central task is $(\alpha,\beta)\to\text{epitope}$: given both
receptor chains, name the antigen. Figure~\ref{fig:vdjdb-results}(a) reports a
recall rate of $0.9994\pm0.001$: perfect memory up to statistical noise. A single
chain is not enough ($\alpha\to\text{epitope}$ and $\beta\to\text{epitope}$
recover only $0.79$ and $0.83$) so the two chains are genuinely
complementary, each supplying information the other lacks, and only their
conjunction pins the epitope. The reverse, over-determined direction
$(\beta,\text{epitope})\to\alpha$ recalls the $\alpha$ chain perfectly
($1.00$), as an invertible cue should. As a stored associative memory of known
receptor--epitope \emph{bindings} (experimentally confirmed pairs, each one a
receptor observed to physically recognise and lock onto that particular peptide) the network is essentially flawless at $L=3$ on real data.

\begin{figure*}[t!]
  \centering
  \includegraphics[width=\linewidth]{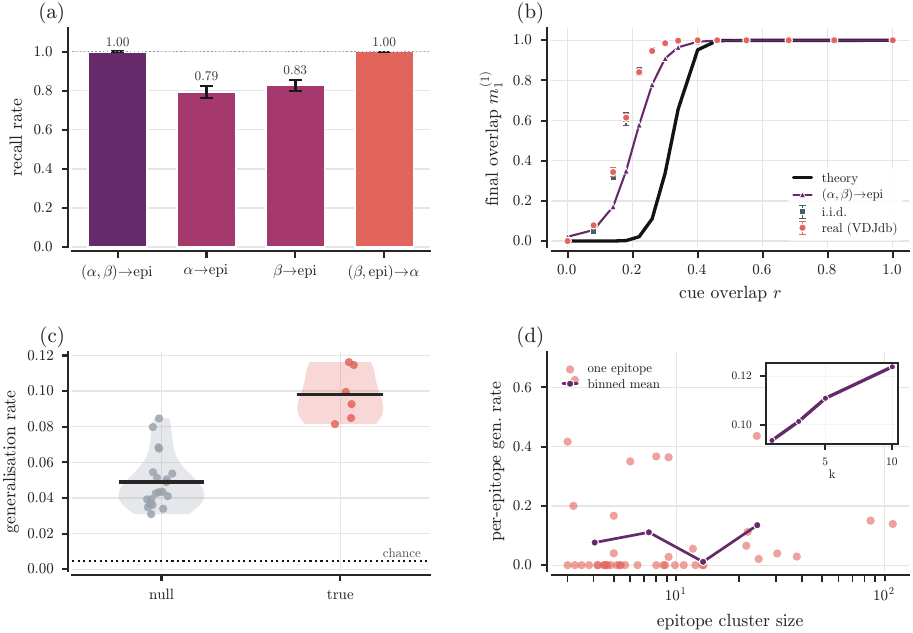}
  \caption{\textbf{Near-perfect memory, modest generalisation on real data.}
  \textbf{(a)} Recall for the biological tasks: both chains name the epitope
  essentially without error, each single chain only partially, and the
  over-determined $(\beta,\text{epitope})\to\alpha$ perfectly.
  \textbf{(b)} Basins of attraction: the real-data recovery curve coincides with
  the i.i.d.\ one and with the finite-$N$ theory (black); the hetero-associative
  cue $(\alpha,\beta)\to\text{epitope}$ needs a larger overlap to lock on.
  \textbf{(c)} Generalisation to unseen receptors versus a label-permutation null:
  the true signal ($\approx0.10$) is about twice the null and well above chance,
  real but modest.
  \textbf{(d)} Per-epitope generalisation versus cluster size. Each point is one
  epitope: its abscissa is the number of receptors carrying it, its ordinate the
  fraction of that epitope's held-out receptors routed back to it. Epitopes backed
  by more stored receptors generalise better, because a denser cluster tiles more
  of the receptor manifold around the shared target. Inset: top-$k$ accuracy of
  retrieving the correct epitope for a new receptor.}
  \label{fig:vdjdb-results}
\end{figure*}

\paragraph{The headline again, sharper: memory yes, prediction barely}
The database records a finite list of confirmed bindings; the biological question
is whether that list determines the rest. Does storing the known bindings let the
network predict the epitope of a receptor appearing in no stored pair, that
is, name the peptide a never-seen receptor would bind? We hold out a quarter of
the receptors of each epitope and
test recall on them. Memorisation of the training receptors is near-perfect
($0.997$); generalisation to held-out receptors is $0.11$--$0.14$, rising with
the number of receptors per epitope, against a chance level of $0.005$: roughly
twenty times chance, and about twice a label-permutation null (true
generalisation $\approx0.10$ versus null $\approx0.048$;
Figure~\ref{fig:vdjdb-results}(c)), a gap that a Mann--Whitney test on the $6$
true against the $20$ permuted replicates confirms is not incidental (one-sided
$p\approx9\times10^{-6}$, equivalently $z\approx3.2$). The signal is therefore
real: the encoding does place receptors of the same epitope in neighbouring
regions of pattern space, and the network exploits it. But it is small. Top-1
retrieval of the correct epitope for a new receptor is $0.09$
(Figure~\ref{fig:vdjdb-results}(d)), an order of magnitude below the network's
own memorisation. The pattern is the one of Section~\ref{sec:hmm}, now on real
data: an exponential, near-perfect associative memory whose classification of
novel inputs is real but modest.

This is not a verdict on the model. Predicting TCR specificity from sequence
alone is a hard problem in general, and the next section shows that the same
network, storing the same way, generalises five times better on a domain whose
encoder co-locates same-target cues more tightly. What the VDJdb experiment pins
down is what an exponential hetero-associative memory does and does not deliver
\emph{on this encoding}.

\section{Structured data III: natural language, and the universality of the
mechanism}
\label{sec:clinc}

\begin{figure*}[t!]
  \centering
  \includegraphics[width=\linewidth]{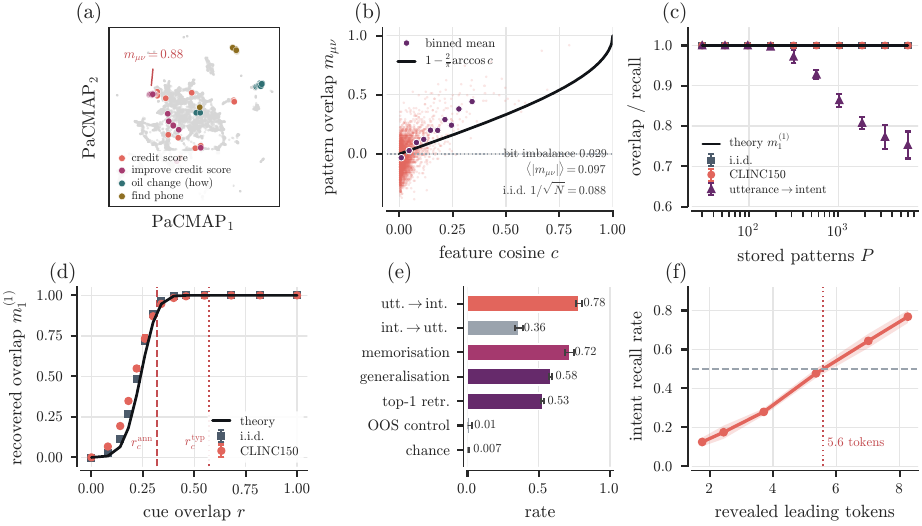}
  \caption{\textbf{The same network on natural language ($L=2$, $N=128$,
  CLINC150).}
  \textbf{(a)} PaCMAP embedding of the encoded utterances (grey: all $150$
  intents). Four intents are highlighted: the semantically affine pair
  \emph{credit score} / \emph{improve credit score}, whose clusters coincide and
  whose target codes overlap at $m_{\mu\nu}=0.88$, and two unrelated controls at
  $|m_{\mu\nu}|\le0.06$ from them.
  \textbf{(b)} The SimHash law: binary pattern overlap against feature cosine,
  with the arcsine prediction (black); the encoding is near-Rademacher
  (annotated values).
  \textbf{(c)} Capacity scan: the one-step overlap sits on the theoretical
  plateau for real and i.i.d.\ patterns alike over the whole accessible range of
  $P$ (the predicted capacity being $e^{172}$), while the utterance$\to$intent
  routing decays under interference between cues sharing a target.
  \textbf{(d)} Basins: real and i.i.d.\ recovery curves coincide and follow the
  finite-$N$ prediction~\eqref{eq:m1step-r}; the transition sits below
  $r_c^{\mathrm{ann}}$ and far from $r_c^{\mathrm{typ}}$.
  \textbf{(e)} Recall on the language tasks: the forward (surjective) direction,
  the ill-posed reverse one, memorisation, generalisation to unseen utterances,
  top-1 retrieval, the out-of-scope control and chance.
  \textbf{(f)} Cue content: intent recall against the number of revealed leading
  tokens; half the intents are recovered from $5.6$ words.}
  \label{fig:clinc}
\end{figure*}

Two datasets from a biological or geometric generator leave one question open:
how much of what we have measured belongs to the network and how much to the
domain? The way to separate them is to change the domain as far as possible while
changing nothing else.

CLINC150~\cite{Larson2019} labels short user utterances with one of $150$
intents. It is the linguistic image of the receptor problem, a single-valued,
strongly surjective, many-to-one map, utterance$\to$intent, and shares with it
nothing else: no alphabet, no metric, no generative model, no notion of distance
in common, only the structure Remark~\ref{rem:surjective} demands. The same
function filter removes just $4$ ambiguous utterances, leaving $22{,}491$
records over $150$ intents with a
compression $P/K\approx150$ cues per target, against $\approx5$ for VDJdb. We
take $L=2$ (utterance, intent), which makes the closed-form numbers of
Sections~\ref{sec:storage}--\ref{sec:basins} directly usable ($\rho_2=1.347$,
annealed $r_c=0.3195$, typical $r_c=0.5714$, no $L$-dependent inversion)
and binarise utterances with a word- and character-level TF--IDF map, a fixed,
learning-free representation playing the role the Atchley factors played on
receptors, passed through the {same} PCA-whitening-plus-SimHash binariser
(\ref{app:clinc}).

\paragraph{The memory side is the same theory, verbatim}
Figure~\ref{fig:clinc} collects the battery. The encoding is again
near-Rademacher: per-bit imbalance $0.029$ and mean overlap
$\langle|m_{\mu\nu}|\rangle=0.097$ against the i.i.d.\ value
$1/\sqrt N=0.088$ at $N=128$, with the pattern overlap following the arcsine law
$1-\tfrac2\pi\arccos c$ in the feature cosine, panel~(b), exactly as in
Figure~\ref{fig:vdjdb-data}(d). The capacity scan, panel~(c), sits on the
theoretical plateau over the whole accessible range: at $N=128$, $L=2$ the
predicted capacity is $P_c\sim e^{N\rho_2}=e^{172}$, and the one-step overlap
duly remains $1.000$ to four digits for real and i.i.d.\ patterns alike from
$P=30$ to $P=6000$. What does move is the hetero-associative routing, decaying
from $1.00$ to $0.75$ over the same span: interference among the growing number
of cues that share a target, not loss of memory. The basins, panel~(d),
reproduce the VDJdb finding on data that have nothing to do with it: real and
i.i.d.\ curves lie on top of each other and on~\eqref{eq:m1step-r}, with the
transition at $r\approx0.21$--$0.24$, below $r_c^{\mathrm{ann}}$ and far from
the typical threshold, once more overshooting the annealed prediction towards
larger basins in the direction Remark~\ref{rem:annealed-noise} anticipates.

\paragraph{The geometry the encoder builds}
Panel~(a) makes visible what the arcsine law states. In a PaCMAP embedding of
the encoded utterances each intent is a tight cluster (median radius $0.3$,
against a median inter-centroid distance of $10.3$); the semantically affine
pair \emph{credit score} / \emph{improve credit score}, the most similar of the
$11{,}175$, falls on top of itself (centroids $1.9$ apart, target codes
overlapping at $m_{\mu\nu}=0.88$) while two unrelated intents sit at
$|m_{\mu\nu}|\le0.06$ from them and from each other. Affine targets share a
region of pattern space and unrelated ones do not; it is this co-location, not
the storage rule, that generalisation feeds on.

\paragraph{What changes, and what does not}
Panel~(e) is the comparison the experiment was built for. The
hetero-associative task utterance$\to$intent recalls at $0.780\pm0.022$;
memorisation of stored utterances is $0.717\pm0.030$; generalisation to unseen
utterances of a seen intent is $0.584\pm0.009$, against a chance level of
$1/150=0.0067$, an out-of-scope control at $0.014$ and a label-permutation
null of $0.013\pm0.002$ over $20$ permutations ($z\approx3.0\times10^{2}$). Top-1
retrieval of the correct intent for a fresh utterance is $0.527\pm0.009$, rising
to $0.649$ at top-10; on receptors the corresponding numbers were $0.11$ and
$0.09$. Storage rule, dynamics and closed-form basins are identical in the two
cases; only the encoder differs, and generalisation differs with it by a factor
of five.

One ablation settles the attribution beyond argument. Replacing the TF--IDF
features by a text-blind random map (a deterministic hash of each utterance to
a fixed Gaussian vector, everything else unchanged, so that lexically similar
utterances receive unrelated codes) leaves memorisation at $0.349\pm0.023$
and sends generalisation to $0.008\pm0.002$, exactly the chance level
(\ref{app:clinc}). The network still stores; it simply has nothing to
interpolate between. Widening the representation, conversely, buys
generalisation monotonically: $0.35,0.50,0.58,0.63$ at $N=32,64,128,256$,
tracking the fall of the mean pattern overlap towards its i.i.d.\ floor
$1/\sqrt N$. Generalisation is a property of the map into pattern space; the
storage rule contributes the memory, and only the memory.

The deliberately ill-posed direction says the same thing from the other end.
Cueing with the intent and asking for an utterance recalls at $0.362$: the
reverse of a surjection is not a function, and Remark~\ref{rem:surjective}
predicts that the target layer returns the componentwise majority of the
$q\approx20$ utterances sharing that intent at $P=3000$, whose overlap with any
one of them would be $\sqrt{2/\pi q}\approx0.18$ for independent codes. The
measured overlap is $0.414\pm0.017$, more than twice as large, because
paraphrases of one intent are \emph{not} independent: the encoder places them
close, so their majority stays close to each. The non-invertibility is exactly
as severe as the encoder's clustering leaves it.

\paragraph{How much cue is enough}
Panel~(f) puts the basin question in units a reader of the data can check: we
reveal only the leading words of the utterance and re-encode, instead of
flipping random bits. Recall grows smoothly from $0.125$ at $1.8$ revealed
tokens to $0.768$ at the full $8.3$, crossing one half at $5.6$: two thirds
of a sentence suffice to fall inside a basin, and the smoothness of the curve is
the linguistic image of the graceful degradation predicted under bit
corruption.

\section{Discussion}
\label{sec:discussion}

We set out to make hetero-association a first-class citizen of the exponential
associative-memory family, and to find out what such a network does on real
data. The theory answers the first question cleanly. An energy built from the
product of per-layer Mattis overlaps has the perfectly aligned
hetero-associative state as a fixed
point of its zero-temperature dynamics up to a load $P_c\sim e^{N\rho_L}$, with a
rate $\rho_L=L[(L-1)-\phi_L(x^\ast)]$ that we compute in closed form and that
grows like $L\log2$: each further layer buys exponentially more capacity. The one structural
novelty relative to the auto-associative case -that the per-pattern noise no
longer factorises over sites- is resolved by a large-deviation saddle point on
the symmetric ray of layer magnetisations, and the same machinery, tilted, yields
the basins: enlarging them lowers the rate but never removes its exponential
character, with an explicit annealed/typical dichotomy that the simulations then
settle in favour of the conservative branch. The same field computation delimits
what such an energy can hold at all: the stored rule must be a surjective
function of the cue (Remark~\ref{rem:surjective}), a cue with $q$ targets being
answered by their componentwise majority rather than by any of them, and a
target with no cue occupying capacity it can never be addressed by.

\paragraph{An excellent memory, and geometry-limited generalisation}
The experiments answer the second question in two parts. First, the memory side
is unambiguous and domain-independent: on the Hidden Manifold Model, on real
VDJdb triples and on natural-language intent data alike, the network stores an
exponential number of structured, surjective associations and recalls them
robustly, with basins that match the idealised i.i.d.\ theory even when the
patterns are correlated and biologically generated, and the two receptor chains
name their epitope essentially without error. Second, the classifier side,
routing an unseen cue to the right target, is real but bounded, and its size
is governed by the encoding rather than by the storage rule: modest on receptor
sequences, five times larger on paraphrased text, in both cases growing as the
manifold is sampled densely enough to approach the memory itself. This is
consistent with, and sharpens, recent findings that data geometry controls
whether modern Hopfield networks generalise~\cite{Negri2023,Kalaj2024}: in the
binary exponential model, memorisation is exponential and essentially free,
while generalisation is bought from the geometry of the representation.

The asymmetry is worth stating as a design fact rather than as a shortcoming. A
high-capacity associative memory of this family is built first of all to
remember, that is to serve as a massive content-addressable repository, and this
one does so exceptionally well, while extending the auto-associative scenario
to the more realistic hetero-associative one, where cue and target are distinct
objects. With $N$ binary neurons per layer there are $2^{N}$ configurations, and
a device that files an exponential fraction of them under their own content is
not the sort of object from which strong extrapolation should be expected. How
much it also generalises is a question about the encoder, not about the storage
rule.

\paragraph{Domain universality, and where generalisation comes from}
Nothing in the construction is specific to a domain: by
Remark~\ref{rem:surjective} any single-valued, surjective, many-to-one map is a
candidate, and nothing else is. Section~\ref{sec:clinc} pushes that claim as far
as we can: a natural-language corpus shares with a T-cell repertoire no
alphabet, no metric and no generative model, only the surjective structure, and
on it the same $L=2$ closed forms describe capacity and basins with not one
refitted constant. The classifier side, however, does not follow the mechanism.
Generalisation to unseen utterances of a seen intent reaches $0.58$ against a
memorisation of $0.72$; on receptors the corresponding figures are an order of
magnitude smaller. Paraphrases of one intent share words, so TF--IDF places them
close together and a single stored utterance already carves a basin that catches
many others; distinct receptors of one epitope share little primary sequence, so
the biophysical encoder scatters them and only dense sampling of the epitope's
receptor cloud builds a basin. What generalises is not the memory but the
encoder's ability to co-locate cues that share a target: a property of the
representation, not of the storage rule.

The surjective many-to-one principle travels further still: a continual,
privacy-preserving setting where clients contribute low-rank Hebbian updates
towards shared archetypes~\cite{AlessandrelliFederated2026}, and a
continuous-signal domain --multi-channel sleep polysomnography-- encoded through
the same PCA-whitening-plus-SimHash pipeline of \ref{app:vdjdb_protocol} into a
tri-layer hetero-associative memory~\cite{Ladiana2026FiniteSize}, both point the
same way: the encoder, not the domain, is what the exponential mechanism cares
about.

\paragraph{The single-layer sibling}
\ref{app:squared_overlap} analyses a $\mathbb{Z}_2$-symmetric,
squared-overlap variant at $L=1$, whose energy penalises $1-m_\mu^2$ rather than
$1-m_\mu$. It restores the global spin-flip symmetry the linear model breaks,
requires the same saddle-point treatment as the multilayer noise, and has a
storage rate $\rho\approx0.6928$ that sits closer to the absolute ceiling $\log2$
than the linear model's $\rho_{\mathrm{lin}}\approx0.6750$, a clean example of
how the shape of the exponent trades basin width against capacity. Its saddle
coincides with the $L=3$ specialisation of the multilayer analysis, a structural
correspondence we find suggestive.

\paragraph{Limitations and outlook}
The analysis is one-step and zero-temperature: it certifies fixed points and
one-update recovery, not the full multi-step relaxation or a finite-temperature
free-energy landscape, and the Gaussian signal-to-noise scheme is an
approximation whose corrections we have bounded but not resummed. The
independence of layer datasets is an idealisation; the experiments quantify its
violation but a theory of the correlated case --where a shared latent couples
the layers, as in the data-- remains open, and is the natural bridge to the
random-features and hidden-manifold Hopfield
programme~\cite{Goldt2020,Negri2023,Kalaj2024}. Two further questions we set
aside here are pursued elsewhere. First, nothing in Sections~\ref{sec:model}--\ref{sec:basins}
rules out a \emph{chimeric} fixed point in which distinct layers lock onto
different pattern indices; a companion architecture removes such states by
construction through a consensus mechanism over the shared
memory~\cite{AgliariThermodynamicBinding2026}, rather than by analysing their
basin of attraction within the present energy. Second, the noise floor
$K_Le^{-N\rho_L}$ that limits capacity here is a property of the raw Hebbian
kernel; an off-line \emph{dreaming} \cite{fachechi2019dreaming,agliari2019dreaming} step that reweights that kernel's
eigenmodes before retrieval improves memorisation and can disentangle mixture
states in the $L$-layer setting~\cite{BarraDreaming2026}, suggesting that the
rate $\rho_L$ of Section~\ref{sec:storage} is a property of the construction as
given, not a hard ceiling on what a hetero-associative energy of this family
can achieve. A further limitation, easily
missed because the storage rate $\rho_L$ reads as an unqualified gain, is that
it is bought at an equal and exponential price: one parallel sweep of
Algorithm~\ref{algo:hetero} costs $\Theta(NLP)$ in both time and memory
(Section~\ref{sec:dynamics}), so operating at any fixed fraction of $P_c\sim
e^{N\rho_L}$ costs $\Theta(NL\,e^{N\rho_L})$, the identical rate that
measures capacity. Every architecture in this exponential family inherits the
same trade, a useful reminder that the capacity theorem describes the fixed
points of an $N\to\infty$ limit rather than any regime a real machine can
occupy; \ref{app:cost} gives the arithmetic. Finally, the bounded-generalisation
finding invites a sharper question than we have answered: is there a principled
modification of the exponent, or of the encoding, that converts some of the
exponential memorisation budget into generalisation, without collapsing to the
trivial dense-sampling limit? We believe the tools assembled here, theory and
battery together, are the right place to ask it.

\section*{Data and code availability}
The simulation engine used to generate the Monte Carlo results reported
throughout this paper is available at
\url{https://github.com/andrea-ladiana/exponential-lam-engine/}.

\section*{Declaration of competing interest}
The authors declare that they have no known competing financial interests or
personal relationships that could have appeared to influence the work
reported in this paper.

\section*{Acknowledgements}\noindent
E.A. acknowledges Sapienza Università di Roma (RM124190CB1269EB)
\\
\noindent
A.B. acknowledges support from Sapienza University of Rome, Prot. n. RM12519999AB8CA9, {\em Neural Networks and Learning Machines: asymptotic behaviors on structured datasets}, and acknowledges support from INFN, Sezione di Roma 1.
\\
\noindent
E.A, A.B., A.Ladiana and A.Lepre are members of the GNFM group within INdAM which is acknowledged. 

\medskip
\noindent
The authors acknowledge the use of the Lagrange Multi-GPU Server at the Department of Mathematics, Sapienza University of Rome, for computational resources supporting this work.

\newgeometry{left=3cm,right=3cm,top=2.5cm,bottom=2cm}
\onecolumn
\appendix
\section{First and second moments at the recalled state}
\label{app:moments}

This appendix collects the detailed computations summarised in \S\ref{sec:stability}. Throughout we work at the trial state~\eqref{eq:trial} and we use the relabelling $u_j^{(\mu,c)}:=\xi_j^{\mu,c}\xi_j^{1,c}\in\{-1,+1\}$ for $\mu\neq 1$, which is, for fixed $\mu\neq 1$, an i.i.d.\ Rademacher family across both $j\neq i$ and $c=1,\dots,L$ by independence of the layer-specific datasets.

\subsection*{Self-signal ($\mu=1$)}
\label{app:mu1}

For $\mu=1$ at the trial state, $\hat m_1^{a}=(N-1)/N$ for every $a$, so
\begin{align}
  F_1^{a} &\;=\;\sum_{b\neq a}\hat m_1^{b} \;=\; (L-1)+\mathcal{O}(N^{-1}),\\
  \hat E_1 &\;=\; N\binom{L}{2}\Bigl(\tfrac{N-1}{N}\Bigr)^{\!2}-N\binom{L}{2}
            \;=\; -L(L-1)+\mathcal{O}(N^{-1}).
\end{align}
The on-site factor is deterministic, since $\xi_i^{1,c}\sigma_i^{c}=(\xi_i^{1,c})^{2}=1$,
\begin{equation}
  \Phi_1^{(\backslash a)}
  \;=\;
  \exp\!\Bigl(\sum_{c\neq a}F_1^{c}\Bigr)
  \;=\;
  e^{(L-1)^{2}}\bigl(1+\mathcal{O}(N^{-1})\bigr).
\end{equation}
Combining,
\begin{equation}
  X_i^{(1|a)}
  \;=\;
  (\xi_i^{1,a})^{2}\,e^{\hat E_1}\sinh(F_1^{a})\,\Phi_1^{(\backslash a)}
  \;=\;
  e^{-L(L-1)}\sinh(L-1)\,e^{(L-1)^{2}}+\mathcal{O}(N^{-1}).
\end{equation}
Using $(L-1)^{2}-L(L-1)=-(L-1)$,
\begin{equation}
  X_i^{(1|a)}
  \;=\;
  e^{-(L-1)}\sinh(L-1)+\mathcal{O}(N^{-1}).
  \label{eq:Xself}
\end{equation}

\subsection*{Direct verification by single flip}
\label{app:flip-check}

At the trial state, $E_1=N(\binom{L}{2}\cdot 1-\binom{L}{2})=0$. Flipping $\sigma_i^{a}\to-\xi_i^{1,a}$ shifts $m_1^{a}\to 1-2/N$ while leaving every other Mattis magnetisation unchanged, so the new exponent of the recalled term is
\begin{equation}
  E_1' \;=\;
  N\Bigl[\binom{L-1}{2}\cdot 1+(L-1)\bigl(1-\tfrac{2}{N}\bigr)\Bigr]-N\binom{L}{2}
  \;=\; -2(L-1).
\end{equation}
The contribution from the noise patterns is exponentially suppressed by~\eqref{eq:Xmu-sq-main}, so $\Delta E_i^{a}=N(1-e^{-2(L-1)})$. Equating with $2N\,h_i^{a}\,\xi_i^{1,a}$,
\begin{equation}
  \xi_i^{1,a}\,h_i^{a}
  \;=\;\frac{1-e^{-2(L-1)}}{2}
  \;=\; e^{-(L-1)}\sinh(L-1),
\end{equation}
in agreement with~\eqref{eq:Xself}.

\subsection*{Noise contribution to $\mu_1$}
\label{app:noise-mean}

For $\mu\neq 1$ the cavity magnetisations $\hat m_\mu^{c}=N^{-1}\sum_{j\neq i}u_j^{(\mu,c)}$ are sums of $(N-1)$ i.i.d.\ Rademacher variables, so
\begin{equation}
  \hat m_\mu^{c}=\mathcal{O}(N^{-1/2}),\quad
  F_\mu^{c}=\mathcal{O}(N^{-1/2}),\quad
  \hat E_\mu=-N\binom{L}{2}+\mathcal{O}(1).
\end{equation}
Using $\sinh(\xi_i^{\mu,a}F_\mu^{a})=\xi_i^{\mu,a}\sinh(F_\mu^{a})$,
\begin{equation}
  X_i^{(\mu|a)}
  \;=\;
  \xi_i^{1,a}\,\xi_i^{\mu,a}\,e^{\hat E_\mu}\sinh(F_\mu^{a})\,\Phi_\mu^{(\backslash a)}.
  \label{eq:Xmu}
\end{equation}
The on-site factor $\xi_i^{1,a}$ is independent of every other random object in~\eqref{eq:Xmu}: of $\xi_i^{\mu,a}$ by independence of layer-$a$ patterns at the same site; of $\hat m_\mu^{c}$ for any $c$ ad $\mu$, since  $\mu \not = 1$; and of $\Phi_\mu^{(\backslash a)}$, which involves $\xi_i^{\mu,c}$ only for $c\neq a$ and these are independent of $\xi_i^{\mu,a}$ because the datasets are independent across layers. Hence $\E[X_i^{(\mu|a)}]=\E[\xi_i^{1,a}]\cdot\E[\cdots]=0$, exactly. Combining with~\eqref{eq:Xself} produces~\eqref{eq:mu1}.

\subsection*{Off-diagonal contributions to $\mu_2$}
\label{app:offdiag}

\paragraph{Case $\mu=1$, $\nu\neq 1$}
$X_i^{(1|a)}$ is deterministic at leading order, so
\begin{equation}
  \E\bigl[X_i^{(1|a)}X_i^{(\nu|a)}\bigr]
  \;=\; X_i^{(1|a)}\cdot\E[X_i^{(\nu|a)}] \;=\; 0.
\end{equation}

\paragraph{Case $\mu\neq\nu$, both $\neq 1$}
Substituting~\eqref{eq:Xmu} for both factors,
\begin{equation}
  X_i^{(\mu|a)}X_i^{(\nu|a)}
  \;=\;
  (\xi_i^{1,a})^{2}\,\xi_i^{\mu,a}\xi_i^{\nu,a}\,
  e^{\hat E_\mu+\hat E_\nu}\sinh(F_\mu^{a})\sinh(F_\nu^{a})\,
  \Phi_\mu^{(\backslash a)}\Phi_\nu^{(\backslash a)}.
\end{equation}
The on-site product $\xi_i^{\mu,a}\xi_i^{\nu,a}$ is independent of every other quantity in the expression, by the same independence argument as in \ref{app:noise-mean}, and has zero expectation. Hence the off-diagonal contribution vanishes exactly.

\subsection*{On-site square average $\E[(\Phi_\mu^{(\backslash a)})^{2}]$}
\label{app:Phi-sq}

For $\mu\neq 1$,
\begin{equation}
  (\Phi_\mu^{(\backslash a)})^{2}
  \;=\;
  \exp\!\Bigl(2\sum_{c\neq a}\xi_i^{\mu,c}\sigma_i^{c}F_\mu^{c}\Bigr).
\end{equation}
At the trial state, $\sigma_i^{c}=\xi_i^{1,c}$, so $\xi_i^{\mu,c}\sigma_i^{c}=\xi_i^{\mu,c}\xi_i^{1,c}\in\{-1,+1\}$. Layer-$c$ on-site factors are mutually independent across $c\neq a$ because the datasets are layer-independent; averaging factor by factor and using $\E_{u}[e^{xu}]=\cosh x$ for $u\in\{-1,+1\}$ Rademacher,
\begin{equation}
  \E\bigl[(\Phi_\mu^{(\backslash a)})^{2}\bigr]
  \;=\;
  \prod_{c\neq a}\cosh\!\bigl(2F_\mu^{c}\bigr).
  \label{eq:Phi-sq}
\end{equation}

\subsection*{Berry--Esseen control of the Gaussian approximation}
\label{app:berry-esseen}

Section~\ref{sec:stability} approximates $X_i^{a}$ by a Gaussian via the Central
Limit Theorem applied to the sum of $P-1$ noise contributions
$X_i^{(\mu|a)}$, $\mu\neq1$, which are i.i.d.\ across $\mu$ at fixed $N$ because
the layer datasets are mutually independent (\S\ref{sec:model}). We make the
rate of this approximation explicit.

\paragraph{A deterministic bound on the noise terms}
\begin{lemma}
For every $N$, every site $(i,a)$, every $\mu\neq1$ and every realisation of the
disorder,
\begin{equation}
  \bigl|X_i^{(\mu|a)}\bigr| \;\le\; M_L \;:=\; \sinh(L-1)\,e^{(L-1)^{2}}.
  \label{eq:ML-bound}
\end{equation}
\end{lemma}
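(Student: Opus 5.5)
The bound follows from a direct term-by-term estimate of the explicit form~\eqref{eq:Xmu} of the noise contribution. No probabilistic input is needed, only the fact that all cavity overlaps are confined to $[-1,1]$. Since $|\xi_i^{1,a}\xi_i^{\mu,a}|=1$, I would write
\begin{equation*}
  \bigl|X_i^{(\mu|a)}\bigr|
  \;=\;
  e^{\hat E_\mu}\,\bigl|\sinh(F_\mu^{a})\bigr|\,\Phi_\mu^{(\backslash a)},
\end{equation*}
using that $e^{\hat E_\mu}$ and $\Phi_\mu^{(\backslash a)}$ are positive. I would then bound the three factors separately, uniformly in the realisation, in $N$, in $i$ and in $\mu\neq1$.

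First, each cavity overlap satisfies $|\hat m_\mu^{c}|\le (N-1)/N\le 1$, being an average of $N-1$ signs divided by $N$. Hence every product obeys $\hat m_\mu^{a}\hat m_\mu^{b}\le 1$. Summing over the $\binom{L}{2}$ pairs gives $\hat E_\mu=N\sum_{a<b}(\hat m_\mu^{a}\hat m_\mu^{b}-1)\le 0$, so $e^{\hat E_\mu}\le 1$. Second, $|F_\mu^{c}|=|\sum_{d\neq c}\hat m_\mu^{d}|\le L-1$ for every layer $c$. Because $\sinh$ is odd and increasing, this gives $|\sinh(F_\mu^{a})|\le\sinh(L-1)$. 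Third, for the on-site factor in~\eqref{eq:cavity-quantities}, each exponent term obeys $\xi_i^{\mu,c}\sigma_i^{c}F_\mu^{c}\le |F_\mu^{c}|\le L-1$, and there are $L-1$ such terms ($c\neq a$). Therefore $\Phi_\mu^{(\backslash a)}\le e^{(L-1)^{2}}$. Multiplying the three bounds yields $M_L=\sinh(L-1)\,e^{(L-1)^{2}}$, as claimed in~\eqref{eq:ML-bound}.

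I do not expect a genuine obstacle here; the only point requiring care is what ``deterministic'' covers. The estimate must hold for every configuration $\bm\sigma$, not only at the trial state~\eqref{eq:trial}. This is so, because none of the three bounds used the relabelling $u_j^{(\mu,c)}$ or the value of $\sigma$; they used only $|\xi|=|\sigma|=1$. I would also remark that $X_i^{(\mu|a)}$ is taken to be exactly the expression~\eqref{eq:Xmu} built from the cavity quantities, so the $\mathcal{O}(N^{-1})$ remainder of~\eqref{eq:H-decomp} plays no role. Finally, the bound is far from tight in the typical regime, where $e^{\hat E_\mu}\approx e^{-N\binom{L}{2}}$. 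That slack is harmless, since the lemma is needed only as a uniform third-moment control for the Berry--Esseen step that follows.
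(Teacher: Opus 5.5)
Your proof is correct and follows the paper's argument. Both use the same three-factor bound, $e^{\hat E_\mu}\le1$, $|\sinh(F_\mu^{a})|\le\sinh(L-1)$ and $\Phi_\mu^{(\backslash a)}\le e^{(L-1)^{2}}$, each obtained from $|\hat m_\mu^{c}|\le1$. The only difference is in the first factor. Your termwise estimate $\hat m_\mu^{a}\hat m_\mu^{b}\le|\hat m_\mu^{a}|\,|\hat m_\mu^{b}|\le1$ is simpler than, and fully suffices in place of, the paper's step: it notes that the multilinear form $\sum_{c<d}m^{c}m^{d}$ attains its maximum over $[-1,1]^{L}$ at a vertex, where it is at most $\binom{L}{2}$. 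Your observation that the argument uses only $|\xi_i^{\mu,c}\sigma_i^{c}|=1$ is also right, so the bound holds at any configuration, including the corrupted-cue state of the basin analysis.
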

\begin{proof}
Write $X_i^{(\mu|a)}=\xi_i^{1,a}e^{\hat E_\mu}\sinh(F_\mu^{a})\Phi_\mu^{(\backslash a)}$.
Every cavity magnetisation is an empirical average of $\pm1$'s, so
$|\hat m_\mu^{b}|\le1$ and $|F_\mu^{c}|=|\sum_{d\neq c}\hat m_\mu^{d}|\le L-1$ for
every $c$; hence $|\sinh(F_\mu^{a})|\le\sinh(L-1)$ and, since
$\Phi_\mu^{(\backslash a)}=\exp\bigl(\sum_{c\neq a}u_i^{(\mu,c)}F_\mu^{c}\bigr)$
with $u_i^{(\mu,c)}=\pm1$, $|\log\Phi_\mu^{(\backslash a)}|\le\sum_{c\neq a}|F_\mu^{c}|\le(L-1)^{2}$,
so $\Phi_\mu^{(\backslash a)}\le e^{(L-1)^{2}}$. Finally
$\hat E_\mu=N[\Theta(\hat{\bm m}_\mu)-\binom{L}{2}]$ with
$\Theta(\bm m):=\sum_{c<d}m^{c}m^{d}$ affine in each coordinate separately on
$[-1,1]^{L}$; a function affine in each coordinate attains its extrema at a
vertex of the box, and among the $2^{L}$ vertices
$\bm\varepsilon\in\{-1,+1\}^{L}$,
$\Theta(\bm\varepsilon)=\tfrac12\bigl[(\sum_c\varepsilon_c)^{2}-L\bigr]\le\tfrac12(L^{2}-L)=\binom{L}{2}$,
with equality iff all $\varepsilon_c$ agree in sign. Hence $\Theta\le\binom{L}{2}$
everywhere on the box, so $\hat E_\mu\le0$ and $e^{\hat E_\mu}\le1$. Multiplying
the three bounds (with $|\xi_i^{1,a}|=1$) gives the claim.
\end{proof}

\paragraph{Finite third moment and the Berry--Esseen bound}
Since $|X_i^{(\mu|a)}|^{3}=|X_i^{(\mu|a)}|\cdot(X_i^{(\mu|a)})^{2}\le M_L\,(X_i^{(\mu|a)})^{2}$,
taking expectations and using~\eqref{eq:Xmu-sq-main},
\begin{equation}
  \rho_3 \;:=\; \E\bigl|X_i^{(\mu|a)}\bigr|^{3}
  \;\le\; M_L\,\E\bigl[(X_i^{(\mu|a)})^{2}\bigr]
  \;=\; M_L\,K_L\,e^{-N\rho_L}\bigl(1+o(1)\bigr).
  \label{eq:rho3-bound}
\end{equation}
The $P-1$ noise terms are i.i.d.\ at fixed $N$, mean zero, with variance
$\sigma_1^{2}:=K_L e^{-N\rho_L}(1+o(1))$ and third absolute moment bounded
by~\eqref{eq:rho3-bound}; the Berry--Esseen theorem (sharp constant
$C\le0.4748$) then gives
\begin{equation}
  \sup_x\Bigl|\P\Bigl(\tfrac{X_i^{a}-\mu_1}{\sigma}\le x\Bigr)-\Phi(x)\Bigr|
  \;\le\; C\,\frac{\rho_3}{\sigma_1^{3}\sqrt{P-1}}
  \;\le\; \frac{C\,M_L}{\sigma_1\sqrt{P-1}},
  \qquad \sigma^{2}=(P-1)\sigma_1^{2}.
  \label{eq:berry-esseen}
\end{equation}
The constant $C$ is universal: it depends on neither $N$, $L$, $P$ nor the law
of the noise terms, which is the precise content of ``uniform in $N$''
invoked in \S\ref{sec:stability}.

\begin{remark}[Scope of the bound]
The right-hand side of~\eqref{eq:berry-esseen} is informative ($\to0$) once
$P-1\gg M_L^{2}/(K_Le^{-N\rho_L})$, i.e.\ once $P$ exceeds a constant multiple
of $e^{N\rho_L}$ -- at or beyond the critical load itself. The elementary
bound~\eqref{eq:ML-bound} is therefore silent on the sub-critical retrieval
regime $P\ll P_c$ that the Monte Carlo battery of
Sections~\ref{sec:storage}--\ref{sec:clinc} actually probes: it certifies that
the CLT approximation is meaningful with a rate once the load is comparable
to or above capacity, not that the approximation is loose below it (the
extensive numerical agreement there is evidence of the latter, not a proof of
it). Sharpening~\eqref{eq:rho3-bound} into a bound that also vanishes deep in
the sub-critical regime would require the exact third moment, rather than the
crude bound $M_L$, through the same large-deviation machinery as
\ref{app:saddle}; we leave this refinement open.
\end{remark}

\section{Saddle-point analysis: rate, prefactor, fluctuations}
\label{app:saddle}

The diagonal noise contribution~\eqref{eq:noise-sq-main} requires the asymptotic evaluation, in $N$, of
\begin{equation}
  \mathcal{I}_N \;:=\; \E\!\Bigl[\,e^{2\hat E_\mu}\sinh^{2}(F_\mu^{a})\prod_{c\neq a}\cosh(2F_\mu^{c})\,\Bigr],
  \qquad \mu\neq 1,
  \label{eq:IN}
\end{equation}
the expectation running over the cavity magnetisations $\bm{\hat m}_\mu=(\hat m_\mu^{1},\dots,\hat m_\mu^{L})\in[-1,1]^{L}$. The route taken here --cavity fields, large deviations, Gaussian fluctuations around a saddle-- is standard statistical-mechanics toolkit~\cite{MPV1987,Coolen2005}, applied to a case (a product, rather than a sum, of per-layer overlaps in the exponent) where it does not reduce to a closed form. We carry out the analysis in three steps: large-deviation reduction to a variational problem ; identification of the symmetric saddle and proof of its uniqueness ; Gaussian fluctuation expansion and explicit evaluation of the prefactor $K_L$ .

\subsection*{Large-deviation reduction}
\label{app:saddle-LDP}

By Cramér's theorem, the empirical magnetisation of a single layer satisfies a large-deviation principle on $[-1,1]$ with rate function
\begin{equation}
  I_R(m) \;=\; \tfrac{1+m}{2}\log(1+m)+\tfrac{1-m}{2}\log(1-m),
  \label{eq:Cramer}
\end{equation}
the Cramér transform of the symmetric Bernoulli distribution. Independence across layers gives the joint rate
\begin{equation}
  I(\bm m) \;=\; \sum_{c=1}^{L} I_R(m^{c}),
  \qquad \bm m\in[-1,1]^{L}.
  \label{eq:joint-rate}
\end{equation}
Substituting $2\hat E_\mu = 2N\sum_{c<d}\hat m_\mu^{c}\hat m_\mu^{d}-2N\binom{L}{2}$ in~\eqref{eq:IN} and applying the following lemma:
\begin{lemma}[Varadhan's Lemma]
Let $\mathcal{X}$ be a regular topological space and let $(P_N)_{N \in \mathbb{N}}$ be a sequence of probability measures on $\mathcal{X}$ satisfying a Large Deviation Principle with rate function $I: \mathcal{X} \to [0, \infty]$. 

Furthermore, let $F: \mathcal{X} \to \mathbb{R}$ be a continuous function bounded from above. Then, the following limit holds:
$$
\lim_{N \to \infty} \frac{1}{N} \log \mathbb{E}_{P_N} \left[ e^{N F(X)} \right] = \sup_{x \in \mathcal{X}} \left[ F(x) - I(x) \right].
$$
\end{lemma}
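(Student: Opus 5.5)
The plan is to prove the two matching half-inequalities,
\[
\liminf_N \tfrac1N\log \E_{P_N}[e^{NF}]\ge \sup_x[F-I]
\qquad\text{and}\qquad
\limsup_N \tfrac1N\log \E_{P_N}[e^{NF}]\le \sup_x[F-I],
\]
using only the LDP lower bound (open sets) and upper bound (closed sets). Write $\Lambda:=\sup_x[F(x)-I(x)]$. This is finite from above because $F$ is bounded above and $I\ge0$. It may be $-\infty$ only if $I\equiv\infty$, which is impossible since $P_N$ are probability measures.

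\emph{Lower bound.} Fix $x$ with $I(x)<\infty$ and $\delta>0$. By continuity of $F$, the set $G=\{y:F(y)>F(x)-\delta\}$ is an open neighbourhood of $x$. Then
\[
\E[e^{NF}]\ge e^{N(F(x)-\delta)}P_N(G),
\]
and the LDP lower bound gives
\[
\liminf_N \tfrac1N\log P_N(G)\ge -\inf_G I\ge -I(x).
\]
Letting $\delta\to0$ and taking the supremum over $x$ yields the $\liminf$ inequality. This direction uses neither regularity of $\mathcal X$ nor goodness of $I$.

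\emph{Upper bound.} Let $M:=\sup F<\infty$ and fix a level $\alpha<\Lambda$ (if $\Lambda=-\infty$, take any $\alpha$). Split the expectation over the superlevel sets of $F$. For the region $\{F\le \alpha\}$ the contribution is trivially at most $e^{N\alpha}$. For the remaining range $[\alpha,M]$, partition it into finitely many intervals of length $\delta$, with endpoints $\alpha=t_0<t_1<\dots<t_k=M$. Each set $C_j=\{t_{j-1}\le F\le t_j\}$ is closed by continuity. Its contribution is at most $e^{Nt_j}P_N(C_j)$, and the LDP upper bound gives
\[
\limsup_N \tfrac1N\log P_N(C_j)\le -\inf_{C_j}I.
\]
On $C_j$ we have $F\ge t_{j-1}$, so
\[
-\inf_{C_j}I \le \sup_{C_j}(F-I)-t_{j-1}\le \Lambda - t_{j-1}.
\]
Hence each piece has exponential rate at most $t_j+\Lambda-t_{j-1}=\Lambda+\delta$. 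The principle of the largest term, $\limsup\frac1N\log\sum_{j\le k}a_N^{(j)}=\max_j\limsup\frac1N\log a_N^{(j)}$ for finitely many sequences, then gives
\[
\limsup_N \tfrac1N\log \E[e^{NF}]\le\max(\alpha,\Lambda+\delta).
\]
Letting $\delta\to0$ completes the proof.

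\emph{Main obstacle.} The delicate step is controlling the region where $F$ is very negative, which is unbounded below. The fixed cutoff $\alpha$ handles it, but only because $F$ is bounded above, so the finite partition covers the whole remaining range $[\alpha,M]$. Without boundedness above one would need a tail (exponential tightness or moment) condition instead. I expect the only care needed is checking that the $\max(\alpha,\cdot)$ term is harmless, which follows by choosing $\alpha<\Lambda$. In the paper's application, $\mathcal X=[-1,1]^L$ is compact and $F$ is a polynomial in $\bm m$, so all hypotheses are immediate. The $\sinh^2\cosh$ prefactors are subexponential and do not alter the rate.
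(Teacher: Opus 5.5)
Your proof is correct, and it covers the lemma in exactly the generality stated. The paper gives no proof of its own: it quotes the result from Varadhan and from Dembo--Zeitouni and uses it as a black box. The meaningful comparison is therefore with the textbook argument (Dembo--Zeitouni, Section~4.3).

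Your lower bound is the standard one. Your upper bound takes a genuinely different route.

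\emph{Textbook route.} The level set $\{I\le\alpha\}$ is covered by finitely many neighbourhoods on whose closures $F$ and $I$ vary by at most $\delta$. Extracting the finite subcover requires $I$ to be a \emph{good} rate function. Choosing neighbourhoods with controlled closures is where regularity of $\mathcal X$ enters. The case of $F$ unbounded above is then handled by a tail or moment condition.

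\emph{Your route.} You partition the \emph{range} $[\alpha,\sup F]$ into finitely many closed bands $C_j=\{t_{j-1}\le F\le t_j\}$. You bound $-\inf_{C_j}I\le\Lambda-t_{j-1}$ directly, and discard $\{F\le\alpha\}$ with the trivial bound $e^{N\alpha}$. Boundedness from above is exactly what makes the partition finite.

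What your route buys is independence from both goodness of $I$ and regularity of $\mathcal X$. This is not cosmetic. The statement as written does not require $I$ to be good, so a literal appeal to the cited Dembo--Zeitouni theorem, which assumes a good rate function, would not cover it. Your argument does. In the paper's application the point is moot, since $[-1,1]^L$ is compact and the Cram\'er rate is good.

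The textbook route, in exchange, reaches beyond these hypotheses. It handles merely semicontinuous $F$, and $F$ unbounded above under a tail condition.

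One small precision on your closing remark about the application. The insertion $\sinh^2(F^a)$ vanishes where $F^a=0$, so ``subexponential'' is not by itself the reason it leaves the rate unchanged. The upper bound only needs the insertions to be bounded. The lower bound needs them bounded away from zero in a neighbourhood of the maximiser $\pm m^\ast\bm 1$, which holds because $m^\ast\neq0$.
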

The polynomial factors $\sinh^{2}$, $\cosh$ contribute only to the sub-exponential prefactor — the leading exponential rate is
\begin{equation}
  \frac{1}{N}\log\mathcal{I}_N
  \;\xrightarrow[N\to\infty]{}\;
  -2\binom{L}{2}+\sup_{\bm m\in[-1,1]^{L}}\Psi(\bm m),
  \qquad
  \Psi(\bm m) \;:=\; 2\sum_{c<d}m^{c}m^{d}-I(\bm m).
  \label{eq:rate-formula}
\end{equation}

\subsection*{Symmetric saddle and uniqueness}
\label{app:saddle-symmetric}
The functional $\Psi$ is invariant under permutations of the layers and even in each $m^{c}$. Its first-order conditions read $2\sum_{d\neq c}m^{d}=\tanh^{-1}(m^{c})$ for $c=1,\dots,L$. Subtracting two of these equations,
\begin{equation}
  2(m^{c'}-m^{c}) \;=\; \tanh^{-1}(m^{c})-\tanh^{-1}(m^{c'}),
\end{equation}
which, combined with the strict monotonicity of $\tanh^{-1}$ on $(-1,1)$, forces $m^{c}=m^{c'}$ for every pair $(c,c')$. Every interior stationary point therefore lies on the symmetric ray $m^{c}\equiv m^{*}$. On that ray, $\Psi$ reduces to
\begin{equation}
  \Lambda_L(m) \;:=\; 2\binom{L}{2}m^{2}-L\,I_R(m)
                \;=\; L(L-1)\,m^{2}-L\,I_R(m),
  \label{eq:Lambda}
\end{equation}
whose stationarity condition reads
\begin{equation}
  \tanh^{-1}(m^{*}) \;=\; 2(L-1)\,m^{*}
  \quad\Longleftrightarrow\quad
  m^{*} \;=\; \tanh\!\bigl(2(L-1)\,m^{*}\bigr).
  \label{eq:saddle-eq}
\end{equation}
For $L\ge 2$, equation~\eqref{eq:saddle-eq} has the trivial root $m^{*}=0$ (a local minimum of $\Lambda_L$) and two non-trivial symmetric roots $\pm m^{*}_{0}\neq 0$, which realise the supremum by continuity of $\Lambda_L$ on $[-1,1]$ and by $\Lambda_L(\pm 1)=-\infty$. We pick the positive root $m^{*}=m^{*}_{0}>0$.

With the change of variables $x=2(L-1)m$, equation~\eqref{eq:saddle-eq} becomes $\tanh(x^{*})=x^{*}/[2(L-1)]$, the stationarity condition of
\begin{equation}
  \phi_L(x) \;:=\; -\frac{x^{2}}{4(L-1)}+\log\cosh(x).
  \label{eq:phi}
\end{equation}
The supremum~\eqref{eq:rate-formula} on the symmetric ray equals $L\,\phi_L(x^{*})$. Indeed, set $g(m):=(L-1)m^{2}-I_R(m)$; with $x=\tanh^{-1}(m)$ one has $I_R(m)=mx-\log\cosh(x)$, whence
\begin{equation}
  g(\tanh(x))
  \;=\;
  (L-1)\tanh^{2}(x)-x\tanh(x)+\log\cosh(x).
\end{equation}
At the saddle, $\tanh(x^{*})=x^{*}/[2(L-1)]$ gives $(L-1)\tanh^{2}(x^{*})=x^{*\,2}/[4(L-1)]$ and $x^{*}\tanh(x^{*})=x^{*\,2}/[2(L-1)]$, so that
\begin{equation}
  g(\tanh(x^{*}))
  \;=\;
  -\frac{x^{*\,2}}{4(L-1)}+\log\cosh(x^{*})
  \;=\;\phi_L(x^{*}).
  \label{eq:sup-phi}
\end{equation}
Combining~\eqref{eq:rate-formula} and~\eqref{eq:sup-phi},
\begin{equation}
  \frac{1}{N}\log\mathcal{I}_N
  \;\xrightarrow[N\to\infty]{}\;
  -L(L-1)+L\,\phi_L(x^{*}),
  \label{eq:rate-final}
\end{equation}
which defines the noise rate
\begin{equation}
  \rho_L \;:=\; L\bigl[(L-1)-\phi_L(x^{*})\bigr],
  \label{eq:rho-app}
\end{equation}
in agreement with~\eqref{eq:rho-main}.

\subsection*{Gaussian fluctuations and prefactor $K_L$}
\label{app:saddle-prefactor}

Having identified the saddle $\bm{m} = m^{*}\bm{1}$ and the exponential rate
$\rho_L$, we now extract the polynomial prefactor $K_L$ by performing a
systematic Gaussian expansion of $\mathcal{I}_N$ around that saddle.

Since each cavity magnetisation $\hat m_\mu^{c}$ is an empirical average of
$N-1$ i.i.d.\ Rademacher variables, it fluctuates on the scale $N^{-1/2}$
around any deterministic value. We therefore introduce rescaled deviations
from the saddle,
\begin{equation}
  \hat m_\mu^{c} \;=\; m^{*} + \frac{y_c}{\sqrt{N}},
  \qquad y_c \in \mathbb{R}, \quad c = 1,\dots,L,
  \label{eq:rescaled-dev}
\end{equation}
and expand the action $N\Psi(\bm{\hat m}_\mu)$ in powers of $N^{-1/2}$.
Because $m^{*}\bm{1}$ is a stationary point of $\Psi$, the linear term in
$\bm{y}$ vanishes identically, and the expansion reads
\begin{equation}
  N\,\Psi(\bm{\hat m}_\mu)
  \;=\;
  N\,\Psi(m^{*}\bm{1})
  \;-\;
  \frac{1}{2}\,\bm{y}^{\mathsf{T}}\mathcal{H}_L^{*}\,\bm{y}
  \;+\;
  \mathcal{O}(N^{-1/2}),
  \label{eq:action-expansion}
\end{equation}
where $\mathcal{H}_L^{*}$ denotes the negative of the Hessian of $\Psi$
evaluated at $m^{*}\bm{1}$. To compute its entries we differentiate
$\Psi(\bm m) = 2\sum_{c<d}m^c m^d - \sum_c I_R(m^c)$ twice: the diagonal
entries receive no contribution from the bilinear term and give
$-\partial_{m^c}^2 I_R(m^c)\big|_{m^*} = \frac{1}{1-m^{*\,2}}$, using
$I_R''(m)=(1-m^2)^{-1}$; the off-diagonal entries come entirely from the
bilinear term and equal $-2\cdot(-1)=2\cdot(-1)$... more precisely, the
Hessian of $-\Psi$ at off-diagonal positions is $-2$, so altogether
\begin{equation}
  (\mathcal{H}_L^{*})_{cd}
  \;=\;
  \frac{1}{1-m^{*\,2}}\,\delta_{cd}
  \;-\;
  2\,(1-\delta_{cd}).
  \label{eq:Hessian-entries}
\end{equation}
Writing $\bm{J}=\bm{1}\bm{1}^{\mathsf{T}}$ for the $L\times L$ all-ones matrix,
this is equivalently $\mathcal{H}_L^{*}=\bigl(\frac{1}{1-m^{*\,2}}+2\bigr)I_L - 2\bm{J}$,
a rank-one update of a scalar multiple of the identity whose spectrum is
immediate: the symmetric eigenvector $\bm{1}/\sqrt{L}$ has eigenvalue
$\lambda_{\parallel}=\frac{1}{1-m^{*\,2}}-2(L-1)$, while every vector in
$\bm{1}^{\perp}$ has eigenvalue $\lambda_{\perp}=\frac{1}{1-m^{*\,2}}+2$.
The latter is manifestly positive; positivity of $\lambda_{\parallel}$ follows
from the geometry of the non-trivial fixed point: at $m^{*}=\tanh(2(L-1)m^{*})$
the slope of $\tanh$ satisfies $1-m^{*\,2}=\mathrm{sech}^2(2(L-1)m^*)
< \frac{1}{2(L-1)}$, since the non-trivial crossing occurs where the curve
$\tanh(x)$ is already less steep than the line $x/[2(L-1)]$. Hence
$\mathcal{H}_L^{*}$ is positive definite and the Gaussian integral below
converges.

We now turn to the polynomial insertions. At $\bm{\hat m}_\mu = m^{*}\bm{1}$
the cavity field of every layer equals $F_\mu^c = (L-1)m^{*}$, so
\begin{equation}
  \sinh^{2}(F_\mu^{a})\Big|_{m^*\bm{1}}
  =
  \sinh^{2}\!\bigl((L-1)m^{*}\bigr),
  \qquad
  \prod_{c\neq a}\cosh\!\bigl(2F_\mu^{c}\bigr)\Big|_{m^*\bm{1}}
  =
  \cosh\!\bigl(2(L-1)m^{*}\bigr)^{L-1}.
  \label{eq:poly-at-saddle}
\end{equation}
Being smooth functions of $\bm{\hat m}$, these factors deviate from their
saddle values only at order $N^{-1/2}$, and their fluctuations do not
contribute to the leading prefactor.

Collecting all ingredients, we substitute the Taylor
expansion~\eqref{eq:action-expansion} and the frozen
insertions~\eqref{eq:poly-at-saddle} into~\eqref{eq:IN}. By Cram\'{e}r's
theorem the joint density of $\bm{\hat m}_\mu$ is
$\propto e^{-NI(\bm{\hat m}_\mu)+o(N)}$, and the change of variables
$\hat m_\mu^c = m^* + y_c/\sqrt{N}$ (whose Jacobian $N^{-L/2}$ is absorbed
into the density normalisation) gives
\begin{align}
  \mathcal{I}_N
  &\;=\;
  \int\!\mathrm{d}\bm{\hat m}\;
  e^{N\Psi(\bm{\hat m})}\,
  \sinh^{2}(F^{a})\prod_{c\neq a}\cosh(2F^{c})\;
  (1+o(1))
  \notag\\
  &\;=\;
  e^{N\Psi(m^{*}\bm{1})}\,
  \sinh^{2}\!\bigl((L-1)m^{*}\bigr)\,
  \cosh\!\bigl(2(L-1)m^{*}\bigr)^{L-1}
  \int_{\mathbb{R}^L}\!\mathrm{d}\bm{y}\;
  e^{-\frac{1}{2}\bm{y}^{\mathsf{T}}\mathcal{H}_L^{*}\bm{y}}\;
  (1+o(1)).
  \label{eq:IN-assembled}
\end{align}
The standard Gaussian integral evaluates
to $(2\pi)^{L/2}/\sqrt{\det\mathcal{H}_L^{*}}$, where
$\det\mathcal{H}_L^{*}=\lambda_{\parallel}\,\lambda_{\perp}^{L-1}$ is the
product of the eigenvalues computed above. Recalling
from~\eqref{eq:rate-final} that $e^{N\Psi(m^{*}\bm{1})}=e^{-N\rho_L}(1+o(1))$,
we conclude
\begin{equation}
  \mathcal{I}_N \;=\; K_L\,e^{-N\rho_L}\,\bigl(1+o(1)\bigr),
\end{equation}
with the prefactor
\begin{equation}
  K_L \;=\;
  \frac{(2\pi)^{L/2}}{\sqrt{\det\mathcal{H}_L^{*}}}\,
  \sinh^{2}\!\bigl((L-1)m^{*}\bigr)\,
  \cosh\!\bigl(2(L-1)m^{*}\bigr)^{L-1},
  \label{eq:KL-final-app}
\end{equation}
in agreement with~\eqref{eq:KL-main} of the main text.

\subsection*{Numerical values and asymptotics}
\label{app:saddle-numbers}

\begin{table}[H]
\centering
\begin{tabular}{c|cccc}
\toprule
$L$ & $x^{*}$ & $m^{*}$ & $\phi_L(x^{*})$ & $\rho_L$ \\
\midrule
2  & 1.9150  & 0.9575    & 0.3265 & 1.3470 \\
3  & 3.9973  & 0.9993    & 1.3072 & 2.0784 \\
4  & 5.9999  & 0.99999   & 2.3069 & 2.7726 \\
5  & 8.0000  & $\approx 1$ & 3.3069 & 3.4657 \\
10 & 18.0000 & $\approx 1$ & 8.3069 & 6.9315 \\
\bottomrule
\end{tabular}
\caption{Saddle data and noise rate $\rho_L$ for moderate $L$.}
\label{tab:saddle}
\end{table}

For $L\to\infty$ the saddle drifts to $x^{*}\sim 2(L-1)$ and $m^{*}\to 1^{-}$, giving $\phi_L(x^{*})\sim (L-1)-\log 2$ and the asymptotic rate
\begin{equation}
  \rho_L \;\sim\; L\log 2,
  \qquad L\to\infty.
  \label{eq:asymptotics}
\end{equation}
The rate is positive for every $L\ge 2$, ensuring exponential suppression of the noise contribution at fixed pattern.

\paragraph{Numerical validation: saddle and rate scaling}
Figure~\ref{fig:doubling}(a) plots the variational functional $\phi_L(x)$ for $L\in\{2,3,4\}$, whose interior maximiser is the saddle $x^{*}$ of Table~\ref{tab:saddle}; the equivalent graphical solution of $\tanh(x^{*})=x^{*}/[2(L-1)]$ appears in the main-text Figure~\ref{fig:storage}(a), where the rapid drift of $x^{*}$ towards $2(L-1)$ is visible. Figure~\ref{fig:doubling}(b) verifies the scaling~\eqref{eq:asymptotics}: the rate $\rho_L$ tracks the asymptote $L\log 2$ from above while the Gaussian prefactor $K_L$ grows rapidly with the width $L$. The doubling identity~\eqref{eq:doubling-identity} at $L=2$, $\rho_2=2\rho_1^{\mathrm{Dem}}$ with $\rho_1^{\mathrm{Dem}}=1-\phi_2(x^{*})=0.6735$, is derived at the end of this appendix and marked in the inset of Figure~\ref{fig:storage}(b).

\begin{figure}[H]
  \centering
  \includegraphics[width=\linewidth]{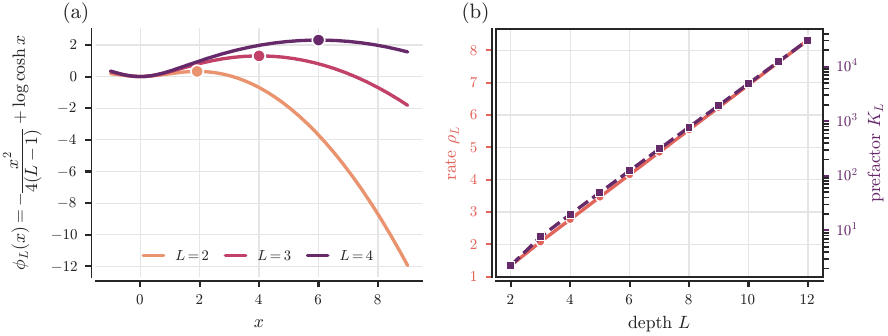}
  \caption{\textbf{Saddle landscape, rate and prefactor.}
  \textbf{(a)} The variational functional $\phi_L(x)=-x^{2}/[4(L-1)]+\log\cosh x$
  for $L=2,3,4$; markers locate the interior maximiser $x^{*}$ of
  Table~\ref{tab:saddle}.
  \textbf{(b)} The noise rate $\rho_L$ (left axis, linear, approaching $L\log2$)
  and the Gaussian fluctuation prefactor $K_L$ (right axis, logarithmic) against
  width $L$. Both are smooth and monotonically increasing; the two curves are
  drawn on independent scales, so where they appear to meet or separate (around
  $L=4$) reflects only the choice of axes, not any feature of $\rho_L$ or $K_L$.}
  \label{fig:doubling}
\end{figure}

\subsection*{Doubling identity for $L=2$}
\label{app:doubling}

For $L=2$ the saddle equation~\eqref{eq:phi} becomes $\tanh(x)=x/2$, with $\phi_2(x)=-x^{2}/4+\log\cosh(x)$. This is exactly the saddle of the single-layer exponential Hopfield model, with rate $\rho_1^{\mathrm{Dem}}=1-\phi_2(x^{\ast})$. From~\eqref{eq:rho-main},
\begin{equation}
  \rho_2 \;=\; 2\bigl[1-\phi_2(x^{\ast})\bigr] \;=\; 2\,\rho_1^{\mathrm{Dem}},
  \label{eq:doubling-identity}
\end{equation}
so that $\rho_2/[L(L-1)]=\rho_1^{\mathrm{Dem}}$ and the critical overlap at $L=2$ coincides with the single-layer one: a noise pattern must generate a large overlap simultaneously in two independent layers, and the two layer-overlaps being mutually independent yields the squared exponential suppression. Numerically $\rho_1^{\mathrm{Dem}}=1-\phi_2(x^{\ast})=0.6735$ and $\rho_2=1.3470=2\rho_1^{\mathrm{Dem}}$, as marked in Figure~\ref{fig:doubling}\footnote{We note a numerical caveat: for $L \ge 10$, the saddle magnetization $m^{\ast}$ approaches $1$ so rapidly that $1-m^{\ast,2}$ falls below double-precision machine epsilon, leading to catastrophic cancellation in the numerical evaluation of $\mathcal{H}_L^{\ast}$. To compute $K_L$ accurately at large $L$, one must substitute $1/(1-m^{\ast,2})$ with the analytically equivalent expression $\cosh^{2}(x^{\ast})$ derived from the saddle equation.}.

\section{Corrupted-state cavity expansion}
\label{app:corrupted}

This appendix details the analysis summarised in \S\ref{sec:basins}. Throughout, the network is initialised at the corrupted state \eqref{eq:corrupted-init}, $\sigma_j^{c}=s_j^{c}\,\xi_j^{1,c}$, with masks $s_j^{c}\in\{-1,+1\}$ i.i.d.\ across $(j,c)$, $\E[s_j^{c}]=r\in(0,1]$, and all cavity quantities are evaluated at this state.

\subsection*{Cavity magnetisations and local field under corruption}
\label{app:corrupted-mhat}

For the recalled archetype,
\begin{equation}
  \hat m_1^{a}\;=\;\frac{1}{N}\sum_{j\neq i}\xi_j^{1,a}\,\sigma_j^{a}
            \;=\;\frac{1}{N}\sum_{j\neq i}s_j^{a},
\end{equation}
an empirical mean of i.i.d.\ variables with mean $r$ and variance $1-r^{2}$, independent across layers. By the Central Limit Theorem,
\begin{equation}
  \hat m_1^{a}\;=\;r+\frac{\zeta^{a}}{\sqrt{N}}+\mathcal{O}(N^{-1}),
  \qquad
  \zeta^{a}\overset{d}{\sim}\mathcal{N}\bigl(0,1-r^{2}\bigr),
  \label{eq:mhat-corrupted}
\end{equation}
with $\{\zeta^{a}\}_{a=1}^{L}$ mutually independent. Consequently $F_1^{a}=(L-1)r+\mathcal{O}(N^{-1/2})$ and, substituting \eqref{eq:mhat-corrupted} in $\hat E_1=N\sum_{a<b}\hat m_1^{a}\hat m_1^{b}-N\binom{L}{2}$ and using $\sum_{a<b}(\zeta^{a}+\zeta^{b})=(L-1)\sum_{c}\zeta^{c}$,
\begin{equation}
  \hat E_1
  \;=\;
  -N\binom{L}{2}\bigl(1-r^{2}\bigr)+\mathcal{Z}_r+\mathcal{O}(N^{-1/2}),
  \qquad
  \mathcal{Z}_r:=r(L-1)\sqrt{N}\sum_{c}\zeta^{c}+\sum_{a<b}\zeta^{a}\zeta^{b}.
  \label{eq:Ehat-corrupted}
\end{equation}
The on-site factor involves only the masks at site $i$: since $\xi_i^{1,c}\sigma_i^{c}=s_i^{c}$,
\begin{equation}
  \Phi_1^{(\backslash a)}
  \;=\;
  \exp\!\Bigl(\sum_{c\neq a}s_i^{c}\,F_1^{c}\Bigr),
\end{equation}
which is independent of the cavity variables \eqref{eq:mhat-corrupted}. Using the oddness of $\sinh$, the signal contribution to $X_i^{a}=\xi_i^{1,a}h_i^{a}$ is therefore
\begin{equation}
  X_i^{(1|a)}
  \;=\;
  e^{\hat E_1}\,\sinh\bigl(F_1^{a}\bigr)\,
  \exp\!\Bigl(\sum_{c\neq a}s_i^{c}\,F_1^{c}\Bigr).
  \label{eq:X1-corrupted}
\end{equation}

\subsection*{Noise channel under corruption}
\label{app:corrupted-noise}

For $\mu\neq 1$ define $v_j^{(\mu,c)}:=\xi_j^{\mu,c}\,s_j^{c}\,\xi_j^{1,c}$. Since $\xi_j^{\mu,c}$ is a symmetric Rademacher sign independent of $(s_j^{c},\xi_j^{1,c})$, the family $\{v_j^{(\mu,c)}\}$ is i.i.d.\ symmetric Rademacher for every $r$, across $j$, $c$ and $\mu\neq 1$. The noise contributions
\begin{equation}
  X_i^{(\mu|a)}
  =\xi_i^{1,a}\xi_i^{\mu,a}\,e^{\hat E_\mu}\sinh(F_\mu^{a})\,\Phi_\mu^{(\backslash a)},
  \qquad
  \hat m_\mu^{c}=\frac{1}{N}\sum_{j\neq i}v_j^{(\mu,c)},
  \qquad
  \Phi_\mu^{(\backslash a)}=\exp\!\Bigl(\sum_{c\neq a}v_i^{(\mu,c)}F_\mu^{c}\Bigr),
\end{equation}
have therefore exactly the same joint law as at the uncorrupted state $r=1$, where $v_j^{(\mu,c)}=u_j^{(\mu,c)}$ of \ref{app:moments}. In particular: $\E[X_i^{(\mu|a)}]=0$ through the on-site factor $\xi_i^{1,a}$, as in \ref{app:noise-mean}; the off-diagonal second moments vanish as in \ref{app:offdiag}; the signal--noise cross terms vanish because $\xi_i^{\mu,a}$ is centred and independent of the masks; and the per-pattern variance is $K_L\,e^{-N\rho_L}(1+o(1))$ as computed in \ref{app:saddle}. The noise statistics are therefore insensitive to the corruption level, as anticipated in \S\ref{sec:basins}: $\sigma^{2}(r)=(P-1)\,K_L\,e^{-N\rho_L}(1+o(1))$ for every $r$, exactly as in \eqref{eq:variance}.

\subsection*{Annealed signal: tilted large deviations}
\label{app:corrupted-signal}

The first moment of \eqref{eq:X1-corrupted} factorises over the independent on-site and cavity randomness. Averaging the on-site masks first, conditionally on the cavity variables, with $\E[e^{xs}]=\cosh x+r\sinh x$ for $s\in\{-1,+1\}$ of mean $r$,
\begin{equation}
  \mu_1(r)
  \;=\;
  \E\Bigl[\,e^{\hat E_1}\,\sinh(F_1^{a})
  \prod_{c\neq a}\bigl(\cosh F_1^{c}+r\sinh F_1^{c}\bigr)\Bigr],
  \label{eq:mu1-r-int}
\end{equation}
the remaining expectation running over $\bm{\hat m}_1\in[-1,1]^{L}$. Equation \eqref{eq:mu1-r-int} is in large-deviation form, exactly as \eqref{eq:noise-sq-main}, with two differences: the exponent carries a single power of $\hat E_1$, and the reference measure is biased. By Cram\'er's theorem the empirical mean of the masks of one layer obeys an LDP with the tilted rate function
\begin{equation}
  I_r(m)
  \;=\;
  \frac{1+m}{2}\log\frac{1+m}{1+r}+\frac{1-m}{2}\log\frac{1-m}{1-r}
  \;=\;
  I_R(m)-a_r\,m+\log\cosh(a_r),
  \qquad
  a_r=\tanh^{-1}(r),
  \label{eq:Ir-tilted}
\end{equation}
which vanishes only at $m=r$, and the joint rate is $\sum_{c}I_r(m^{c})$ by independence across layers. Varadhan's lemma then gives
\begin{equation}
  \frac{1}{N}\log\mu_1(r)
  \;\xrightarrow[N\to\infty]{}\;
  -\binom{L}{2}+\sup_{\bm m\in[-1,1]^{L}}\Psi_r(\bm m),
  \qquad
  \Psi_r(\bm m)\;:=\;\sum_{c<d}m^{c}m^{d}-\sum_{c}I_r(m^{c}),
  \label{eq:varadhan-r}
\end{equation}
the smooth insertions of \eqref{eq:mu1-r-int} contributing only to the prefactor.

\paragraph{Symmetric saddle and uniqueness}
The first-order conditions read $\sum_{d\neq c}m^{d}=\tanh^{-1}(m^{c})-a_r$ for $c=1,\dots,L$. Subtracting two of them,
\begin{equation}
  m^{c'}-m^{c}\;=\;\tanh^{-1}(m^{c})-\tanh^{-1}(m^{c'}),
\end{equation}
and the strict monotonicity of $\tanh^{-1}$ forces $m^{c}=m^{c'}$ for every pair: every interior stationary point lies on the symmetric ray $m^{c}\equiv m$, where $\Psi_r$ reduces to
\begin{equation}
  \Lambda_{L,r}(m)\;=\;\binom{L}{2}m^{2}-L\,I_r(m),
  \label{eq:Lambda-r}
\end{equation}
with stationarity condition
\begin{equation}
  \tanh^{-1}(m_s)\;=\;(L-1)\,m_s+a_r
  \quad\Longleftrightarrow\quad
  m_s=\tanh\bigl((L-1)m_s+a_r\bigr).
  \label{eq:saddle-r-app}
\end{equation}
For $r>0$ the supremum is attained at the largest root $m_s>r$; the mirror saddle near $-m_s$ carries the extra cost $2La_r m_s$ and is exponentially subdominant. It matters only at $r=0$, where the two saddles cancel exactly in the odd integrand and $\mu_1(0)=0$ by parity, as it must for an uncorrelated input.

\paragraph{Explicit rate}
Setting $x_s:=(L-1)m_s+a_r=\tanh^{-1}(m_s)$ and using $I_R(m_s)=m_s x_s-\log\cosh(x_s)$ in \eqref{eq:Ir-tilted},
\begin{align}
  \Lambda_{L,r}(m_s)
  &=\;\binom{L}{2}m_s^{2}
   -L\bigl[m_s x_s-\log\cosh x_s-a_r m_s+\log\cosh a_r\bigr] \notag\\
  &=\;-\binom{L}{2}m_s^{2}+L\bigl[\log\cosh x_s-\log\cosh a_r\bigr],
\end{align}
the second line following from $x_s-a_r=(L-1)m_s$. Substituting in \eqref{eq:varadhan-r} yields the signal rate quoted in \eqref{eq:Sigma-main},
\begin{equation}
  \Sigma_L(r)
  \;=\;\binom{L}{2}-\Lambda_{L,r}(m_s)
  \;=\;L\Bigl[\tfrac{L-1}{2}\bigl(1+m_s^{2}\bigr)-\log\cosh(x_s)+\log\cosh(a_r)\Bigr].
  \label{eq:Sigma-app}
\end{equation}
\emph{Sanity checks.} At $r\to1$, $a_r\to\infty$ forces $m_s\to1$; using $\log\cosh y=y-\log 2+o(1)$ for both $x_s$ and $a_r$, the bracket tends to $(L-1)-(x_s-a_r)=0$, hence $\Sigma_L(1)=0$ and the $\mathcal{O}(1)$ signal of \ref{app:mu1} is recovered. At fixed $r<1$, comparing $\Lambda_{L,r}$ at $m_s$ and at $m=r$ (where $I_r$ vanishes) gives
\begin{equation}
  0\;<\;\Sigma_L(r)\;\le\;\binom{L}{2}\bigl(1-r^{2}\bigr),
  \label{eq:Sigma-bounds}
\end{equation}
with strict upper inequality for $r<1$, since the saddle satisfies $m_s(r)>r$: the annealed rate is strictly smaller than the typical one, see below.

\paragraph{Gaussian fluctuations and prefactor}
Expanding $\Psi_r$ to second order around $m_s\bm 1$, in full analogy with \ref{app:saddle}, the negative Hessian is
\begin{equation}
  (\mathcal{H}_{L,r}^{*})_{cd}
  \;=\;
  \frac{1}{1-m_s^{2}}\,\delta_{cd}-\bigl(1-\delta_{cd}\bigr)
  \;=\;
  \Bigl(\frac{1}{1-m_s^{2}}+1\Bigr)I_L-\bm J,
\end{equation}
with eigenvalues $\lambda_{\parallel}=\frac{1}{1-m_s^{2}}-(L-1)$ on the symmetric mode and $\lambda_{\perp}=\frac{1}{1-m_s^{2}}+1$ on $\bm 1^{\perp}$. Positivity of $\lambda_{\parallel}$ follows from the saddle geometry: at the largest root the curve $\tanh^{-1}(m)$ crosses the line $(L-1)m+a_r$ from below, i.e.\ $\frac{1}{1-m_s^{2}}>L-1$. Freezing the smooth insertions of \eqref{eq:mu1-r-int} at the saddle, where $F_1^{c}=(L-1)m_s$ for every $c$, and performing the Gaussian integral,
\begin{equation}
  C_L(r)
  \;=\;
  \frac{(2\pi)^{L/2}}{\sqrt{\det\mathcal{H}_{L,r}^{*}}}\;
  \sinh\bigl((L-1)m_s\bigr)\,
  \bigl[\cosh\bigl((L-1)m_s\bigr)+r\sinh\bigl((L-1)m_s\bigr)\bigr]^{L-1},
  \label{eq:CLr-app}
\end{equation}
with $\det\mathcal{H}_{L,r}^{*}=\lambda_{\parallel}\lambda_{\perp}^{L-1}$, in the same normalisation convention as \eqref{eq:KL-final-app}; see Remark~\ref{rem:prefactors-r} for the finite-$N$ corrected version.

\subsection*{Annealed versus typical signal}
\label{app:corrupted-typical}

The annealed average \eqref{eq:mu1-r-int} is controlled by mask configurations with $\hat m_1^{c}\equiv m_s(r)>r$, which carry probability $e^{-NL\,I_r(m_s)}$: exponentially rare. The typical behaviour is read off \eqref{eq:mhat-corrupted}--\eqref{eq:Ehat-corrupted} directly. This is already visible at the level of the Gaussian fluctuation $\mathcal{Z}_r$: writing $\sum_{a<b}\zeta^{a}\zeta^{b}=\frac12[(\sum_c\zeta^{c})^{2}-\sum_c(\zeta^{c})^{2}]$ and performing the Gaussian integral exactly,
\begin{equation}
  \E_{\zeta}\bigl[e^{\mathcal{Z}_r}\bigr]
  \;=\;
  \exp\!\left(\frac{L(L-1)^{2}\,r^{2}(1-r^{2})}{2\bigl[1-(L-1)(1-r^{2})\bigr]}\,N+\mathcal{O}(1)\right),
  \qquad\text{if }(L-1)(1-r^{2})<1,
  \label{eq:Zr-annealed}
\end{equation}
while for $(L-1)(1-r^{2})\ge 1$ the average diverges at this (CLT) level of description, the quadratic term $\sum_{a<b}\zeta^a\zeta^b$ overwhelming the Gaussian decay of the $\zeta$'s. In either case the annealed average of the fluctuation grows exponentially in $N$ (or worse): it is dominated by rare $\zeta$'s, not by typical ones. The exact gap between annealed and typical rates is $\binom{L}{2}(1-r^{2})-\Sigma_L(r)>0$, to which \eqref{eq:Zr-annealed} reduces at quadratic order in the deviation $m-r$; the divergence of \eqref{eq:Zr-annealed} for $(L-1)(1-r^{2})\ge1$ merely signals that the dominant deviations then leave the central $\mathcal{O}(N^{-1/2})$ window, where they are correctly controlled by the tilted rate function \eqref{eq:Ir-tilted} rather than by its quadratic approximation. By contrast $\mathcal{Z}_r/N\to 0$ almost surely, so
\begin{equation}
  \frac{1}{N}\hat E_1\;\xrightarrow[N\to\infty]{\mathrm{a.s.}}\;-\binom{L}{2}\bigl(1-r^{2}\bigr),
  \qquad
  \frac{1}{N}\log X_i^{(1|a)}\;\xrightarrow[N\to\infty]{\mathrm{a.s.}}\;-\binom{L}{2}\bigl(1-r^{2}\bigr),
  \label{eq:typical-rate}
\end{equation}
the on-site factor being $\mathcal{O}(1)$: at the typical state $F_1^{c}\to(L-1)r$ and the average over the on-site masks gives $\bigl[\cosh((L-1)r)+r\sinh((L-1)r)\bigr]^{L-1}$, although the $\mathcal{O}(\sqrt N)$ Gaussian fluctuation $\mathcal{Z}_r$ prevents any deterministic $\mathcal{O}(1)$ prefactor from being attached to the typical signal. Comparing the typical rate \eqref{eq:typical-rate} with the noise variance \eqref{eq:variance} amounts to the replacement $\Sigma_L(r)\to\binom{L}{2}(1-r^{2})$ in the capacity exponent of \eqref{eq:Pc-r},
\begin{equation}
  \varepsilon_L^{\mathrm{typ}}(r)\;=\;\rho_L-L(L-1)\bigl(1-r^{2}\bigr),
  \label{eq:eps-typ}
\end{equation}
which is positive iff $1-r^{2}<\rho_L/[L(L-1)]$. Using $\rho_L=L[(L-1)-\phi_L(x^{*})]$ this condition takes the closed form
\begin{equation}
  r\;>\;r_c^{\mathrm{typ}}(L)\;=\;\sqrt{\frac{\phi_L(x^{*})}{L-1}}
  \;\underset{L\to\infty}{\sim}\;\sqrt{1-\frac{\log 2}{L-1}}\;\longrightarrow\;1,
  \label{eq:rc-typ-app}
\end{equation}
equivalently $r_c^{\mathrm{typ}}=\sqrt{1-\rho_L/[L(L-1)]}$, as quoted in \eqref{eq:rc-def}. At $r=1$ the masks are deterministic, $\zeta^{a}\equiv 0$ and $\mathcal{Z}_r\equiv 0$, and annealed and typical coincide, both reducing to the exact computation of \ref{app:mu1}: the bracket $\cosh(L-1)+\sinh(L-1)=e^{L-1}$ gives $e^{(L-1)^{2}}$, the leading rate vanishes, and the signal is $e^{-(L-1)}\sinh(L-1)$.

\paragraph{Which criterion the dynamics realises}
The two rates $\Sigma_L(r)$ (annealed) and $\binom{L}{2}(1-r^{2})$ (typical)
bracket the finite-$N$ recovery threshold, and it is legitimate to ask which one
the one-step dynamics actually obeys. The point is settled empirically in
\S\ref{sec:hmm}, and the answer is the annealed one; here we record why
this is the theoretically consistent reading, not a coincidence. The one-step
overlap~\eqref{eq:m1step-r} is by construction $\erf\!\bigl(\mu_1(r)/\sqrt{2\sigma^2(r)}\bigr)$
with $\mu_1(r)=\E[X_i^{(1|a)}]$ the annealed first moment, precisely the
prescription of the single-layer model~\cite{ALBANESE2026131223}. There the
cavity exponent is linear in the masks, the annealed average factorises,
$\mu_1=e^{-1}\sinh(1)\,[\tfrac12((1+r)+(1-r)e^{-2})]^{N-1}$, and it equals
the typical signal because $\log X^{(1)}$ is a sum of i.i.d.\ per-site
contributions and self-averages; the annealed/typical distinction is empty and
the threshold $r_c\simeq0.337$ matches the Monte Carlo. The multilayer exponent
$\hat E_1$ is quadratic in the masks~\eqref{eq:Ehat-corrupted}, $\log X^{(1)}$ no
longer self-averages, and the distinction opens. But the signal is still carried
by a single pattern: $\hat E_1$ is one random variable per disorder
realisation, and the retrieval curve is an average over independent realisations,
which samples its $\mathcal{O}(\sqrt N)$ upward fluctuations. Averaging
$X^{(1)}=e^{\hat E_1}(\cdots)$ over the masks returns $\mu_1(r)$ by definition, so
the annealed moment is the operative one and the closed form built on it is the
one the data confirm. Replacing $\mu_1(r)$ by its typical value would discard the
fluctuations that dominate the average at every finite $N$; the typical rate
would control the recall only in an $N\to\infty$ limit at fixed sub-exponential
load $P$, incompatible with $P\sim e^{N\varepsilon_L(r)}$. Finally, the same
rare-event inflation makes the annealed noise variance~\eqref{eq:variance}
a conservative overestimate (Remark~\ref{rem:annealed-noise}), so the measured
basin is if anything slightly larger than the annealed-signal /
annealed-noise erf predicts --an overshoot toward larger basins, never toward the
typical threshold.

\subsection*{Numerical thresholds and finite-$N$ checks}
\label{app:corrupted-numbers}

Table~\ref{tab:basins} of the main text collects the thresholds $r_c(L)$ (from $\Sigma_L(r_c)=\rho_L/2$) and $r_c^{\mathrm{typ}}(L)$ for $L=2,\dots,10$; Table~\ref{tab:basin-L2} resolves the $L=2$ case in $r$.

\paragraph{Large-$L$ asymptotics of the thresholds}
As $L\to\infty$ the two criteria behave in opposite ways. For the typical one, $\phi_L(x^{*})\sim(L-1)-\log 2$ in \eqref{eq:rc-typ-app} gives $1-r_c^{\mathrm{typ}\,2}\simeq\log 2/(L-1)\to 0$, i.e.\ a tolerated Hamming radius shrinking as $d_c^{\mathrm{typ}}\simeq\log2/[4(L-1)]$. For the annealed one, the saddle freezes: $m_s\to1$ and $\log\cosh(x_s)=x_s-\log2+o(1)$, so that
\begin{equation}
  \Sigma_L(r)\;=\;L\bigl[\log 2-a_r+\log\cosh(a_r)\bigr]+o(L),
\end{equation}
and the threshold condition $\Sigma_L(r_c)=\rho_L/2\sim(L/2)\log2$ reduces to
\begin{equation}
  a_r-\log\cosh(a_r)\;=\;\tfrac12\log 2
  \quad\Longleftrightarrow\quad
  e^{-2a_r}\;=\;\sqrt2-1,
\end{equation}
whose solution is $r_c=\tanh(a_r)=\sqrt2-1\approx0.4142$, the finite limit quoted in \S\ref{sec:basins} and approached from below in Table~\ref{tab:basins}.

\begin{table}[H]
\centering
\begin{tabular}{c|ccc}
\toprule
$r$ & $m_s(r)$ & $\Sigma_2(r)$ & $\varepsilon_2(r)=\rho_2-2\Sigma_2(r)$ \\
\midrule
0.2    & 0.7335 & 0.8066 & $-0.2663$ \\
0.3    & 0.8060 & 0.6951 & $-0.0433$ \\
0.3195 & 0.8172 & 0.6735 & $0$ \\
0.4    & 0.8565 & 0.5851 & $+0.1768$ \\
0.5    & 0.8945 & 0.4781 & $+0.3908$ \\
0.5714 & 0.9164 & 0.4039 & $+0.5391$ \\
0.7    & 0.9484 & 0.2754 & $+0.7961$ \\
0.9    & 0.9854 & 0.0882 & $+1.1706$ \\
\bottomrule
\end{tabular}
\caption{Tilted saddle data at $L=2$: saddle point $m_s(r)$, annealed signal rate $\Sigma_2(r)$ and capacity exponent $\varepsilon_2(r)$. The exponent vanishes at the annealed threshold $r_c=0.3195$; the row $r=0.5714$ marks the typical threshold $r_c^{\mathrm{typ}}=\sqrt{\phi_2(x^{*})}$.}
\label{tab:basin-L2}
\end{table}

\begin{remark}[Finite-$N$ prefactors]
\label{rem:prefactors-r}
As in \ref{app:saddle}, the prefactor \eqref{eq:CLr-app} is written at leading Laplace order, treating the cavity magnetisations as empirical means of $N$ (rather than $N-1$) variables and absorbing the local-CLT normalisation. Restoring both, the corrected prefactor reads
\begin{equation}
  \widehat C_L(r)
  \;=\;
  \sinh\bigl((L-1)m_s\bigr)
  \bigl[\cosh\bigl((L-1)m_s\bigr)+r\sinh\bigl((L-1)m_s\bigr)\bigr]^{L-1}
  \frac{(1-m_s^{2})^{-L/2}}{\sqrt{\det\mathcal{H}_{L,r}^{*}}}\,
  e^{\Sigma_L(r)-\binom{L}{2}(1+m_s^{2})}.
  \label{eq:CL-r-hat}
\end{equation}
As $r\to 1$ one finds $\widehat C_L(r)\to e^{-(L-1)}\sinh(L-1)$, matching exactly the perfect-recall signal \eqref{eq:mu1}. We verified \eqref{eq:CL-r-hat} at $L=2$ against the exact double-binomial enumeration of $\mu_1(r)$: at $N=800$ the enumerated ratio $\mu_1(r)\,e^{N\Sigma_2(r)}$ equals $0.5359$ at $r=0.5$ and $0.4887$ at $r=0.7$, against $\widehat C_2(0.5)=0.5368$ and $\widehat C_2(0.7)=0.4896$, with residual $\mathcal{O}(N^{-1})$ corrections. The same finite-$N$ bookkeeping applies to the noise prefactor $K_L$ (cavity exclusion plus a factor $2$ from the two symmetric saddles $\pm m^{*}\bm 1$ of the even integrand \eqref{eq:IN}); none of it affects the rates $\rho_L$, $\Sigma_L(r)$ or the thresholds of Table~\ref{tab:basins}.
\end{remark}

\section{Single-layer squared-overlap model}
\label{app:squared_overlap}

We consider a single-layer ($L=1$) variant of the exponential Hopfield model in which the exponent is built from the squared Mattis magnetisation rather than from the linear one.  The cost function reads
\begin{equation}
  \mathcal{H}_{\mathrm{sq}}(\bm\sigma\,|\,\bm\xi)
  \;=\;
  -\,N\sum_{\mu=1}^{P}\exp\!\bigl[\,N\,(m_\mu^{2}-1)\,\bigr],
  \label{eq:H_sq}
\end{equation}
with $m_\mu = \frac{1}{N}\sum_i \xi_i^\mu\sigma_i$ the standard Mattis magnetisation.

The key structural difference with respect to the linear-exponent model \cite{ALBANESE2026131223}
\begin{equation}
  \mathcal{H}_{\mathrm{lin}}(\bm\sigma\,|\,\bm\xi)
  \;=\;
  -\,N\sum_{\mu=1}^{P}\exp\!\bigl[\,N\,(m_\mu-1)\,\bigr]
  \label{eq:H_lin_app}
\end{equation}
is the restored $\mathbb{Z}_2$ symmetry: since $\mathcal{H}_{\mathrm{sq}}$ depends only on $m_\mu^2$, it is invariant under the global spin flip $\bm\sigma\mapsto-\bm\sigma$, and each stored pattern is retrieved as a pair $\{\bm\xi^\mu,-\bm\xi^\mu\}$.  The linear-exponent model \eqref{eq:H_lin_app} breaks this symmetry, favouring $m_\mu=+1$ only.

\subsection*{Cavity decomposition and local field}
\label{sec:sq_cavity}

Introducing the cavity magnetisation $\hat m_\mu = \frac{1}{N}\sum_{j\neq i}\xi_j^\mu\sigma_j$, the squared overlap expands as
\begin{equation}
  N\,(m_\mu^2-1) \;=\; \underbrace{N\,(\hat m_\mu^2-1)}_{=:\,\hat E_\mu}
  \;+\; 2\,\hat m_\mu\,\xi_i^\mu\sigma_i
  \;+\; \mathcal{O}(N^{-1}),
\end{equation}
so that the Hamiltonian decomposes as $\mathcal{H}_{\mathrm{sq}} = -N[\,C_i + \sigma_i\,h_i\,](1+\mathcal{O}(N^{-1}))$ with local field
\begin{equation}
  {\;
  h_i \;=\; \sum_{\mu=1}^{P}\xi_i^\mu\,e^{\hat E_\mu}\,\sinh(2\hat m_\mu),
  \qquad
  \hat E_\mu = N\,(\hat m_\mu^2-1).
  \;}
  \label{eq:hi_sq}
\end{equation}
The zero-temperature Glauber update is $\sigma_i(t+1) = \mathrm{sign}[h_i(t)]$, identical in structure to \eqref{eq:update} for the hetero-associative network (specialised to $L=1$ with the replacement $\sinh(F_\mu^a)\to\sinh(2\hat m_\mu)$).

\subsection*{Signal-to-noise analysis}
\label{sec:sq_sn}

We test stability of the recalled ground state $\bm\sigma=\bm\xi^1$ by computing the moments of $X_i:=\xi_i^1 h_i\big|_{\bm\sigma=\bm\xi^1}$.

\paragraph{First moment}
The self-signal ($\mu=1$) is deterministic at leading order, $\hat m_1 = (N-1)/N$, giving $\hat E_1 = -2+\mathcal{O}(N^{-1})$. The noise patterns ($\mu\neq 1$) have zero mean by Rademacher independence at site $i$.  Hence
\begin{equation}
  \mu_1 \;=\; \E[X_i]
  \;=\; e^{-2}\sinh(2) + \mathcal{O}(N^{-1})
  .
  \label{eq:mu1_sq}
\end{equation}

\paragraph{Second moment}
Off-diagonal contributions ($\mu\neq\nu$) vanish by the same Rademacher-independence argument as in~\S\ref{sec:stability}.  The diagonal noise contribution ($\mu\neq 1$) reduces to the cavity expectation
\begin{equation}
  \E\bigl[(X_i^{(\mu)})^2\bigr]
  \;=\;
  \E\!\bigl[\,e^{2N(\hat m_\mu^2-1)}\,\sinh^2(2\hat m_\mu)\,\bigr],
  \qquad \mu\neq 1.
  \label{eq:noise_sq_app}
\end{equation}
In contrast to the linear model, where the cavity exponent $N(\hat m_\mu-1)$ is \emph{linear} in the Rademacher variables (and the expectation factorises into a closed-form product $\cosh(2)^{N-1}$), here the exponent $N(\hat m_\mu^2-1)$ is \emph{quadratic} and the expectation no longer factorises.  A genuine saddle-point treatment is required.

\subsection*{Noise rate and prefactor}
\label{sec:sq_saddle}

Following \ref{app:saddle}, by Cram\'er's theorem, $\hat m_\mu$ satisfies a large-deviation principle with rate function $I_R(m)$.  Varadhan's lemma identifies the leading exponential decay of~\eqref{eq:noise_sq_app} as a one-dimensional variational problem.  The functional $g(m) := 2m^2 - I_R(m)$ is even in $m$; its unique positive maximiser $m^*$ satisfies
\begin{equation}
  m^* = \tanh(4\,m^*),
  \label{eq:saddle_sq}
\end{equation}
or equivalently, with $x^*=4m^*$,
\begin{equation}
  \tanh(x^*) = \frac{x^*}{4}.
  \label{eq:saddle_sq_x}
\end{equation}

A Laplace evaluation around the saddle yields
\begin{equation}
  \E\bigl[(X_i^{(\mu)})^2\bigr]
  \;=\;
  K\,e^{-N\rho}\bigl(1+o(1)\bigr),
  \label{eq:Xmu_sq_decay}
\end{equation}
with the noise rate
\begin{equation}
  {\;
  \rho \;=\; 2 - \phi(x^*),
  \qquad
  \phi(x) = -\frac{x^2}{8}+\log\cosh(x),
  \;}
  \label{eq:rho_sq}
\end{equation}
and the polynomial prefactor
\begin{equation}
  K \;=\; \frac{\sinh^2(2m^*)}{\sqrt{1-4(1-m^{*2})}}.
  \label{eq:K_sq}
\end{equation}
Numerically, the saddle data are collected in Table~\ref{tab:sq_saddle}.

\begin{table}[h]
  \centering
  \begin{tabular}{ccccc}
    \toprule
    $x^*$ & $m^*=x^*/4$ & $\phi(x^*)$ & $\rho=2-\phi(x^*)$ & $K$ \\
    \midrule
    $3.9973$ & $0.99933$ & $1.3072$ & $0.6928$ & $13.15$ \\
    \bottomrule
  \end{tabular}
  \caption{Saddle data, noise rate and prefactor for the squared-overlap model.}
  \label{tab:sq_saddle}
\end{table}

\subsection*{Storage capacity}
\label{sec:sq_storage}

Combining the signal \eqref{eq:mu1_sq} and the variance $\sigma^2=(P-1)\,K\,e^{-N\rho}$, the Mattis magnetisation after one parallel update reads
\begin{equation}
  m_1^{(1)}
  \;=\;
  \mathrm{erf}\!\left(\frac{e^{-2}\sinh(2)}{\sqrt{2(P-1)\,K\,e^{-N\rho}}}\right),
  \label{eq:m1step_sq}
\end{equation}
which tends to unity as long as $P\,e^{-N\rho}\to 0$.

The qualitative condition $P\,e^{-N\rho}\to 0$ can be turned into a quantitative bound on the load, mirroring the derivation of \eqref{eq:Pmax} for the hetero-associative network. Within the Gaussian approximation, the single-spin stability condition $X_i>0$ holds with probability
\begin{equation}
  \P\bigl(X_i>0\bigr)
  \;=\;
  1-\tfrac12\,\mathrm{erfc}\!\Bigl(\tfrac{\mu_1}{\sqrt{2\sigma^{2}}}\Bigr).
\end{equation}
Requiring a per-spin error probability $\le N^{-a}$, $a>0$, so that the union bound over the $N$ sites remains summable in the thermodynamic limit, gives
\begin{equation}
  \frac{\mu_1^{2}}{2\sigma^{2}} \;\ge\; a\log N+\mathcal{O}(\log\log N),
\end{equation}
and substituting the signal \eqref{eq:mu1_sq} and the variance $\sigma^{2}=(P-1)\,K\,e^{-N\rho}$,
\begin{equation}
  {\;
  P \;\le\; 1+\frac{e^{-4}\sinh^{2}(2)}{2\,a\,K\,\log N}\;e^{N\rho}.
  \;}
  \label{eq:Pmax_sq}
\end{equation}
The leading-order storage capacity is therefore
\begin{equation}
  P_c \;\sim\; e^{N\rho},
  \qquad \rho \approx 0.6928.
  \label{eq:Pc_sq}
\end{equation}

\paragraph{Consistency with the configuration-space ceiling $2^{N}$}
Since the network possesses only $2^{N}$ distinct configurations, any sensible storage estimate must satisfy $P<2^{N}$. The bound \eqref{eq:Pmax_sq} does, for every $N$, for two concurrent reasons. First, the rate lies strictly below $\log 2$: since $\phi(x^{*})=\max_{x}\phi(x)\ge\phi(4)$ and $\phi(4)=-2+\log\cosh(4)=2-\log 2+\log\bigl(1+e^{-8}\bigr)$,
\begin{equation}
  \rho \;=\; 2-\phi(x^{*}) \;\le\; \log\frac{2}{1+e^{-8}} \;<\; \log 2,
  \label{eq:rho_ceiling}
\end{equation}
numerically $\rho=0.692811<\log 2=0.693147$, so that $e^{N\rho}/2^{N}=e^{-N(\log 2-\rho)}\to 0$, albeit slowly ($\log 2-\rho\simeq 3.4\times 10^{-4}$). The ceiling \eqref{eq:rho_ceiling} is the squared-overlap counterpart of the linear-model rate $\rho_{\mathrm{lin}}=\log[2/(1+e^{-4})]$: the $\mathbb{Z}_2$-symmetric exponent doubles the argument of the exponentially small correction, $e^{-4}\to e^{-8}$, pushing the rate closer to the absolute bound $\log 2$. Second, the prefactor in \eqref{eq:Pmax_sq} is itself small, $e^{-4}\sinh^{2}(2)/(2aK\log N)\simeq 9.2\times 10^{-3}/(a\log N)$. Table~\ref{tab:Pmax2N} reports the numerical comparison at $a=1$: the estimate \eqref{eq:Pmax_sq} stays more than two orders of magnitude below $2^{N}$ over the whole range of sizes, with a gap that widens as $N$ grows. The same conclusion holds {a fortiori} if the leading-Laplace prefactor $K$ is replaced by the finite-$N$-corrected value $\widehat K\approx 0.97$ of Remark~\ref{rem:prefactors} below, which raises the bound by roughly one decade but leaves it well under the ceiling.

\begin{table}[H]
\centering
\begin{tabular}{c|ccc}
\toprule
$N$ & $\log_{10}P_{\max}$ from \eqref{eq:Pmax_sq} & $\log_{10}2^{N}$ & $\log_{10}\bigl(P_{\max}/2^{N}\bigr)$ \\
\midrule
10   & 0.61   & 3.01   & $-2.40$ \\
20   & 3.50   & 6.02   & $-2.52$ \\
50   & 12.41  & 15.05  & $-2.64$ \\
100  & 27.39  & 30.10  & $-2.72$ \\
200  & 57.41  & 60.21  & $-2.79$ \\
500  & 147.61 & 150.52 & $-2.91$ \\
1000 & 298.01 & 301.03 & $-3.02$ \\
\bottomrule
\end{tabular}
\caption{Storage estimate \eqref{eq:Pmax_sq} (at $a=1$) against the configuration-space ceiling $2^{N}$ for several system sizes: the bound is below $2^{N}$ at every $N$, as required.}
\label{tab:Pmax2N}
\end{table}

\subsection*{Comparison with the linear-exponent model}
\label{sec:sq_comparison}

The linear-exponent model \eqref{eq:H_lin_app} has a closed-form noise rate $\rho_{\mathrm{lin}} = \log\bigl[\frac{2}{1+e^{-4}}\bigr] \approx 0.6750$, obtained without saddle-point analysis since the cavity exponent is linear.

Two key differences emerge:
\begin{enumerate}
  \item \emph{Analytic structure.} The linear model factorises over sites ($\E[e^{2N\hat m_\mu}]=\cosh(2)^{N-1}$); the squared model requires a genuine saddle-point integral.
  \item \emph{$\mathbb{Z}_2$ symmetry and capacity.}  The squared-overlap rate is the larger of the two, $\rho_{\mathrm{sq}} \approx 0.6928 > \rho_{\mathrm{lin}} \approx 0.6750$, and lies closer to the ceiling $\log 2 \approx 0.6931$: penalising $1-m_\mu^2$ rather than $1-m_\mu$ rewards alignment with either sign of the pattern and thereby tightens the noise suppression.
\end{enumerate}

Figure~\ref{fig:comparison} displays the theoretical predictions \eqref{eq:m1step_sq} and the corresponding linear-model formula from~\cite{ALBANESE2026131223} at $N=10$, with Monte-Carlo markers from the zero-temperature one-step dynamics of both models 
. The two models are nearly indistinguishable for $P \lesssim 5\times 10^{2}$, where both achieve perfect recall ($m_1^{(1)}\approx 1$). As $P$ approaches the critical storage, the curves separate: the larger noise rate of the squared model is offset at finite $N$ by its larger prefactor $K\approx 13.15$, placing the effective transition at somewhat lower $P$. In the thermodynamic limit, however, the exponential rate $\rho_{\mathrm{sq}} > \rho_{\mathrm{lin}}$ implies that the squared model stores strictly more patterns at leading order (panel~(b)).

\begin{figure}[t]
  \centering
  \includegraphics[width=\linewidth]{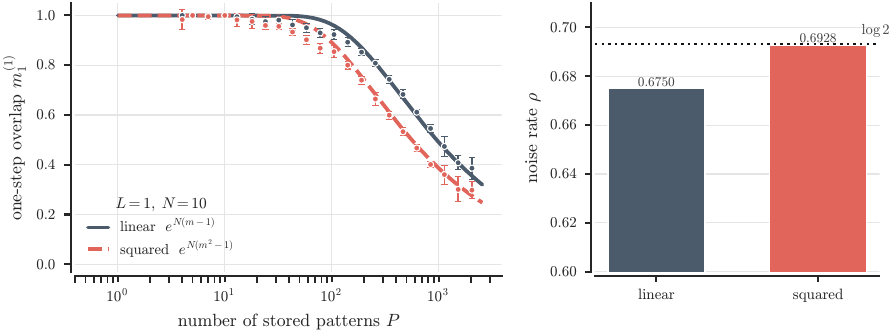}
  \caption{\textbf{Squared-overlap versus linear-exponent model ($L=1$).}
  \textbf{(a)} One-step magnetisation $m_1^{(1)}$ versus load $P$ at $N=10$ for
  the linear-exponent model $\mathcal{H}_{\mathrm{lin}}=-N\sum_\mu e^{N(m_\mu-1)}$
  (solid) and the squared-overlap model
  $\mathcal{H}_{\mathrm{sq}}=-N\sum_\mu e^{N(m_\mu^{2}-1)}$ (dashed). The two
  coincide up to $P\!\sim\!5\times 10^{2}$; at finite $N$ the squared model's larger
  prefactor $K\approx13.15$ moves its knee to slightly lower $P$ despite its
  larger rate. Markers are Monte-Carlo (mean\,$\pm$\,std over seeds), from the
  one-step dynamics of each model in the collision-free regime $P\ll2^{N}$.
  \textbf{(b)} Noise rates against the absolute ceiling $\log2$: the
  $\mathbb{Z}_2$-symmetric squared exponent ($\rho\approx0.6928$) sits closer to
  the ceiling than the linear one ($\rho\approx0.6750$).}
  \label{fig:comparison}
\end{figure}

\subsection*{Basins of attraction}
\label{sec:sq_basins}

Following the protocol of \cite{ALBANESE2026131223}, we now quantify the basins of attraction of the stored patterns by feeding the zero-temperature dynamics with a corrupted version of the archetype. The initial configuration is $\sigma_j^{(0)}=\tilde\xi_j^{1}:=s_j\,\xi_j^{1}$, where the corruption masks $s_j\in\{-1,+1\}$ are i.i.d.\ with $\E[s_j]=r\in(0,1]$, so that the initial Mattis overlap is $m_1^{(0)}=r$ and the Hamming distance per neuron is $d=(1-r)/2$. The quantity controlling one parallel update of spin $i$ is again $X_i=\xi_i^{1}h_i\big|_{\bm\sigma=\tilde{\bm\xi}^{1}}$, now with cavity magnetisations
\begin{equation}
  \hat m_1=\frac{1}{N}\sum_{j\neq i}s_j,
  \qquad
  \hat m_\mu=\frac{1}{N}\sum_{j\neq i}\xi_j^{\mu}\,\tilde\xi_j^{1}
  \quad (\mu\neq 1),
\end{equation}
and, within the Gaussian (CLT) approximation and the large-$N$ replacement of the empirical site average by an expectation, exactly as in \cite{ALBANESE2026131223}, the post-update magnetisation is $m_1^{(1)}=\mathrm{erf}\bigl[\mu_1(r)/\sqrt{2(\mu_2-\mu_1^{2})}\bigr]$.

\paragraph{The noise channel is insensitive to the corruption}
For $\mu\neq 1$ the products $u_j:=\xi_j^{\mu}\tilde\xi_j^{1}$ are i.i.d.\ symmetric Rademacher variables for every value of $r$, because $\xi_j^{\mu}$ is a symmetric sign independent of $\tilde\xi_j^{1}$. The entire noise computation leading to \eqref{eq:Xmu_sq_decay} therefore carries over verbatim: each of the $P-1$ noise patterns contributes zero mean and per-pattern variance $K\,e^{-N\rho}$, with $\rho$, $K$ given by \eqref{eq:rho_sq}--\eqref{eq:K_sq}, the off-diagonal contributions vanish by on-site independence, and
\begin{equation}
  \sigma^{2}\;=\;\mu_2-\mu_1^{2}\;\simeq\;(P-1)\,K\,e^{-N\rho},
  \qquad\text{independently of } r.
  \label{eq:noise_r_sq}
\end{equation}
(As in \cite{ALBANESE2026131223}, the fluctuations of the signal term itself are dropped from $\mu_2-\mu_1^{2}$; see Remark~\ref{rem:annealed_typical} for the caveats attached to this step.)

\paragraph{The corrupted signal requires a tilted saddle point}
The signal term reads
\begin{equation}
  X_i^{(1)}
  \;=\;
  e^{N(\hat m_1^{2}-1)}\,\sinh(2\hat m_1),
  \qquad
  \hat m_1=\frac{1}{N}\sum_{j\neq i}s_j .
\end{equation}
In the linear-exponent model the corresponding average factorises exactly over sites, $\E\bigl[e^{\sum_{j\neq i}s_j}\bigr]=\bigl(\cosh 1+r\sinh 1\bigr)^{N-1}$, which produces the closed form used in \cite{ALBANESE2026131223}. Here this step is not possible: the exponent is quadratic in $\hat m_1$ and the expectation does not factorise. The viable substitute, as for the noise evaluation \eqref{eq:noise_sq_app}, is a large-deviation analysis, now with a biased reference measure. By Cram\'er's theorem the empirical mean of the masks satisfies an LDP with the tilted rate function
\begin{equation}
  I_r(m)
  \;=\;
  \frac{1+m}{2}\log\frac{1+m}{1+r}+\frac{1-m}{2}\log\frac{1-m}{1-r}
  \;=\;
  I_R(m)-a_r\,m+\log\cosh(a_r),
  \qquad
  a_r:=\tanh^{-1}(r),
  \label{eq:tilted_rate}
\end{equation}
which vanishes at $m=r$ only, and Varadhan's lemma gives the annealed signal
\begin{equation}
  \mu_1(r)
  \;=\;
  \E\bigl[X_i^{(1)}\bigr]
  \;=\;
  C(r)\,e^{-N\Sigma(r)}\bigl(1+o(1)\bigr),
  \qquad
  \Sigma(r)
  \;:=\;
  1-\sup_{m\in[-1,1]}\bigl[\,m^{2}-I_r(m)\,\bigr].
  \label{eq:mu1_r_sq}
\end{equation}
The supremum is attained at the unique positive solution $m_s=m_s(r)$ of
\begin{equation}
  \tanh^{-1}(m_s)\;=\;2m_s+a_r
  \quad\Longleftrightarrow\quad
  m_s=\tanh\bigl(2m_s+a_r\bigr),
  \label{eq:saddle_basin}
\end{equation}
which generalises the noise saddle \eqref{eq:saddle_sq} (the bias shifts the linear branch, pulling $m_s$ towards $1$ as $r\to 1$), and the same algebra leading to \eqref{eq:rho_sq} yields the explicit rate
\begin{equation}
  {\;
  \Sigma(r)
  \;=\;
  1+m_s^{2}-\log\cosh\bigl(2m_s+a_r\bigr)+\log\cosh(a_r).
  \;}
  \label{eq:Sigma_r}
\end{equation}
The Laplace prefactor at leading order is
\begin{equation}
  C(r)\;=\;\frac{\sinh(2m_s)}{\sqrt{1-2\,(1-m_s^{2})}},
  \label{eq:C_r}
\end{equation}
well defined because $m_s(r)\ge m_s(0)=0.9575>1/\sqrt{2}$ for every $r\ge 0$. Sanity checks: as $r\to 1$ the rate function freezes the saddle at $m_s\to 1$ and $\Sigma(r)\to 0$, recovering the $\mathcal{O}(1)$ signal of the perfect-recall analysis; at $r\to 0^{+}$ the saddle equation reduces to $m_s=\tanh(2m_s)$ and $\Sigma(0)=0.6735$, while $\mu_1(0)=0$ identically by parity (the two saddles $\pm m_s$ cancel in the odd integrand), consistently with the loss of any retrieval drive for an uncorrelated input.

\paragraph{One-step magnetisation and storage under corruption}
Combining \eqref{eq:noise_r_sq} and \eqref{eq:mu1_r_sq}, the Mattis magnetisation after one parallel update reads
\begin{equation}
  m_1^{(1)}
  \;=\;
  \mathrm{erf}\!\left(
  \frac{C(r)\,e^{-N\Sigma(r)}}{\sqrt{2(P-1)\,K\,e^{-N\rho}}}
  \right)
  \label{eq:m1step_basin}
\end{equation}
(with the finite-$N$ prefactor corrections of Remark~\ref{rem:prefactors}, eq.~\eqref{eq:m1step_basin} reduces exactly to \eqref{eq:m1step_sq} as $r\to 1$), and it tends to unity if and only if
\begin{equation}
  P\,e^{-N\varepsilon(r)}\;\to\;0,
  \qquad
  \varepsilon(r)\;:=\;\rho-2\,\Sigma(r).
  \label{eq:eps_r}
\end{equation}
Parametrising the load as $P=\gamma\,\frac{C(r)^{2}}{K}\,e^{N\varepsilon(r)}$, in analogy with the load parametrisation adopted in \cite{ALBANESE2026131223}, eq.~\eqref{eq:m1step_basin} collapses onto the universal profile
\begin{equation}
  m_1^{(1)}\;=\;\mathrm{erf}\!\left(\frac{1}{\sqrt{2\gamma}}\right),
  \label{eq:m1_universal}
\end{equation}
and the union-bound argument leading to \eqref{eq:Pmax_sq} now gives the corruption-dependent storage estimate
\begin{equation}
  {\;
  P \;\le\; 1+\frac{C(r)^{2}}{2\,a\,K\,\log N}\;e^{N\varepsilon(r)},
  \qquad
  P_c(r)\;\sim\;e^{N\varepsilon(r)}.
  \;}
  \label{eq:Pmax_basin}
\end{equation}
The storage capacity therefore remains exponential in $N$ for every corruption level such that $\varepsilon(r)>0$, i.e.
\begin{equation}
  \Sigma(r)<\frac{\rho}{2}\simeq 0.3464
  \quad\Longleftrightarrow\quad
  r>r_c\simeq 0.4032,
  \qquad
  d<d_c=\frac{1-r_c}{2}\simeq 0.2984,
  \label{eq:rc_basin}
\end{equation}
the threshold being obtained by solving $\Sigma(r_c)=\rho/2$ numerically along \eqref{eq:saddle_basin}--\eqref{eq:Sigma_r}. Table~\ref{tab:basin_saddle} collects the saddle data. As in the linear model, enlarging the basins (decreasing $r$) lowers the storage exponent without ever destroying its exponential character above threshold; at $r=1$ one recovers $\varepsilon(1)=\rho$ and the fixed-point results of the previous subsections. For comparison, the same (annealed) criterion applied to the linear-exponent model gives $r_c^{\mathrm{lin}}\simeq 0.3374$ \cite{ALBANESE2026131223}: the squared-overlap model trades slightly narrower basins of attraction for its larger storage rate $\rho_{\mathrm{sq}}>\rho_{\mathrm{lin}}$, in line with the intuition that a sharper energy landscape stores more memories at the price of robustness. Note also that the threshold \eqref{eq:rc_basin} coincides with the $L=3$ value of the hetero-associative analysis of \S\ref{sec:basins} (Table~\ref{tab:basins}): the squared-overlap saddle \eqref{eq:saddle_basin} is the $L=3$ specialisation of the tilted saddle \eqref{eq:saddle-r-app}, and signal and noise rates both rescale by the same factor $L=3$, leaving the threshold $\Sigma(r_c)=\rho/2$ invariant.

\begin{table}[H]
\centering
\begin{tabular}{c|cccc}
\toprule
$r$ & $m_s(r)$ & $\Sigma(r)$ & $\varepsilon(r)=\rho-2\Sigma(r)$ & $C(r)$ \\
\midrule
0.0    & 0.9575 & 0.6735 & $-0.6541$ & 3.636 \\
0.2    & 0.9732 & 0.4980 & $-0.3033$ & 3.628 \\
0.4032 & 0.9836 & 0.3464 & $0$       & 3.625 \\
0.5    & 0.9872 & 0.2814 & $+0.1299$ & 3.625 \\
0.7    & 0.9934 & 0.1592 & $+0.3743$ & 3.626 \\
0.9    & 0.9981 & 0.0503 & $+0.5922$ & 3.626 \\
1.0    & 1      & 0      & $+0.6928$ & --    \\
\bottomrule
\end{tabular}
\caption{Tilted saddle data for the corrupted-input analysis: saddle point $m_s(r)$, signal rate $\Sigma(r)$, storage exponent $\varepsilon(r)$ and leading-Laplace prefactor $C(r)$. The capacity exponent vanishes at $r_c\simeq 0.4032$.}
\label{tab:basin_saddle}
\end{table}

\begin{remark}[Annealed versus typical signal]
\label{rem:annealed_typical}
The estimate \eqref{eq:Pmax_basin} is built on the moments of $X_i$, i.e.\ on the annealed signal $\mu_1(r)=\E[X_i^{(1)}]$, exactly as in the linear-exponent analysis of \cite{ALBANESE2026131223}. For $r<1$, however, this average is dominated by exponentially rare realisations of the corruption masks, those with $\hat m_1\simeq m_s(r)>r$. The typical realisation has instead $\hat m_1=r+\zeta/\sqrt{N}$ with $\zeta\sim\mathcal{N}(0,1-r^{2})$, so that
\begin{equation}
  N(\hat m_1^{2}-1)\;=\;-N(1-r^{2})+2r\zeta\sqrt{N}+\zeta^{2},
  \qquad
  \frac{1}{N}\log X_i^{(1)}
  \;\xrightarrow[N\to\infty]{\mathrm{a.s.}}\;
  -(1-r^{2});
\end{equation}
the $\mathcal{O}(\sqrt{N})$ Gaussian term moreover forbids the assignment of any deterministic $\mathcal{O}(1)$ prefactor to the typical signal. Since $m_s(r)>r$ for every $r<1$, comparing the supremum in \eqref{eq:mu1_r_sq} with the value of the variational functional at $m=r$ gives $\Sigma(r)<1-r^{2}$ strictly: the annealed signal overestimates the typical one at the exponential scale. Replacing the annealed rate $\Sigma(r)$ by the typical rate $1-r^{2}$ in the comparison with the noise yields the more conservative estimates
\begin{equation}
  P_c^{\mathrm{typ}}(r)\;\sim\;e^{N[\rho-2(1-r^{2})]},
  \qquad
  r_c^{\mathrm{typ}}=\sqrt{1-\rho/2}\;\simeq\;0.8085,
  \qquad
  d_c^{\mathrm{typ}}\simeq 0.0958 .
  \label{eq:rc_typ}
\end{equation}
The same dichotomy is present, though not discussed, in the linear-exponent model: the factorised $\mu_1(r)$ of \cite{ALBANESE2026131223} is also an annealed average, with rate $\log(\cosh 1+r\sinh 1)-1$ strictly above the typical rate $-(1-r)$ for $r<1$; the typical criterion would give there $r_c^{\mathrm{typ,lin}}=1-\rho_{\mathrm{lin}}/2\simeq 0.6625$. Which of the two thresholds, \eqref{eq:rc_basin} or \eqref{eq:rc_typ}, is realised by the finite-$N$ dynamics is best settled by MCMC simulations; the two estimates coincide at $r=1$, where the basin analysis reduces to the fixed-point stability analysis above.
\end{remark}

\begin{remark}[Finite-$N$ prefactors]
\label{rem:prefactors}
Throughout this appendix the prefactors are evaluated at leading Laplace order, treating the cavity magnetisation as the empirical mean of $N$ (rather than $N-1$) variables. Restoring the $j\neq i$ exclusion and, for the noise, the contribution of both symmetric saddles $\pm m^{*}$ of the even integrand \eqref{eq:noise_sq_app}, the corrected prefactors read
\begin{equation}
  \widehat C(r)\;=\;C(r)\,e^{\Sigma(r)-1-m_s^{2}},
  \qquad
  \widehat K\;=\;2\,K\,e^{\rho-2(1+m^{*2})}\;\approx\;0.966 .
  \label{eq:hat_prefactors}
\end{equation}
We verified both expressions against the exact binomial enumeration of the cavity magnetisation: at $N=1600$ the enumerated per-pattern noise variance equals $0.9646\,e^{-N\rho}$ and the enumerated signal at $r=0.5$ equals $0.6663\,e^{-N\Sigma(0.5)}$, to be compared with $\widehat K=0.9659$ and $\widehat C(0.5)=0.6668$, with residual $\mathcal{O}(N^{-1})$ corrections. Consistently, $\widehat C(r)\to e^{-2}\sinh(2)$ as $r\to 1$, matching the exact perfect-recall signal \eqref{eq:mu1_sq}. These $\mathcal{O}(1)$ factors affect only the prefactors of the storage estimates \eqref{eq:Pmax_sq} and \eqref{eq:Pmax_basin}, never the rates $\rho$, $\Sigma(r)$ nor the thresholds \eqref{eq:rc_basin}--\eqref{eq:rc_typ}; they are however essential for quantitative comparisons at moderate $N$, such as the finite-$N$ storage curves of Figure~\ref{fig:comparison}.
\end{remark}


\section{The price of exponential capacity}
\label{app:cost}

Section~\ref{sec:dynamics} showed that one parallel sweep of
Algorithm~\ref{algo:hetero} costs $\Theta(NLP)$ in both time and memory: to
leading order, a single read of the $L$ stored datasets. That cost is linear in
the number of stored patterns $P$, and would be unremarkable were $P$ moderate.
It is not. This appendix spells out what the exponential capacity
$P_c\sim e^{N\rho_L}$ of Section~\ref{sec:storage} implies for anyone who would
try to use it, and in what precise sense the capacity theorem is a
statement about a limit rather than about a machine.

\paragraph{The cost inherits the storage rate}
Fix a load fraction $\gamma=P/P_c$ and let the network run at it. The per-sweep
cost is then
\begin{equation}
  \Theta(NLP)\;=\;\Theta\!\bigl(\gamma\,NL\,e^{N\rho_L}\bigr),
  \label{eq:cost-exp}
\end{equation}
exponential in $N$ with exactly the rate $\rho_L$ that measures the capacity. The
two are the same number wearing two hats: every bit of storage the network gains
by increasing $N$ (or, through $\rho_L\sim L\log2$, by adding layers) is paid for,
one-to-one in the exponent, by the cost of writing the data down and sweeping it
once. Width is in this sense not a free lunch: the same factor $e^{N\rho_L}$ that
buys exponentially more memories as $L$ grows multiplies the cost of touching them
by the identical amount.

\paragraph{A concrete instance}
Take the $N=64$, $L=2$ network already flagged in Section~\ref{sec:storage} as
out of computational reach, and follow the arithmetic. With $\rho_2=1.3470$
(Table~\ref{tab:saddle}) the capacity is
\begin{equation}
  P_c\;\approx\;e^{N\rho_2}\;=\;e^{86.2}\;\approx\;2.8\times10^{37}
\end{equation}
patterns. Merely storing the two datasets, at one byte per spin, already
requires
\begin{equation}
  NLP_c\;\approx\;64\cdot2\cdot2.8\times10^{37}\;\approx\;3.5\times10^{39}\ \text{bytes},
\end{equation}
some $6\times10^{15}$ times Avogadro's number of bytes, or $3.5\times10^{15}$
yottabytes: several SI prefixes past any storage medium that exists or
plausibly ever will. Running the modest $n_{\mathrm{steps}}=5$ sweeps used
throughout this paper (\ref{app:repro}) at that load would take
$\Theta(n_{\mathrm{steps}}NLP_c)\approx1.8\times10^{40}$ elementary operations;
a hypothetical exaFLOP/s machine ($10^{18}$ operations per second, a rate no
single computer has yet sustained) would grind at it for
$\approx1.8\times10^{22}$ seconds, about $4\times10^{4}$ times the current age of
the universe ($\approx4.4\times10^{17}$~s). And $N=64$ is small: a network wide
enough to be biologically interesting is exponentially further out of reach still.

\paragraph{Why the experiments live at $N\sim10$}
By contrast, every experiment reported here keeps $N\le128$ with $P$ several
orders of magnitude below its own $P_c$ (Tables~\ref{tab:repro-hmm},
\ref{tab:repro-vdjdb}); at those sizes $NLP$ never exceeds $\sim10^{6}$ per
sweep: seconds of work on a laptop. This is not merely a convenient choice.
The storage transition sits at $P\sim e^{N\rho_L}$, so it enters an observable
window {only} at small $N$: widen the network and the transition marches off
to loads no simulation can reach, leaving the memory in perfect-recall for every
$P$ one can actually store. Small $N$ is where the physics is visible at all.

\paragraph{A statement about a limit, not a machine}
The tension is worth stating plainly, because it is generic to the whole
exponential family: Demircigil et al.'s binary model, Ramsauer et al.'s modern
Hopfield network, the present hetero-associative construction all share it. ``The
network stores $e^{N\rho_L}$ patterns'' is an exact statement about the fixed
points of the $N\to\infty$ theory: for every finite $N$ the recalled archetype is
provably stable up to that many patterns. It is not, and structurally cannot be,
a statement about any device that simultaneously holds that many patterns in
memory, because no such device fits in the universe once $N$ is more than a few
dozen. Exponential capacity is therefore best read as a statement about the
{shape} of the energy landscape, how many well-separated minima the
construction admits in principle, and not as a promise of a usable database.
What the finite-$N$ experiments certify is the complementary, and physically
operative, half: at the sizes a machine can occupy, the landscape has exactly the
minima the theory predicts, with the basins the theory predicts, and the memory
behaves accordingly.

\section{Hidden Manifold Model: construction and protocols}
\label{app:hmm_protocol}

This appendix specifies the generator of Section~\ref{sec:hmm}
and the four retrieval protocols run on it. The construction is the Hidden Manifold Model of Goldt et
al.~\cite{Goldt2020} in its multilayer hetero-associative form; the
generalisation analysis follows the random-features tradition of Gerace et
al.~\cite{Gerace2020}. Every reported quantity is a mean $\pm$ one standard
deviation over $N_{\mathrm{seeds}}$ independent dataset re-draws, each per-seed
value being itself an average over $N_{\mathrm{eval}}$ retrieval trials.

\paragraph{The two indices $P$ and $K$, and why both are needed}
The model stores a genuinely many-to-one map, and this forces two independent
counts that the notation keeps deliberately separate.
\begin{itemize}
\item $P$ defines the \emph{load}: the number of pattern indices $\mu=1,\dots,P$
actually stored, i.e.\ the number of cues the network holds, exactly as in
the Rademacher theory of Sections~\ref{sec:model}--\ref{sec:storage}. Each index
$\mu$ owns its own latent code $z^{\mu}$ and hence its own set of $L$ layer
patterns $\{\xi^{\mu,a}\}_{a=1}^{L}$.
\item $K$ is the number of distinct \emph{target prototypes} (equivalently,
target regions) available in the last layer. It is a property of the rule,
not of the sample: $K=2^{n_{\mathrm{bits}}}$ is fixed before any latent is drawn,
where $n_{\mathrm{bits}}$ is the number of latent coordinates read by the target
rule below.
\end{itemize}
The map is surjective precisely because $P\gg K$: many of the $P$ stored cues
share one of the $K$ targets. The biological reading is immediate: $P$ counts
receptors, $K$ counts epitopes, and $P/K$ is the mean number of receptors
converging on one epitope, the surjective compression of
Figure~\ref{fig:hmm-capacity}(c). Collapsing $P$ and $K$ into a single index would
forbid the very many-to-one structure the model exists to study. A third count,
$n_{\mathrm{seen}}\le K$, enters the generalisation protocol: the number of target
regions actually populated during training.

\paragraph{Latent code and cue layers}
Fix $L$ layers, $N$ neurons per layer, a latent (manifold) dimension $D$ with
aspect ratio $\alpha_D=D/N\in(0,1]$ (always $D\le N$), and $K=2^{n_{\mathrm{bits}}}$
target regions. Each index $\mu=1,\dots,P$ owns a latent code
\begin{equation}
  z^{\mu}\;\sim\;\mathcal{N}(0,I_D)\quad\text{i.i.d.\ across }\mu,\qquad
  z^{\mu}\in\mathbb{R}^{D}.
  \label{eq:hmm-latent}
\end{equation}
Each cue layer $a\in\{1,\dots,L-1\}$ carries a fixed random feature matrix
$F^{a}\in\mathbb{R}^{N\times D}$ with i.i.d.\ $\mathcal{N}(0,1)$ entries, drawn
once per dataset and shared by all $P$ indices, and produces the binary pattern
\begin{equation}
  \xi^{\mu,a}\;=\;\sign\!\bigl(F^{a}z^{\mu}/\sqrt D\bigr)\;\in\;\{-1,+1\}^{N},
  \qquad a=1,\dots,L-1.
  \label{eq:hmm-cue}
\end{equation}
The $1/\sqrt D$ normalisation makes each pre-activation coordinate
$(F^{a}z^{\mu})_i/\sqrt D$ unit-variance, matching the standard HMM form
$\xi=\varphi(Fz/\sqrt D)$ with $\varphi=\sign$. Two cue layers $a\neq b$ at the
same index $\mu$ are correlated only through the shared latent $z^{\mu}$
(their feature matrices $F^{a},F^{b}$ are independent); this shared cause is
exactly the inter-layer correlation that the idealised theory of
Section~\ref{sec:model} forbids and that makes hetero-association possible.
Distinct indices are independent given the $\{F^{a}\}$.

\paragraph{The surjective target and the region map}
The last layer $a=L$ carries the target. Two variants are used.

\emph{Surjective target} (default; the many-to-one rule the model is meant to
store). Partition latent space into $K$ regions by the sign pattern of the first
$n_{\mathrm{bits}}$ latent coordinates,
\begin{equation}
  k(z)\;=\;\sum_{j=0}^{n_{\mathrm{bits}}-1}2^{j}\,\mathbf{1}[z_j>0]
         \;\in\;\{0,1,\dots,K-1\},\qquad K=2^{n_{\mathrm{bits}}},
  \label{eq:region-map}
\end{equation}
so that $k(z)$ reads the first $n_{\mathrm{bits}}$ signs of $z$ as a binary
integer ($\mathbf{1}[\cdot]$ is the indicator). Fix $K$ prototypes
$\rho_0,\dots,\rho_{K-1}$, each an independent Rademacher vector
$\rho_k\in\{-1,+1\}^{N}$ drawn once, and set the target of index $\mu$ to
\begin{equation}
  \xi^{\mu,L}\;=\;\rho_{k(z^{\mu})}.
  \label{eq:surjective}
\end{equation}
Every index whose latent falls in the same region (i.e.\ shares the first
$n_{\mathrm{bits}}$ signs) is mapped to the same stored target: on average
$P/K$ cues per prototype, with a per-region multiplicity that is
$\mathrm{Binomial}(P,1/K)$. The rule discards the continuous magnitudes and every
coordinate $j\ge n_{\mathrm{bits}}$, so within a region the cue $\xi^{\mu,a}$
still varies continuously with $z^{\mu}$ while the target is pinned to $\rho_{k}$,  the geometric content of ``many distinct cues, one target''.

\emph{Symmetric target} (control). The last layer is instead given its own
independent feature matrix and no repetition,
$\xi^{\mu,L}=\sign(F^{L}z^{\mu}/\sqrt D)$, so each index owns a distinct target.
This is the ensemble used whenever we compare directly against the i.i.d.\
closed-form theory of Sections~\ref{sec:storage}--\ref{sec:basins}, whose
assumption of $P$ essentially-orthogonal targets a fixed $K\ll P$ would badly
violate.

\paragraph{The manifold signature: pairwise overlap and the arcsine law}
The order parameter that certifies the manifold is the \emph{pairwise pattern
overlap}, defined for two indices $\mu\neq\nu$ in a cue layer $a$ exactly as the
Mattis overlap of two stored patterns,
\begin{equation}
  m_{\mu\nu}\;:=\;\frac1N\sum_{i=1}^{N}\xi_i^{\mu,a}\,\xi_i^{\nu,a}\;\in\;[-1,1],
  \label{eq:mmunu}
\end{equation}
i.e.\ the normalised inner product (cosine on the hypercube) between the two
binary codes: $m_{\mu\nu}=1$ for identical codes, $0$ for orthogonal ones. This
is the quantity plotted in Figure~\ref{fig:hmm-manifold}(a), and the reference
$m_{\mu\nu}$ used throughout Section~\ref{sec:hmm}; we now derive its law.

Write $\xi_i^{\mu,a}=\sign(u_i^{\mu})$ with
$u_i^{\mu}=(F^{a}z^{\mu})_i/\sqrt D=\langle f_i,z^{\mu}\rangle/\sqrt D$, where
$f_i\in\mathbb{R}^{D}$ is the $i$-th row of $F^{a}$, i.i.d.\ $\mathcal N(0,I_D)$.
For fixed $z^{\mu},z^{\nu}$ and over the randomness of $f_i$, the pair
$(u_i^{\mu},u_i^{\nu})$ is jointly centred Gaussian with
\begin{equation}
  \mathrm{Var}(u_i^{\mu})=\frac{\|z^{\mu}\|^{2}}{D},\qquad
  \mathrm{Cov}(u_i^{\mu},u_i^{\nu})=\frac{\langle z^{\mu},z^{\nu}\rangle}{D},\qquad
  \mathrm{corr}(u_i^{\mu},u_i^{\nu})=\frac{\langle z^{\mu},z^{\nu}\rangle}
       {\|z^{\mu}\|\,\|z^{\nu}\|}=:\rho_z.
\end{equation}
By Grothendieck's identity for the sign of correlated Gaussians (equivalently the
orthant / arcsine formula), $\E[\sign(X)\sign(Y)]=\tfrac{2}{\pi}\arcsin\rho$ for
standard jointly Gaussian $(X,Y)$ of correlation $\rho$; hence
$\E[\xi_i^{\mu,a}\xi_i^{\nu,a}\mid z^{\mu},z^{\nu}]=\tfrac2\pi\arcsin\rho_z$ for
every $i$. Averaging over the $N$ i.i.d.\ rows and using concentration,
\begin{equation}
  \E[m_{\mu\nu}]\;=\;\frac{2}{\pi}\arcsin(\rho_z),\qquad
  \rho_z=\frac{\langle z^{\mu},z^{\nu}\rangle}{\|z^{\mu}\|\,\|z^{\nu}\|},
  \label{eq:arcsine}
\end{equation}
an exact, parameter-free curve: the ``signature'' verified in
Figure~\ref{fig:hmm-manifold}(a). Fluctuations of $m_{\mu\nu}$ about
\eqref{eq:arcsine} at fixed $\rho_z$ are $\mathcal{O}(N^{-1/2})$.

\emph{The cost of curvature.} For independent latents $\rho_z$ is itself random:
$\langle z^{\mu},z^{\nu}\rangle\sim\mathcal N(0,D)$ and $\|z\|^{2}\approx D$, so
$\rho_z$ has zero mean and standard deviation $D^{-1/2}$. Linearising
\eqref{eq:arcsine} for small $\rho_z$, $m_{\mu\nu}\approx\tfrac2\pi\rho_z$,
\begin{equation}
  \E[m_{\mu\nu}]\approx0,\qquad
  \mathrm{Var}(m_{\mu\nu})\;\approx\;\Bigl(\tfrac{2}{\pi}\Bigr)^{2}\frac1D,
  \label{eq:curv-cost}
\end{equation}
to be compared with the i.i.d.\ Rademacher value $\mathrm{Var}(m_{\mu\nu})=1/N$. The
manifold patterns are therefore more correlated than independent ones
whenever
\begin{equation}
  \Bigl(\tfrac2\pi\Bigr)^{2}\frac1D>\frac1N
  \quad\Longleftrightarrow\quad
  D<\Bigl(\tfrac2\pi\Bigr)^{2}N\approx0.405\,N,
\end{equation}
i.e.\ once the manifold is small enough. This excess correlation is precisely the
extra noise the clean theory does not carry: it adds to the $e^{-N\rho_L}$ floor
and lowers capacity as $\alpha_D=D/N$ falls, the mechanism behind the capacity
drop in Figure~\ref{fig:hmm-capacity}(a). At $\alpha_D=1$ the excess
$(2/\pi)^{2}-1<0$ vanishes with room to spare: the sign-map ensemble is then
less correlated than Rademacher and sits at the near-i.i.d.\ edge used as a
baseline.

\paragraph{The collision subtlety}
The map $z\mapsto\sign(F^{a}z)$ is piecewise constant: it depends on $z$ only
through which side of each hyperplane $\{f_i\cdot z=0\}$ the latent falls. The $N$
central hyperplanes cut $\mathbb{R}^{D}$ into
\begin{equation}
  C(N,D)\;=\;2\sum_{k=0}^{D-1}\binom{N-1}{k}\;=\;\mathcal{O}\!\bigl(N^{D-1}\bigr)
  \label{eq:cover}
\end{equation}
regions (Cover's function-counting theorem for a central hyperplane
arrangement), so at most $C(N,D)$ distinct sign codes exist in layer $a$.
When $D$ is small this is far fewer than the $P$ latents drawn, and many latents
collide onto identical codes. Two consequences must be handled honestly.
(i) A pattern stored in $c$ identical copies is reinforced as if it carried weight
$c$, so a naive one-step-stability score is spuriously high at small $D$,
even as the number of distinct memories, the real capacity,
collapses. We therefore always report the distinct-pattern fraction alongside
stability (Figure~\ref{fig:hmm-manifold}(c)) and read the honest ``capacity falls
with $D$'' statement from the matched-load transition of
Figure~\ref{fig:hmm-capacity}(a), never from raw stability. (ii) Collisions
inflate the measured mean overlap above the arcsine prediction at small $D$
(identical codes contribute $m_{\mu\nu}=1$), the upward bias visible in
Figure~\ref{fig:hmm-generalization}(c).

\paragraph{Held-out regions and the novel-region control}
The generalisation experiment tests whether the network routes a
never-stored cue to the correct target, and its whole logic rests on a
construction we now spell out. Of the $K$ regions, a subset of size
$n_{\mathrm{seen}}<K$ is declared \texttt{allowed} (``seen''); the remaining
$K-n_{\mathrm{seen}}$ are held out. During training the $P$ latents are
drawn by rejection sampling: any $z^{\mu}$ with $k(z^{\mu})$ held-out is discarded
and redrawn, so every stored cue maps to a seen region and the held-out
prototypes $\{\rho_k:k\text{ held out}\}$, though they exist as vectors, are
backed by no stored cue. Four rates are measured on the same trained
network:
\begin{itemize}
\item \emph{Memorisation}: fraction of the $P$ stored cues that recall
their own target under the clamped-cue dynamics.
\item \emph{Generalisation}: fraction of fresh latents drawn from
seen regions whose cue recalls that region's (correctly stored) prototype.
\item \emph{Novel-region control}: the same for fresh latents drawn from
held-out regions, whose prototype was never stored.
\item \emph{Chance}: $1/n_{\mathrm{seen}}$, the rate of guessing the target
uniformly among the seen prototypes ($=1/6\approx0.167$ for $K=8$, two held out).
\end{itemize}
The control is what makes the test conclusive. A held-out prototype is not stored,
so no property of the trained network can point a fresh cue at it except by
artefact (a coincidental collision in the encoder, a leak in the split, a bug). If
novel-region recall exceeded chance the ``generalisation'' signal would be
suspect; that it stays pinned at chance (Figure~\ref{fig:hmm-generalization}(a,b))
while seen-region generalisation sits significantly above it certifies that the
latter is genuine exploitation of manifold geometry, not an encoding artefact.
\emph{Why it works at all}: a fresh cue from a seen region shares the first
$n_{\mathrm{bits}}$ latent signs, hence a large block of sign structure, with the
stored cues of that region; those stored cues carve an energy basin around the
shared prototype $\rho_k$, and a fresh cue landing inside it flows to $\rho_k$.
Denser sampling (larger coverage $P/n_{\mathrm{seen}}$) tiles more of each
region's cue-manifold with stored cues, so a larger fraction of fresh cues fall
into the right basin, which is why generalisation climbs with coverage toward
memorisation (the ``manifold-as-attractor'' limit) in
Figure~\ref{fig:hmm-generalization}(b), while never certifying the network as a
classifier of truly novel regions.

\paragraph{Retrieval protocols and the reduced load}
Three protocols share one dynamics engine. \emph{Auto-associative capacity}
(Figure~\ref{fig:hmm-capacity}(a)) uses one-step stability at $r=1$: all layers
initialised at a stored pattern, unclamped dynamics. \emph{Hetero-associative
recall} (Figure~\ref{fig:hmm-capacity}(b)) clamps the $L-1$ cue layers at their
stored values and cleans the target layer from a cue-driven seed. \emph{Basins}
(Figure~\ref{fig:basins}(a)) corrupt all layers to a common initial overlap $r$.
It is convenient to summarise ``how close to capacity'' by the reduced load of
Eq.~\eqref{eq:gamma-main},
\begin{equation}
  \gamma(P,N,L)\;=\;\frac{(P-1)\,K_L\,e^{-N\rho_L}}{\mu_1(L)^{2}},
  \qquad \mu_1(L)=e^{-(L-1)}\sinh(L-1),
  \label{eq:gamma}
\end{equation}
with $K_L,\rho_L$ the noise prefactor and rate of Section~\ref{sec:stability} and
$\mu_1(L)$ the clean one-step signal~\eqref{eq:mu1}, for which the asymptotic
one-step overlap collapses onto the universal curve
$m_1^{(1)}=\erf(1/\sqrt{2\gamma})$: $\gamma\ll1$ is deep retrieval,
$\gamma\approx1$ the transition, $\gamma\gg1$ failure. We scan the number of
stored patterns $P$ on a logarithmic grid and read off $\gamma$ as the
corresponding intensive coordinate. The two baselines are the classical i.i.d.\
Rademacher ensemble (\texttt{iid\_full}, the exact closed-form regime of
Sections~\ref{sec:storage}--\ref{sec:basins}) and the $\alpha_D=1$ edge of the
same $\sign(Fz)$ pipeline, so that a curve-to-curve comparison isolates the effect
of manifold dimension alone.

\section{VDJdb: cleaning and encoding}
\label{app:vdjdb_protocol}

This appendix documents how the raw VDJdb release~\cite{Shugay2018,Bagaev2020}
becomes the $\{-1,+1\}^{N}$ patterns of Section~\ref{sec:vdjdb}. Two principles
govern the pipeline: the stored map must be a genuine function, and the encoding
must be a fixed, deterministic map that introduces no learned representation. Both
choices are standard in the sequence-immunology and locality-sensitive-hashing
literatures, cited below.

\paragraph{Cleaning into a single-valued map}
We start from the full \texttt{vdjdb} export ($137{,}484$ records) and apply a
lean biological clean followed by a function filter. The receptor key is
$X=(V_\alpha,J_\alpha,\mathrm{CDR3}_\alpha,V_\beta,J_\beta,\mathrm{CDR3}_\beta)$
and the target is $Y=\text{epitope}$. The steps and surviving counts are in
Table~\ref{tab:funnel}. The decisive step is the last: deduplicating triples is
not enough to make $X\mapsto Y$ single-valued, because a receptor may appear
against several epitopes; we therefore keep only receptors mapped to exactly one
epitope, dropping the $52$ ambiguous ones. The result is $1{,}052$ clean triples
over $220$ epitopes, strongly surjective (up to $146$ receptors per epitope, $120$
singletons): the ``convergent recognition'' the network is meant to store. A
single-valued map is the mathematical prerequisite of a well-posed
generalisation/surjectivity test.

\begin{table}[h]
\centering
\begin{tabular}{l r}
\toprule
stage & records \\
\midrule
raw rows                                & $137{,}484$ \\
\emph{Homo sapiens}                     & $131{,}574$ \\
curation score $\ge 1$                  & $10{,}178$ \\
drop 10x-Genomics demo                  & $9{,}908$ \\
paired \texttt{complex.id}              & $3{,}032$ \\
valid amino-acid sequence + length      & $3{,}031$ \\
paired $\alpha\beta$ synapses           & $1{,}227$ \\
dedup $(X,\text{epitope})$              & $1{,}227$ \\
function filter (drop $52$ ambiguous $X$) & $\mathbf{1{,}052}$ \\
\bottomrule
\end{tabular}
\caption{Cleaning funnel imposing a single-valued receptor$\to$epitope map.}
\label{tab:funnel}
\end{table}

\paragraph{Encoding, overview}
Each amino-acid sequence becomes a pattern $\xi^{\mu,a}\in\{-1,+1\}^{N}$ (here
$N=128$) in two deterministic, learning-free stages: (A) a positional
Atchley-factor feature map turning the sequence into a fixed-length real vector,
and (B) a locality-sensitive (SimHash) binariser turning that vector into a
balanced, near-orthogonal sign code whose Hamming overlap tracks the cosine angle
of the features. One encoder is fitted per layer ($\alpha$, $\beta$,
epitope), because the three modalities differ in length and composition, and it
is fitted on the training split only: test sequences are transformed with
the frozen maps, so no held-out sequence informs the encoder.

\paragraph{Stage A: positional Atchley-factor features}
Atchley factors~\cite{Atchley2005} summarise each amino acid by five numbers
$(z_1,\dots,z_5)$, obtained from a factor analysis of several hundred
physicochemical amino-acid indices and standardised to zero mean and unit
variance across the $20$ residues; the five axes are, in order, polarity /
hydrophobicity ($z_1$), secondary-structure propensity ($z_2$), molecular size /
volume ($z_3$), codon composition / refractivity ($z_4$) and electrostatic charge
($z_5$). Writing $a(s)\in\mathbb{R}^{5}$ for the factor vector of residue $s$, a
sequence $w=(s_1,\dots,s_{\ell})$ of length $\ell$ is mapped to a fixed
$L_{\max}\times5$ array by centre padding: the N-terminal half of $w$ is
written from the left, the C-terminal half from the right, and the
$L_{\max}-\ell$ empty middle positions are set to zero (the mean of the
standardised factors). This keeps the conserved CDR3 anchors (the N-terminal
cysteine, the C-terminal phenylalanine--glycine) at fixed indices and lets the
padding fall in the hypervariable middle, so positional information is preserved
--a charge at position $3$ is a different feature from a charge at position $10$.
Flattening gives the real feature vector
\begin{equation}
  x^{\mu,a}\;=\;\bigl(a(s_1),\dots\bigr)\in\mathbb{R}^{5L_{\max}},
  \qquad L_{\max}=\min\!\bigl(\lceil q_{99}\rceil,\ \ell_{\max}\bigr),
  \label{eq:atchley}
\end{equation}
with $L_{\max}$ the per-layer alignment length (the $99$th percentile of the
training lengths, capped at the observed maximum $\ell_{\max}$). This replaces the
earlier order-destroying bag-of-$k$-mers summary, retained only as the ablation
baseline below.

\paragraph{Stage B: the locality-sensitive (SimHash) binariser}
The feature vector is binarised by standardise $\to$ PCA-whiten $\to$ Gaussian
random projection $\to$ sign~\cite{Charikar2002}. Let $\Pi:\mathbb{R}^{5L_{\max}}
\to\mathbb{R}^{d'}$ be the affine map that standardises each coordinate and
projects onto the top $d'=\min(50,\text{rank})$ whitened PCA directions (fitted on
the training features, so $\Pi$ has unit-covariance output), and let
$W\in\mathbb{R}^{N\times d'}$ be a fixed matrix with i.i.d.\ $\mathcal{N}(0,1)$
entries. Then
\begin{equation}
  \xi^{\mu,a}\;=\;\sign\!\bigl(W\,\Pi(x^{\mu,a})\bigr)\;\in\;\{-1,+1\}^{N},
  \qquad W_{ki}\sim\mathcal{N}(0,1),
  \label{eq:lsh}
\end{equation}
with ties ($=0$) mapped to $+1$. Because $W$ is Gaussian and $\Pi(x)$ has unit
covariance, each bit is an unbiased $\pm1$ sign, and distinct bits (distinct rows
of $W$) are near-independent: the codes are close to the Rademacher ideal the
theory assumes.

\paragraph{The SimHash law connects the encoding to the manifold signature}
For two feature vectors with whitened images $v^{\mu}=\Pi(x^{\mu,a})$,
$v^{\nu}=\Pi(x^{\nu,a})$ and angle $\theta=\arccos c$,
$c=\langle v^{\mu},v^{\nu}\rangle/(\|v^{\mu}\|\,\|v^{\nu}\|)$, a single Gaussian
hyperplane separates them with probability $\theta/\pi$, so each bit agrees with
probability $1-\theta/\pi$ and
\begin{equation}
  \E[m_{\mu\nu}]\;=\;1-\frac{2}{\pi}\arccos(c)\;=\;\frac{2}{\pi}\arcsin(c),
  \label{eq:simhash-law}
\end{equation}
with $m_{\mu\nu}=\frac1N\sum_i\xi_i^{\mu,a}\xi_i^{\nu,a}$ the pattern overlap of
Eq.~\eqref{eq:mmunu}. This is the same Grothendieck identity as the arcsine
law~\eqref{eq:arcsine} of the Hidden Manifold Model, now with the biophysical
cosine $c$ in place of the latent cosine $\rho_z$: biochemically similar receptors
receive proximate codes, and Figure~\ref{fig:vdjdb-data}(d) shows the measured
overlap tracking~\eqref{eq:simhash-law}. Table~\ref{tab:encoding} confirms the
Rademacher quality: per-bit balance $\approx0.03$ on the receptor layers and mean
absolute pattern overlap $\approx0.10$, close to the i.i.d.\ value
$1/\sqrt N\approx0.088$ at $N=128$. The epitope layer, having only $220$ distinct
codes, is measurably less balanced ($0.12$), as expected for a small surjective
target. Alternative encodings ($k$-mer, random) are retained only as ablation
baselines (Figure~\ref{fig:vdjdb-ablation}): the biophysical Atchley features and
$k$-mers carry comparable signal, while a text-blind random projection generalises
at chance, confirming that the retrieval signal is in the features and not in the
hash.

\begin{table}[h]
\centering
\begin{tabular}{l c c c c}
\toprule
layer & $N$ & distinct codes & per-bit balance & overlap $|m_{\mu\nu}|$ \\
\midrule
$\alpha$-CDR3 & $128$ & $1042$ & $0.028$ & $0.104$ \\
$\beta$-CDR3  & $128$ & $1048$ & $0.028$ & $0.103$ \\
epitope       & $128$ & $220$  & $0.123$ & $0.120$ \\
\bottomrule
\end{tabular}
\caption{Near-Rademacher quality of the Atchley$+$SimHash encoding (means over
seeds). ``Per-bit balance'' is the mean absolute per-bit magnetisation (ideal
$0$); the receptor layers are close to Rademacher.}
\label{tab:encoding}
\end{table}

\begin{figure}[t!]
  \centering
  \includegraphics[width=\linewidth]{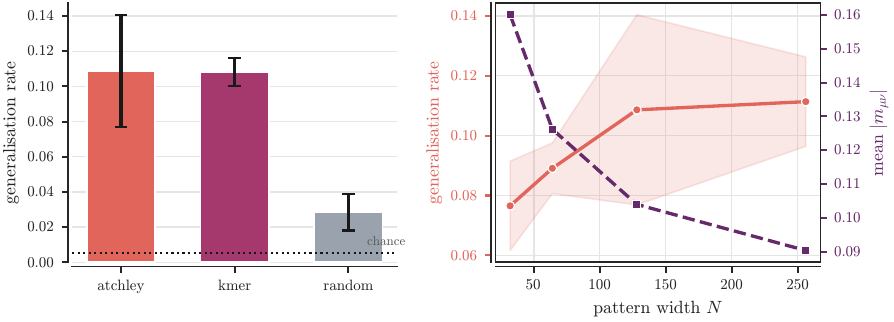}
  \caption{\textbf{Encoding ablation.}
  \textbf{(a)} Generalisation to unseen receptors under the three encodings:
  Atchley factors and $k$-mers carry comparable biological signal (well above
  chance), while a text-blind random projection collapses to chance: the signal
  is in the biophysical features, not the hashing.
  \textbf{(b)} Effect of the layer size $N$ (Atchley): longer codes lower the
  pairwise overlap (right axis) and slowly raise generalisation (left axis).}
  \label{fig:vdjdb-ablation}
\end{figure}

\paragraph{Layer assignment, tasks and reproducibility}
The three layers are $a=1$ ($\alpha$-CDR3), $a=2$ ($\beta$-CDR3), $a=3$ (epitope),
so $L=3$. Hetero-associative tasks clamp the cue layers and recall the target;
generalisation holds out a fraction of the receptors of each epitope and tests
recall on them, against both a chance level $1/n_{\mathrm{epitopes}}$ and a
label-permutation null (epitope labels shuffled before encoding), which is the
control of Figure~\ref{fig:vdjdb-results}(c). All quantities are means $\pm$ one
standard deviation over independent projection seeds and train/test splits; the
dynamics engine is identical to the HMM battery, so the two are directly
comparable.

\paragraph{A caveat on the held-out split}
The receptors held out for generalisation are sampled uniformly at random
within each epitope cluster, after deduplication on the exact
$(V,J,\mathrm{CDR3}_\alpha,V,J,\mathrm{CDR3}_\beta)$ key; the split does not
additionally enforce a minimum sequence distance between held-out and training
receptors of the same epitope. TCR repertoires are known to converge on
near-identical ``public'' sequences for a shared epitope~\cite{GlanvilleDash2017},
so some held-out receptors may sit a few substitutions from a training one,
which would inflate the measured generalisation rate relative to a split
enforcing sequence-level separation. The label-permutation null controls for
structure in the encoding, not for this specific leakage channel; a
similarity-aware split is the natural follow-up and is not expected to remove
the signal (the two chains still recall the epitope from first principles,
Figure~\ref{fig:vdjdb-results}(a)) but could lower its measured size.

\section{The CLINC150 protocol}
\label{app:clinc}

This appendix gives the construction behind Section~\ref{sec:clinc}: the
cleaning, the encoder, the retrieval protocols and the ablations. It is the
natural-language counterpart of \ref{app:vdjdb_protocol}, and deliberately
shares with it every step that can be shared, so that the two batteries differ
only where the data force them to. Every reported number is a mean $\pm$ one
standard deviation over $N_{\mathrm{seeds}}=6$ dataset re-draws (projection seed
plus train/test split), each itself an average over evaluation trials, exactly as
in \ref{app:repro}.

\paragraph{Data and the function filter}
CLINC150~\cite{Larson2019} is a benchmark of short user utterances labelled by
one of $150$ intents (``set an alarm'', ``what is my balance''), balanced at
about $150$ utterances per intent, and shipped with a set of out-of-scope (OOS)
utterances belonging to none of them. We pool the in-domain splits, normalise
and deduplicate $(\text{utterance},\text{intent})$ pairs, and apply the same
function filter as for VDJdb (keep only utterances mapped to a single intent,
as Remark~\ref{rem:surjective} requires) which here removes just $4$
ambiguous utterances, leaving $22{,}491$ clean records over $150$ intents.
The mean compression is $P/K\approx150$ cues per target, against $\approx5$ for
VDJdb: the same surjective structure, sampled thirty times more densely. The OOS
utterances are held aside as a novelty control.

\paragraph{Encoding}
The two layers are $a=1$ (utterance) and $a=2$ (intent). Utterances become
$\{-1,+1\}^{N}$ patterns through a fixed, deterministic pipeline with no learned
representation: a TF--IDF vector over word $(1,2)$-grams concatenated with
character $(3,5)$-grams (lexical and short-phrase content on one side,
morphology and sub-word cues on the other, the linguistic analogue of the
positional Atchley map) reduced by a PCA whitening to at most $50$ components
fitted on the training split, then passed through the same SimHash sign
map~\cite{Charikar2002} used in \ref{app:vdjdb_protocol}. The intent layer
stores one code per intent {name}, so that every phrasing of an intent
shares a single target and semantically close intents receive close codes. As in
the manifold and receptor cases the pattern overlap obeys the arcsine law
$\E[m_{\mu\nu}]=1-\tfrac2\pi\arccos c$ in the feature cosine $c$
(Figure~\ref{fig:clinc}(b)), with per-bit imbalance $0.029\pm0.002$ and mean
absolute overlap $0.097\pm0.002$ on the utterance layer at $N=128$, against the
i.i.d.\ value $1/\sqrt N=0.088$. Choosing $L=2$ makes the closed forms of
Sections~\ref{sec:storage}--\ref{sec:basins} directly usable, with
$\rho_2=1.3470$, $x^\ast=1.9150$, $m^\ast=0.9575$, annealed $r_c=0.3195$ and
typical $r_c=0.5714$.

\paragraph{Retrieval protocols}
Four protocols are run, all with the engine of Algorithm~\ref{algo:hetero}
unchanged from the other two batteries. \emph{Capacity}: a logarithmic grid
$P\in[30,6000]$, one-step overlap of a stored pattern against~\eqref{eq:m1step},
real and i.i.d.\ ensembles at matched $P$. \emph{Basins}: at $P=2000$, a sweep of
the cue overlap $r$ on a $14$-point grid, against~\eqref{eq:m1step-r}.
\emph{Tasks}: at $P=3000$, the forward direction utterance$\to$intent and the
reverse intent$\to$utterance, plus the cue-content scan in which only the leading
fraction $f\in\{0.15,\dots,1\}$ of the utterance's tokens is revealed and
re-encoded with the frozen transform. \emph{Generalisation}: a quarter of the
utterances of each intent held out, then memorisation (stored cues),
generalisation (held-out cues of seen intents), the OOS control and chance
$1/150$, together with top-$k$ retrieval and a label-permutation null over $20$
permutations.

\paragraph{Results, and the comparison with VDJdb}
Table~\ref{tab:clinc} collects the outcome beside the receptor battery. The
memory side coincides in the two datasets and with the i.i.d.\ theory; the
classifier side differs by an order of magnitude. Top-$k$ retrieval of the
correct intent for a fresh utterance rises from $0.527\pm0.009$ at $k=1$ to
$0.600$, $0.629$ and $0.649$ at $k=3,5,10$: the correct intent is usually
either first or not in the shortlist at all, which is the signature of a basin
that either contains the fresh cue or does not.

Two things must be kept apart here. The OOS entry of Table~\ref{tab:clinc} is a
\emph{null control}: an out-of-scope utterance, whose intent is by construction
absent from the codebook, is matched to its (unstored) target at rate $0.014$,
i.e.\ at chance, which is what certifies that the generalisation signal is not
an artefact of the encoding. It is not a rejection capability, and the
distinction matters because the network does not have one: using the retrieval
confidence as an open-set score separates in-domain from out-of-scope utterances
with an AUROC of $0.513$, that is, not at all. An exponential associative memory
recognises what it has stored; it has no built-in notion of ``none of the
above'', and equipping it with one is beyond the present scope.

\begin{table}[H]
\centering
\small
\begin{tabular}{lcc}
\toprule
Quantity & CLINC150 (language) & VDJdb (receptors) \\
\midrule
memorisation                 & $0.72$ & $0.997$ \\
generalisation (unseen cue)  & $0.58$ & $0.11$  \\
novelty / null control       & $0.014$ & $0.048$ \\
chance                       & $0.0067$ & $0.005$ \\
top-$1$ retrieval            & $0.53$ & $0.09$  \\
top-$10$ retrieval           & $0.65$ & --      \\
$(\text{cue})\to\text{target}$ recall & $0.78$ & $0.9994$ \\
mean cues per target $P/K$   & $\approx150$ & $\approx5$ \\
\bottomrule
\end{tabular}
\caption{The two real-data batteries side by side, at $L=2$ and $L=3$
respectively. The memory side (recall, capacity, basins) matches the i.i.d.\
theory in both. The classifier side differs by an order of magnitude: language
generalises far better than receptor sequences.}
\label{tab:clinc}
\end{table}

\paragraph{Encoding ablation: where the generalisation lives}
The storage rule, the dynamics and the closed-form basins are identical across
the two datasets, so the gap in generalisation cannot come from the memory.
Three scans, all at fixed load and fixed protocol, locate it in the encoder
(Table~\ref{tab:clinc-ablation}). Replacing TF--IDF by a text-blind
\emph{random} map, a deterministic hash of the utterance to a fixed Gaussian
vector, so that similar utterances receive unrelated codes, leaves
memorisation at $0.349$ and collapses generalisation onto chance, $0.008$
against $1/150=0.0067$: the network stores as well as ever and has nothing to
interpolate between. A plain bag-of-words map, which discards ordering and
sub-word structure but keeps lexical overlap, is if anything slightly
{better} than TF--IDF ($0.652$ against $0.584$), confirming that what
matters is co-location of same-target cues rather than the sophistication of the
features. Finally, widening the code ($N=32\to256$) or the whitened
representation ($d'=10\to100$) buys generalisation monotonically, in step with
the fall of the mean pattern overlap towards its i.i.d.\ floor $1/\sqrt N$: less
crowded pattern space, wider basins, more of the manifold covered by each stored
cue.

\begin{table}[H]
\centering
\small
\begin{tabular}{llccc}
\toprule
scan & setting & memorisation & generalisation & $\langle|m_{\mu\nu}|\rangle$ \\
\midrule
\multirow{3}{*}{features}
 & TF--IDF        & $0.70$ & $0.58$ & $0.097$ \\
 & bag of words   & $0.75$ & $0.65$ & $0.104$ \\
 & random (blind) & $0.35$ & $0.008$ & $0.101$ \\
\midrule
\multirow{4}{*}{layer size $N$}
 & $32$  & $0.44$ & $0.35$ & $0.158$ \\
 & $64$  & $0.62$ & $0.50$ & $0.123$ \\
 & $128$ & $0.70$ & $0.58$ & $0.097$ \\
 & $256$ & $0.76$ & $0.63$ & $0.085$ \\
\midrule
\multirow{4}{*}{PCA cap $d'$}
 & $10$  & $0.41$ & $0.38$ & $0.190$ \\
 & $25$  & $0.54$ & $0.44$ & $0.122$ \\
 & $50$  & $0.70$ & $0.58$ & $0.097$ \\
 & $100$ & $0.81$ & $0.69$ & $0.085$ \\
\bottomrule
\end{tabular}
\caption{Encoding ablation on CLINC150 ($L=2$, chance $=0.0067$, $6$ seeds;
standard deviations are below $0.035$ throughout). The reference row
(TF--IDF, $N=128$, $d'=50$) differs marginally from Table~\ref{tab:clinc}
because the ablation restricts to intents with at least five utterances. The
text-blind encoder preserves memorisation and destroys generalisation, which is
the sharpest form of the claim that generalisation is a property of the
representation and not of the storage rule.}
\label{tab:clinc-ablation}
\end{table}

\section{Reproducibility: figure parameters}
\label{app:repro}

Every simulated point in the figures is a mean $\pm$ one standard deviation over
$N_{\mathrm{seeds}}$ independent re-draws of the disorder (patterns, feature
maps, train/test splits); each per-seed value is itself an average over
$N_{\mathrm{eval}}$ independent retrieval trials. The dynamics is the
zero-temperature parallel Glauber update of Algorithm~\ref{algo:hetero}, run for a
small fixed number of sweeps ($n_{\mathrm{steps}}=5$). The theory panels
(Figures~\ref{fig:storage}(a,b), \ref{fig:basins}(a,b), \ref{fig:doubling},
\ref{fig:comparison}) are {analytic}: the saddle $x^{\ast}$ and the tilted
saddle $m_s(r)$ are solved numerically from the closed forms of
\ref{app:saddle} and~\ref{app:corrupted}, with no simulation. 
Tables~\ref{tab:repro-hmm} and~\ref{tab:repro-vdjdb}
list the control parameters used for each panel.

\begin{table}[t!]
\centering
\small
\renewcommand{\arraystretch}{1.25}
\begin{tabular}{@{}p{2.3cm} p{3.5cm} p{8.4cm}@{}}
\toprule
Figure (panel) & Protocol & Parameters \\
\midrule
\ref{fig:storage}(a)   & one-step stability, $r=1$ (i.i.d.) &
  $N\in\{8,9,10,11,12\}$, $L=2$, $P$ on a log grid to $\sim\!2\!\times\!10^{7}$ (22 pts), $N_{\mathrm{seeds}}=6$, $N_{\mathrm{eval}}=200$  \\
\ref{fig:basins}(a)    & corrupted-cue recovery &
  $N=10$, $L=2$, $\alpha_D=0.5$ ($d=5$), $P=3093$ ($\gamma=0.1$), $r\in[0,1]$ (11 pts) \\
\ref{fig:hmm-manifold}(a,b) & manifold geometry  &
  $N=10$, $L=2$, $P=800$, $2\!\times\!10^{4}$ pattern pairs, $\alpha_D\in[0.1,1]$ \\
\ref{fig:hmm-manifold}(c)   & dimension scan &
  $N=10$, $L=2$, $P=3000$, $n_{\mathrm{bits}}=2$, $\alpha_D\in[0.1,1]$ \\
\ref{fig:hmm-capacity}(a)   & capacity scan &
  $N=10$, $L=2$, $\alpha_D\in\{0.3,0.6,1.0\}$ + i.i.d., $P\in[4,5\!\times\!10^{4}]$ \\
\ref{fig:hmm-capacity}(b)   & width scan  &
  $N=10$, $L\in\{2,3,4\}$, $\alpha_D=0.5$, $P\in[4,2\!\times\!10^{4}]$ \\
\ref{fig:hmm-capacity}(c)   & surjective compression  &
  $N=12$, $L=2$, $P=4000$, $K=2^{1..6}$, $r\in\{0.6,\dots,1.0\}$ \\
\ref{fig:hmm-generalization}(a,b) & generalisation  &
  $N=12$, $L=2$, $d=6$, $K=8$ ($6$ seen, $2$ held out), $P\in[16,2\!\times\!10^{4}]$, $n_{\mathrm{test}}=300$ \\
\ref{fig:hmm-generalization}(c)   & theory vs empirical  &
  $N=10$, $L=2$, $\gamma=1$ ($P=30920$), $r=0.68$, $\alpha_D\in[0.1,1]$ \\
\bottomrule
\end{tabular}
\caption{Hidden Manifold Model figures. Rates used as theory reference:
$\rho_2=1.347$, $\rho_3=2.078$, $\rho_4=2.773$; $L=2$ thresholds
$r_c^{\mathrm{ann}}=0.3195$, $r_c^{\mathrm{typ}}=0.5714$. All runs use
$N_{\mathrm{seeds}}=6$, $N_{\mathrm{eval}}=200$, $n_{\mathrm{steps}}=5$.}
\label{tab:repro-hmm}
\end{table}

\begin{table}[t!]
\centering
\small
\renewcommand{\arraystretch}{1.25}
\begin{tabular}{@{}p{2.3cm} p{3.5cm} p{8.4cm}@{}}
\toprule
Figure (panel) & Content & Parameters \\
\midrule
\ref{fig:vdjdb-data}(a)    & cleaning funnel &
  \texttt{vdjdb-2025-09-25}: $137{,}484\!\to\!1{,}052$ triples, $220$ epitopes \\
\ref{fig:vdjdb-data}(b)    & epitope cluster sizes &
  max $146$ receptors/epitope, $120$ singletons \\
\ref{fig:vdjdb-data}(c,d)  & encoding quality / SimHash &
  Atchley + PCA($50$) + SimHash, $N=128$; $4000$ sampled pairs (d) \\
\ref{fig:vdjdb-results}(a) & biological tasks  &
  $N=128$, $L=3$, $P=800$, $N_{\mathrm{seeds}}=6$ \\
\ref{fig:vdjdb-results}(b) & basins  &
  $N=128$, $L=3$, $P=600$, $r\in[0,0.82]$ \\
\ref{fig:vdjdb-results}(c) & generalisation vs null  &
  $N=128$, $P=800$, $r=0.68$; $6$ true vs $20$ label-permuted splits \\
\ref{fig:vdjdb-results}(d) & per-epitope / top-$k$  &
  $N=128$, test fraction $0.25$, $\min$ cluster size $\ge2$ \\
\ref{fig:vdjdb-ablation}   & encoding ablation &
  feature $\in\{$Atchley, $k$-mer, random$\}$; $N\in\{32,64,128,256\}$ \\
\bottomrule
\end{tabular}
\caption{VDJdb figures. Layers are $\alpha$-CDR3, $\beta$-CDR3 and epitope
($L=3$); the encoder is Atchley positional features reduced by PCA and binarised
by a SimHash (\ref{app:vdjdb_protocol}). All runs use
$N_{\mathrm{seeds}}=6$, $N_{\mathrm{eval}}=200$.}
\label{tab:repro-vdjdb}
\end{table}

\bibliographystyle{elsarticle-num}
\bibliography{references}

\end{document}